\theoremstyle{plain} 
\newtheorem{proposition}{Proposition}
\newtheorem{theorem}{Theorem}
\newtheorem{lemma}{Lemma}
\newtheorem{corollary}{Corollary}
\newtheorem{assumption}{Assumption}
\title{Volatility Calibration via Automatic Local Regression}
\author{
    {\large Ruozhong Yang$^{\dagger}$, \large Hao Qin$^{\dagger}$,  \href{https://orcid.org/0009-0009-4556-8664}{\includegraphics[scale=0.06]{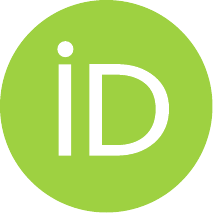}\hspace{1mm}Charlie Che$^{\dagger\ddagger}$},and Liming Feng$^{\dagger}$\thanks{Corresponding author. Email: fenglm@illinois.edu. First authors: Ruozhong Yang and Hao Qin}}\\[1ex]
    $^{\dagger}$Department of Industrial and Enterprise Systems Engineering, University of Illinois Urbana-Champaign, Illinois, United States\\
    $^{\ddagger}$JPMorgan Chase \& Co., New York, United States
}
\date{} 
\begin{document}
\maketitle

\begin{abstract}
Managing exotic derivatives requires accurate mark‑to‑market pricing and stable Greeks for reliable hedging. The Local Volatility (LV) model distinguishes itself from other pricing models by its ability to match observable market prices across all strikes and maturities with high accuracy. However, LV calibration is fundamentally ill-posed: finite market observables must determine a continuously-defined surface with infinite local volatility parameters.  This ill-posed nature often causes spiky LV surfaces that are particularly problematic for finite-difference-based valuation, and induces high-frequency oscillations in solutions, thus leading to unstable Greeks. To address this challenge, we propose a pre‐calibration smoothing method that can be integrated seamlessly into any LV calibration workflow. Our method pre-processes market observables using local regression that automatically minimizes asymptotic conditional mean squared error to generate denoised inputs for subsequent LV calibration. Numerical experiments demonstrate that the proposed pre-calibration smoothing yields significantly smoother LV surfaces and greatly improves Greek stability for exotic options with negligible additional computational cost, while preserving the LV model’s ability to fit market observables with high fidelity.

\end{abstract}
\vspace{1em} 
\noindent\textbf{Keywords:} Smooth Local Volatility, Local Regression Denoising, Automatic Parameter Selection, Bias-Variance Tradeoff, Greeks Stability

\vspace{1em} 

\section{Introduction}

Originally proposed by Dupire \citep{dupire1994pricing}, the Local Volatility (LV) model offers a unique framework designed to match observable market prices with high accuracy across all strikes and maturities. Unlike other parametric models that calibrate a finite set of parameters to market observables, the LV model requires determining a continuous volatility surface $\sigma_{\mathrm{loc}}(K, T)$ as a function of strike $K$ and maturity $T$. This functional calibration problem is inherently ill-posed: finite market observables must uniquely determine an entire surface, yet infinitely many local volatility surfaces can reproduce the same set of option prices. Moreover, the LV model does not by itself remove static arbitrage embedded in the market observables; arbitrage-free constraints must therefore be enforced during calibration. This additional requirement increases implementation complexity, as any such implementation must simultaneously achieve precise fitting and rigorously enforce arbitrage-free conditions. 

The LV model's reliance on numerical partial derivatives of market prices through the Dupire formula creates fundamental implementation challenges. This derivative-based approach makes the model highly sensitive to perturbations in market observables. Even minor fluctuations in observed prices produce large variations in the computed derivatives, leading to numerical instability in the generated local volatility function $\sigma_{\text {loc }}(K, T)$. When calibrating to market data containing bid-ask spreads, noisy quotes, and interpolation errors, these spurious fluctuations become embedded in the volatility surface, resulting in high frequency oscillations and artificial spikes. Such instabilities directly influence option price sensitivities and yield unstable Greeks that compromise hedging reliability. While industry practice rightly prioritizes exact calibration to market observables, the need for a smooth and stable LV surface for reliable hedging is equally important and warrants the same level of emphasis.

Given these challenges, the calibration of the LV model has been a significant research topic for decades. Researchers aim to reformulate the original problem into a well-posed one while ensuring that the generated LV function is arbitrage-free and robust. Numerous implementation methods have been proposed — to name just a few — \citep{achdou2002volatility, benko2007extracting, andreasen2010volatility, lipton2011filling, reghai2012local, buehler2018discrete}. Beyond methods concentrating exclusively on LV calibration mechanics, the literature has pursued two primary strategies to address the ill-posedness and ensure smoothness.

The first strategy utilizes pre-calibration smoothing methods to process market inputs before applying the Dupire inversion. Global quote fitting methods typically assume that the price or implied volatility surface can be represented by a specific global parametric function. By fitting this function, practitioners obtain a continuous representation of the current market, enabling the safe application of the Dupire formula. Examples include parametric models for the implied volatility surface (e.g., \citealp{gatheral2014arbitrage}) and parametric stochastic volatility models (e.g., \citealp{heston1993closed, hagan2002managing}). As an alternative, local quote fitting methods assume that price or implied volatility values at given strikes and maturities are locally determined by surrounding observations. Practitioners typically construct a fine mesh over the strike-maturity domain and solve independent optimization problems for values at each mesh point. The resulting price or implied volatility representation consists of discrete points rather than a continuous function. Popular approaches include local fitting methods for the IV surface (e.g., \citep{benko2007extracting}) and for the price surface (e.g., \citep{glaser2012arbitrage}). This pre-processing approach offers clean separation of concerns by first denoising data, then calibrating. It ensures the Dupire formula receives smooth inputs, naturally leading to smooth LV surfaces through an intuitive and modular two-stage process. However, while global parameterizations can capture broad market structure, they may misrepresent localized features. Local methods adapt to nearby data but require manual tuning of bandwidths and lack clear optimality guarantees.

The second research direction embeds smoothness directly into the calibration process through regularization techniques that penalize high-order derivatives or project the volatility surface onto low-dimensional bases (e.g., \citep{achdou2002volatility}). These methods formulate single optimization problems that simultaneously pursue accuracy and smoothness objectives. They offer problem‑formulation efficiency by eliminating separate smoothing steps and naturally incorporate arbitrage constraints within the optimization framework. However, this approach involves inherent compromises in multi-objective optimization. Smoothness is achieved through strong assumptions about price or LV derivatives, and practical implementations often require parameter tuning to balance competing objectives.

To address these limitations, we propose a novel pre-calibration smoothing method that can be seamlessly integrated with cutting-edge LV calibration techniques. Our approach incurs negligible additional computational cost while preserving the model's ability to fit market observables precisely. Specifically, we denoise the IV surface via local regression that automatically minimizes asymptotic conditional mean squared error. By inverting the Black-Scholes formula on the smoothed volatilities, we generate a smoother sequence of option prices that are sufficiently close to raw market observables. Our work is distinguished by several fundamental advantages. We optimize the preprocessing stage with automatic parameter selection that ensures theoretically optimal smoothing levels. Our two-stage approach avoids the compromises inherent in multi-objective optimization by cleanly separating denoising from calibration. The smoothness is achieved through optimal bias-variance tradeoff in curve fitting, thus eliminating strong assumptions about price or LV derivatives. This framework maintains market structure while removing noise and requires no manual parameter tuning in practice.

The remainder of this paper is organized as follows. Section 2 explains existing approaches and their limitations in detail. Section 3 sets up the problem and presents the proposed two-stage implementation method. Section 4 provides numerical experiments illustrating the effectiveness of our proposed method. Finally, Section 5 concludes. Technical proofs are deferred to the Appendix section.

\section{Existing Methods}
The ill-posed nature of local volatility calibration and the requirement for smooth LV surfaces have motivated decades of research, and various approaches are proposed to address three main issues: arbitrage-freeness, smoothness of LV surface, and precision of fitting. Implementation methods fall into two branches: pre-calibration smoothing approaches that process market data before applying the Dupire inversion, and integrated calibration techniques that embed smoothness constraints directly within the optimization framework.

\subsection{Pre-Calibration Smoothing Methods}
Pre-calibration smoothing methods recognize that clean, smooth inputs to the Dupire formula naturally produce well-behaved local volatility surfaces. These approaches operate in two stages: first processing market data to remove some noise and ensure smoothness, then applying common LV calibration techniques. By separating smoothing from calibration, these methods make each stage single‑purpose: smoothing deals only with denoising and shape control, while calibration focuses on price fit and arbitrage constraints.

\subsubsection{Global Quote Fitting Approaches}

Global quote fitting methods either assume parametric forms for the IV or price surface (e.g., \citealp{gatheral2014arbitrage}), or for the underlying process (e.g., \citealp{heston1993closed, hagan2002managing}). The static arbitrage-free conditions $\frac{\partial C}{\partial t} \geq 0$ and $\frac{\partial^2 C}{\partial K^2} \geq 0$ can then be written as explicit inequality constraints on the parameter vector and enforced within the calibration optimization.

These methods offer analytical tractability and computational efficiency. Calibrated models provide closed-form implied volatilities for rapid pricing and stable Greek calculations. The parametric structure ensures smoothness and simplifies arbitrage constraint enforcement. However, because of the strong parameterization, the resulting fit is not always satisfactory.

\subsubsection{Local Quote Fitting Approaches}

Local quote fitting methods determine values at each point from surrounding market observables. Practitioners construct meshes and solve optimization problems at each point independently. Benko et al. \citep{benko2007extracting} apply local polynomial regression with arbitrage constraints to implied volatility surfaces. Glaser and Heider \citep{glaser2012arbitrage} use shape-preserving interpolation to maintain price surface convexity and time monotonicity. Fengler \citep{fengler2015simple} implements tensor product B-splines with smoothness penalties.

Local methods capture complex market structures effectively. They preserve sharp features in liquid regions and handle uneven strike spacing naturally. The independent estimation at each mesh point enables parallel implementation for efficient pre-processing. However, these methods typically require manual tuning of neighborhood size, polynomial order, and spline penalties, and the resulting surface can be quite sensitive to these choices.

\subsection{Integrated Calibration Methods}
Integrated calibration methods incorporate smoothness directly in the optimization. Regularization approaches exemplify this category. Achdou and Pironneau \citep{achdou2002volatility} minimize pricing errors while controlling smoothness through penalty terms. Tikhonov regularization \citep{geng2014non} penalizes second derivatives; total variation regularization preserves sharp features while limiting overall variation. Relative entropy \citep{cont2004recovering} and maximum entropy methods \citep{bouchouev1999uniqueness} incorporate prior volatility surface information that helps generate the smoothness.

These integrated approaches combine calibration and smoothness objectives in a single optimization framework. The regularization parameters directly control the balance between fitting precision and surface regularity. This unified formulation addresses the ill-posed nature of the inverse problem and incorporates arbitrage constraints alongside smoothness requirements. The mathematical consistency throughout the optimization process enhances numerical stability. Beyond regularization techniques commonly used for local volatility, some recent works have also applied neural networks to LV calibration\citep{wang2021deep}, and the strategies developed in related volatility frameworks \citep{cuchiero2020generative},\citep{fu2022solving} offers potential insights for stable local volatility implementations. However, combining smoothing and LV calibration into a single program increases problem size and nonlinearity, making the optimization harder to solve and more sensitive to initialization. Moreover, regularization inevitably embeds prior assumptions (e.g., about derivatives or surface shape); inappropriate weights or priors can bias the solution, so hyperparameter tuning remains nontrivial.

\subsection{Our Contribution}
We advance the pre-calibration smoothing approach by automating local regression for noisy market observables. Specifically, the polynomial order $p$ and bandwidth $h$ are chosen by minimizing the asymptotic conditional mean-squared error (ACMSE). This automation eliminates manual tuning entirely and makes the approach practical for production environments. Where processing hundreds of surfaces without asset-specific parameter adjustments is a must.

Our framework follows the established two-stage pre-calibration smoothing methodology. The first stage produces optimally smoothed market observables through local regression with automatic bandwidth and polynomial selection. The second stage uses these smoothed observables as direct inputs to LV calibration methods. This separation maintains mathematical integrity while ensuring each stage focuses on its primary objective. The denoising process concentrates solely on optimal smoothness without arbitrage considerations, while the subsequent calibration ensures arbitrage-free dynamics.

Our approach operates directly on market observables without requiring specific parametric assumptions. Crucially, the smoothing stage involves no numerical optimization: local regression yields closed-form solutions, so it is fast and deterministic. The local regression methodology adapts naturally to the data structure and captures complex patterns across diverse asset classes and market conditions. This flexibility stems from the nonparametric nature of our method. The statistical foundation ensures smoothness quality through theoretically optimal bias–variance trade-offs. Each surface receives customized treatment based on its own characteristics rather than predetermined model parameters. Because no regularization terms are introduced here, we avoid injecting potentially wrong priors and keep the downstream optimization problem small and simple.

The automatic parameter selection represents a key advancement in pre-smoothing calibration. Our method determines optimal smoothing parameters through data-driven optimization for each asset independently. This eliminates the need for manual parameter tuning that traditionally requires extensive experimentation and market expertise. Empirical results demonstrate significantly smoother local volatility surfaces with stable Greeks while maintaining calibration accuracy. The absence of regularization in the calibration objective further simplifies the problem and improves runtime. The automation makes pre-calibration smoothing practical for large-scale applications where practitioners process hundreds of surfaces across different assets daily. This systematic approach ensures consistent quality across different market environments and reduces operational complexity in trading systems.

\section{Methodology}
Our goal is to generate a smooth, arbitrage-free LV surface that matches market observables with high accuracy. We achieve this through a two-stage approach that separates noise removal from arbitrage enforcement:

\begin{enumerate}
\item \textbf{Stage 1: From Noisy IV to Smooth IV.} We apply automatic local regression to remove noise from market IV while preserving market structure. The algorithm minimizes asymptotic conditional MSE by automatically selecting optimal polynomial order and bandwidth parameters adapted to local market characteristics (e.g., trading volume).

\item \textbf{Stage 2: From Smooth IV to Arbitrage-free LV.} We transform the smoothed IV into an arbitrage-free local volatility surface. This involves converting smoothed IVs to option prices via Black-Scholes, then applying finite difference methods with explicit arbitrage constraints through the Dupire formula.
\end{enumerate}

This separation ensures each stage focuses on its primary objective without compromise: Stage 1 achieves optimal smoothness through statistical principles, while Stage 2 enforces arbitrage-free dynamics on already-smooth inputs. Note that both the pre-calibration smoothing method and the LV fitting method can be applied to either the IVs or prices of the market observables. In this paper, we choose to apply our proposed method to the IVs, then convert the smoothed IVs to prices using the Black-Scholes formula, and finally apply the price version of the Dupire formula.

\subsection{Stage 1: From Noisy IV to Smooth IV} 
\label{sec:From Noisy IV to Smooth IV}
To filter market IV noise while preserving market structure, we employ local regression: a nonparametric technique that adapts to local data without imposing global functional forms. The smoothness is controlled by bandwidth $h$ and polynomial order $p$, and they mutually determine the bias-variance tradeoff: bias measures systematic deviation from the true function, while variance captures noise sensitivity. 
    
This tradeoff directly governs the model's performance, yet traditional implementations require manual parameter tuning—a subjective and time-consuming process. Our framework addresses this by automatically selecting both polynomial order and bandwidth to minimize the asymptotic conditional mean squared error at each point, thus eliminating manual parameter choices entirely.

In the following, we first review the definition of the general local regression model in section \ref{sec:LR} and its bias–variance tradeoff in section \ref{sec:BVT}. We then introduce our automatic local regression framework with necessary assumptions in section \ref{sec: ALR FRAMEWORK}, and shows it can optimally balances the tradeoff to achieve minimal asymptotic conditional MSE. 

To apply our framework, we partition the market observables by maturity and process each maturity group independently. Within each group, we iteratively apply the polynomial order selector (Section \ref{sec:Data-Driven Optimal Polynomial Order Selector}) and bandwidth selector (Section \ref{sec:Data-Driven Optimal Bandwidth Selector}) until convergence to the optimal parameter set. Section \ref{sec:From Asymptotic Conditional MSE to True MSE} gives advanced theoretical results that can be achieved through this framework.

\subsubsection{General Local Regression}\label{sec:LR}

We model the observed noisy IV sequence $\{\sigma_i\}_{i=1}^{n}$ and its corresponding strike sequence $\{K_i\}_{i=1}^{n}$ of a given maturity as:

\[
\sigma_i = f(K_i) + \varepsilon_i, \quad i=1,\ldots,n,
\]

where $f(\cdot)$ denotes the true IV function, and whose value is uniquely determined by strike. The term $\varepsilon_i$ represents random noise, satisfying $E\left(\varepsilon_i \mid K_i\right) = 0$ and $\operatorname{Var}\left(\varepsilon_i \mid K_i\right) = \tau_i^2$. Our goal is to estimate the value of the true function $f(\cdot)$ at a given strike $k$ using nearby noisy observations.

Following the notation in \citep{fan2018local}, the general local regression model of polynomial order $p$ is formulated as a minimization problem centered at strike $k$:

\begin{equation}
\min_{\alpha} \sum_{i=1}^{n}\left[\sigma_i - \sum_{j=0}^{p} \alpha_j (K_i - k)^j\right]^2 \kappa_h(K_i - k),
\end{equation}

where $\kappa_h(\cdot)$ is a regular kernel function with bandwidth $h$. Define the $n \times (p+1)$ Vandermonde matrix 

\[
\mathbf{X} = \mathcal{V}(K_1 - k,\, K_2 - k,\, \ldots,\, K_n - k),
\]

the vector $\sigma = (\sigma_1, \sigma_2, \ldots, \sigma_n)^T$, and the polynomial coefficient vector 

\[
\alpha = (\alpha_0, \alpha_1, \ldots, \alpha_p)^T.
\]

Let $W$ be the diagonal weight matrix given by

\begin{equation}
\mathbf{W} = \operatorname{diag}\{\kappa_h(K_i - k)\}.
\end{equation}

The weighted least squares problem then becomes

\begin{equation}
\min_{\alpha} (\sigma - \mathbf{X}\alpha)^T W (\sigma - \mathbf{X}\alpha),
\end{equation}

with the solution denoted as $\hat{\alpha} = (\hat{\alpha}_0, \hat{\alpha}_1, \ldots, \hat{\alpha}_p)^T$
\begin{equation}
\hat{\alpha} = (\mathbf{X}^T \mathbf{W} \mathbf{X})^{-1} \mathbf{X}^T \mathbf{W} \sigma.
\end{equation}

Here, $\hat{\alpha}_0$ is the estimated noise-free true IV at strike $k$.

\subsubsection{Bias-Variance Tradeoff}\label{sec:BVT}

As mentioned earlier, the bandwidth $h$ and polynomial order $p$ jointly determine the bias-variance tradeoff in local regression. To develop our automatic parameter selection framework, we need to quantify this tradeoff mathematically and understand how these parameters affect the asymptotic conditional mean squared error.

Managing the bias-variance tradeoff plays a key role in the performance of the local regression model. However, the true bias and variance depend on unknown quantities: the design density $g(\cdot)$, true IV function $f(\cdot)$, and noise variance $\tau(\cdot)$ \citep{fan2018local}(chapter 3, chapter 4), and cannot be obtained. As a result, we focus on analyzing and controlling the asymptotic conditional bias and variance. According to Fan's paper, the asymptotic bias and variance of $\hat{\alpha}_0$ at strike $k$ conditionally upon $\mathbb{K}=\{K_1,\, K_2,\, \ldots,\, K_n\}$ are given by:

\begin{equation}
\label{eq:ACMSE VAR}
    \operatorname{Var}\{\hat{\alpha}_0(k) \mid \mathbb{K} \} =
    e_{1}^{T} S^{-1} S^* S^{-1} e_{1} 
    \frac{\hat{\tau}^2(k)}{\hat{g}(k) nh}
    + \mathcal{R}_n(p,h).
\end{equation}

For odd $p$, the asymptotic conditional bias is

\begin{equation}
\label{eq:ACMSE ODD BIAS}
    \operatorname{Bias} \{\hat{\alpha}_0(k) \mid \mathbb{K} \} =
    e_{1}^{T} S^{-1} c_p \frac{1}{(p+1)!} \hat{f}^{(p+1)}(k) h^{p+1} + \mathcal{R}_n(p,h).
\end{equation}

For even $p$, the asymptotic conditional bias is:

\begin{equation}
\label{eq:ACMSE EVEN BIAS}
    \operatorname{Bias} \{\hat{\alpha}_0(k) \mid \mathbb{K} \} =
    e_{1}^{T} S^{-1} \tilde{c}_p \frac{1}{(p+2)!} 
    \left\{ \hat{f}^{(p+2)}(k) + (p+2) \hat{f}^{(p+1)}(k) \frac{\hat{g}'(k)}{\hat{g}(k)} \right\} h^{p+2}+ \mathcal{R}_n(p,h).
\end{equation}

Also, the asymptotic conditional MSE can be written as:

\begin{equation}
    Z\triangleq\operatorname{MSE}=(\operatorname{Bias})^2+\operatorname{Var}.
\end{equation}

Where $S$, $S^*$, $c_p$, $\tilde{c}_p$ are given and related to the selected regular kernel function, $e_1 = (1,0,\ldots,0)^T$ is the unit vector with $1$ at the first position, and $\mathcal{R}_n(p,h)$ represent higher order terms related to $p$ and $h$. As for $\hat{g}(\cdot)$, $\hat{f}(\cdot)$, and $\hat{\tau}(\cdot)$, they are estimations of the unknown quantities $g(\cdot)$, $f(\cdot)$, and $\tau(\cdot)$ in the information space, and converge to the true value in probability under appropriate assumptions.

These equations clearly illustrate the tradeoff: reducing the bandwidth $h$ decreases the bias but increases the variance, and vice versa. In contrast, unlike this clear relationship for bandwidth, the effect of the polynomial order $p$ on this tradeoff is more nuanced. It impacts calculation of $S,S^*,c_p,\tilde{c}_p$, and also other terms, making its effect more subtle and thus harder to control.

Intuitively, an improper selection of the polynomial order and bandwidth may lead to either overfitting (excessive variance) or over-smoothing (excessive bias). Traditional implementations often rely on subjective choices for them \citep{benko2007extracting, glaser2012arbitrage}, making the model performance heavily dependent on the user's experience. However, human judgment is inherently imprecise and cannot guarantee an optimal balance. Moreover, in practice, manually achieving a well-tuned regression typically requires a time-consuming iterative process: repeatedly applying local regression, adjusting the bandwidth and polynomial order based on experience, and then visually assessing whether the performance has improved.

This trial-and-error approach presents three major issues. First, when the differences between results are subtle, it becomes difficult—if not impossible—to visually determine which result is preferred. Second, manually selecting parameters typically restrict choices to a limited set of predefined values, preventing a thorough search for the true optimal combination. Consequently, we may settle for a suboptimal balance between bias and variance, resulting in limited performance. Third, and perhaps most importantly, manual tuning is impractical in real-world applications. For example, a trader may need to fit hundreds of LV surfaces for pricing needs. In such cases, manually fine-tuning parameters for each surface is not feasible. 

Given these limitations, a more reliable framework is needed—one that systematically determines the optimal polynomial order and bandwidth without human intervention. By replacing the subjective tuning process with an automatic framework, we can ensure that the observable market IVs are optimally denoised. After that, these denoised IVs are then used as input for the LV fitting stage.

\subsubsection{Automatic Local Regression Framework}\label{sec: ALR FRAMEWORK}

In this section, we provide an overview of our automatic local regression framework for a fixed strike $k$. This framework consists of two key components, which are applied iteratively: a data-driven optimal polynomial order selector and a data-driven optimal bandwidth selector. For a given input IV sequence with the same maturity, the asymptotic conditional MSE for a fixed strike $k$, $\hat{Z}_k(p,h)$, is a function of the polynomial order $p$ and the bandwidth $h$. We begin by selecting an initial pair $(p_0,h_0)$ as the starting point. Then, we iteratively apply the order selector followed by the bandwidth selector. As demonstrated in proposition \ref{prop:ACMSE convergece in algo}, this iterative framework converges to the minimum asymptotic conditional MSE as $n\rightarrow\infty$, where $n$ here refers to the iteration numbers. In practical applications, it is advisable to terminate the algorithm after a predefined number of iterations or when the difference between two consecutive iterations falls below a specified threshold.

\begin{algorithm}[H]
\caption{Optimal Denoising Parameter Selection}
\label{alg:Optimal Denoising Parameter Selection}
\begin{algorithmic}
\State \textbf{Initialize:} Select an initial pair \((p_0, h_0)\)
\For{$n = 0, 1, 2, \dots$}
    \State \textbf{Step 1: Optimal Polynomial Order Selection}
    \State \quad $p_{n+1} = \arg\min\limits_{p\in\mathcal{P}} \hat{Z}_k(p, h_n)$
    \State \textbf{Step 2: Optimal Bandwidth Selection}
    \State \quad $h_{n+1} = \arg\min\limits_{h} \hat{Z}_k(p_{n+1}, h)$
    \If{$|\hat{Z}_k(p_{n+1}, h_{n+1}) - \hat{Z}(p_n, h_n)| < \epsilon$}
        \State \textbf{Break}
    \EndIf
\EndFor
\State \Return $p_{n+1}$, $h_{n+1}$
\end{algorithmic}
\end{algorithm}

After outlining the iterative algorithm, we now turn our attention to its performance characteristics. Specifically, we claim the following:  

\begin{enumerate}
    \item \textbf{Global Optimality of the Selectors:} 
    Both the polynomial order selector and the bandwidth selector achieve the global optimal at each iteration. Further detail of this claim is provided in the respective sections.  
    
    \item \textbf{Convergence to the Minimal Asymptotic Conditional MSE:}     
    The objective function value in this iterative framework converges to the global minimum of the asymptotic conditional MSE $\hat{Z}^*$($\hat{Z}^*$ is derived based on the finite market observables). Formal proof of this convergence is provided in Proposition \ref{prop:ACMSE convergece in algo}. 
        
    \item \textbf{Convergence of Asymptotic Conditional MSE to MSE:} 
    Under the same search space, and with appropriate assumptions, the minimum asymptotic conditional MSE $\hat{Z}^*$ converges to the true minimum MSE $Z^*$($Z    ^*$ is derived on theoretical assumption that there are infinite market observables). This convergence is analyzed in Section \ref{sec: IV Denoising theory conclusion}.

\end{enumerate}

From these claims, we conclude that our framework effectively searches for the optimal asymptotic conditional MSE, which, under suitable conditions and assumptions, is asymptotically efficient with respect to the true MSE \citep{fan2018local}. In the following paragraphs, we provide a detailed explanation of both selectors and their interaction within our iterative framework.

We now give a set of assumptions for the propositions in Section \ref{sec:From Noisy IV to Smooth IV}. Besides, those assumptions are also applied in equation \ref{eq:ACMSE VAR}, \ref{eq:ACMSE ODD BIAS}, and \ref{eq:ACMSE EVEN BIAS} \citep{fan2018local}.

\begin{assumption}
\label{assup:convergence analysis}
Denote $p$ as the polynominal order selected for local regression on fixed point $k_0$, we assume the following:
\begin{enumerate}

    \item[1]  The symmetric kernel $K$ is a continuous density function having bounded support;
    \item[2]  $g\left(k_0\right)>0$ and $g^{\prime \prime}(k)$ is bounded in a neighborhood of $k_0$;
    \item[3]  $f^{(p+3)}(\cdot)$ exists and is continuous in a neighborhood of $k_0$, and $f^{(p+1)}\left(k_0\right) \neq 0$;
    \item[4]  $\tau^2(\cdot)$ has a bounded second derivative in a neighborhood of $k_0$;
    \item[5]  Estimated function $\hat{g}(\cdot),\hat{f}(\cdot)$ and $\hat{\tau}(\cdot)$ also share property 2,3,4 respectively.
\end{enumerate}
\end{assumption}

\begin{proposition}
\label{prop:ACMSE convergece in algo}
    Besides the assumption \ref{assup:convergence analysis}, we assume global optimality holds for each parameter selection step. Let $\hat{Z}_k^* = \min\limits_{p,h} \hat{Z}_k(p,h)$ be the global minimum value of $\hat{Z}_k(p,h)$ and $\hat{Ph}^* = \{(p,h) \mid \hat{Z}_k(p,h) = \hat{Z}_k^*\}$ be the set of all parameter pairs achieving this minimum value. Then the iterative algorithm will converge to this global minimum value:
\begin{align*}
\lim_{n \to \infty} \hat{Z}_k(p_n, h_n) = \hat{Z}_k^*
\end{align*}
Furthermore, there exists an integer $N$ such that for all $n \geq N$, the algorithm produces parameter pairs satisfying $|\hat{Z}_k(p_{n+1}, h_{n+1}) - \hat{Z}_k(p_n, h_n)| < \epsilon$, where $\epsilon>0$.
\end{proposition}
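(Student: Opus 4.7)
The plan is to establish convergence through three steps: monotone decrease of the objective sequence, lower boundedness, and identification of the limit with the global minimum $\hat{Z}_k^*$; the $\epsilon$-tail condition will then follow as an immediate Cauchy consequence. I anticipate the first two steps to be routine given the global-optimality assumption on each selector, while the identification of the limit with $\hat{Z}_k^*$ will be the main obstacle since coordinate-wise alternating minimization need not return a global minimum without extra structure.

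First I would verify monotone decrease. By the assumed global optimality of Step 1, $p_{n+1} = \arg\min_{p \in \mathcal{P}} \hat{Z}_k(p, h_n)$, so in particular $\hat{Z}_k(p_{n+1}, h_n) \leq \hat{Z}_k(p_n, h_n)$. By the assumed global optimality of Step 2, $h_{n+1} = \arg\min_h \hat{Z}_k(p_{n+1}, h)$, yielding $\hat{Z}_k(p_{n+1}, h_{n+1}) \leq \hat{Z}_k(p_{n+1}, h_n)$. Chaining these gives $\hat{Z}_k(p_{n+1}, h_{n+1}) \leq \hat{Z}_k(p_n, h_n)$. Combined with the trivial lower bound $\hat{Z}_k(p_n, h_n) \geq \hat{Z}_k^*$, the monotone convergence theorem provides a limit $\hat{Z}_k^\infty \geq \hat{Z}_k^*$. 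The Cauchy property of convergent real sequences then immediately produces an $N$ such that $|\hat{Z}_k(p_{n+1}, h_{n+1}) - \hat{Z}_k(p_n, h_n)| < \epsilon$ for all $n \geq N$, settling the second conclusion.

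The remaining task, and the crux of the argument, is to show $\hat{Z}_k^\infty = \hat{Z}_k^*$. Here I would exploit the explicit asymptotic forms in equations (\ref{eq:ACMSE VAR})--(\ref{eq:ACMSE EVEN BIAS}): for each fixed $p$, the dominant bias-squared-plus-variance expression is of bias-variance tradeoff type in $h$ and admits a unique minimizer $h^{*}(p)$ with a closed-form expression derivable by elementary calculus. Consequently, for $n \geq 1$ the iterate pair $(p_n, h_n)$ must lie in the finite set $\{(p, h^{*}(p)) : p \in \mathcal{P}\}$ (since $\mathcal{P}$ is discrete), the value sequence $\hat{Z}_k(p_n, h_n)$ takes only finitely many distinct values, and monotonic non-increase forces stabilization at some $(p^\star, h^{*}(p^\star))$ after finitely many steps. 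At this stable pair, Step 1 optimality gives $\hat{Z}_k(p^\star, h^{*}(p^\star)) \leq \hat{Z}_k(p, h^{*}(p^\star))$ for every $p \in \mathcal{P}$. To close the gap to $\hat{Z}_k^* = \min_p \hat{Z}_k(p, h^{*}(p))$, I would argue by contradiction: if $p^\star$ were not the global $p$-minimizer of $p \mapsto \hat{Z}_k(p, h^{*}(p))$, then some earlier Step 1 would have selected a strictly better $p$, contradicting stabilization at $(p^\star, h^{*}(p^\star))$. Carrying out this last reduction carefully, and ruling out strictly suboptimal coordinate-wise fixed points using the specific $p$-dependence in the bias coefficients $c_p$, $\tilde{c}_p$ and the kernel-determined matrices $S$, $S^{*}$, is where I expect the bulk of the technical effort to lie.
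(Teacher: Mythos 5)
Your first half — the two coordinate-wise inequalities giving monotone non-increase, the lower bound by $\hat{Z}_k^*$, the monotone convergence theorem, and the Cauchy property yielding the $\epsilon$-tail — coincides with the paper's argument. Your structural observation for the second half is different from the paper's: you use that $h_{n+1}=\hat{h}_{\mathrm{opt}}(p_{n+1})$ is determined by $p_{n+1}$ alone, so for $n\ge1$ the iterates live on the finite set $\{(p,h^*(p)):p\in\mathcal{P}\}$ and the monotone value sequence must stabilize in finitely many steps. This is a cleaner reduction than the paper's route, which instead asserts continuity of $\hat{Z}_k$ and runs a $\delta$--$\varepsilon$ contradiction on the value sequence.

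The step you flag as the main obstacle is, however, a genuine gap — and the same gap is present, but not acknowledged, in the paper. Step 1 optimality at the stabilized pair gives $\hat{Z}_k(p^\star,h^*(p^\star))\le \hat{Z}_k(p,h^*(p^\star))$ for all $p\in\mathcal{P}$, but the global minimum compares $\hat{Z}_k(p^\star,h^*(p^\star))$ against $\hat{Z}_k(p,h^*(p))$, and these right-hand sides differ whenever $h^*(p)\ne h^*(p^\star)$. Since $h^*(p)\propto n^{-1/(2p+3)}$ and the prefactor also depends on $p$, the per-$p$ optimal bandwidths can be far apart, so a $p'$ with $\hat{Z}_k(p',h^*(p'))<\hat{Z}_k(p^\star,h^*(p^\star))$ but $\hat{Z}_k(p',h^*(p^\star))\ge\hat{Z}_k(p^\star,h^*(p^\star))$ is not precluded; at such a coordinate-wise fixed point the algorithm stalls short of $\hat{Z}_k^*$. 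The paper asserts that for any $(p,h)\notin\hat{Ph}^*$ at least one coordinate step strictly improves, citing only continuity, but this is exactly the claim that fails for alternating minimization in general (a two-order example with widely separated $h^*(p)$'s exhibits the trap), and nothing in the $S$, $S^*$, $c_p$, $\tilde{c}_p$ structure is shown to rule it out. Your finite-orbit reduction reorganizes the argument but does not by itself close this hole; to make the proposition correct one would need either to strengthen the hypothesis (e.g., assume the coordinate-wise fixed point is unique and equal to the global minimizer) or to supply a problem-specific argument that $p\mapsto\hat{Z}_k(p,h)$ and $p\mapsto\hat{Z}_k(p,h^*(p))$ have the same argmin.
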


\subsubsection{Data-Driven Optimal Polynomial Order Selector}
\label{sec:Data-Driven Optimal Polynomial Order Selector}






Beginning with a fixed bandwidth $h$ and strike $k$, we now search for the optimal polynomial order $p$ that minimizes the asymptotic conditional MSE. Assuming we have the true value of $\alpha$ up to its $\bar{p}$-th order, and $a = \bar{p} - p$. For a local regression of order $p$, the asymptotic conditional bias of $\hat\alpha$ can be estimated as:

\begin{equation}
\label{eq:bias_polynomialselect}
\operatorname{Bias} \{\hat{\alpha}(k) \mid \mathbb{K} \} = S_n^{-1}b,
\end{equation}

where 
\begin{equation}
S_n = \mathbf{X}^T\mathbf{W}\mathbf{X}
\end{equation}

is the weighted moment matrix. To understand the bias term $b$, we follow Fan's notation $S_{n,j}$ for the $j$-th order weighted moment:
\begin{equation}
S_{n,j} = \sum_{i=1}^n \kappa_h(K_i-k)(K_i-k)^j.
\end{equation}

Note that the $(i,j)$-th element of matrix $S_n$ equals $S_{n,i+j}$, i.e., 
$[S_n]_{i,j} = \sum_{k=1}^n \kappa_h(K_k-k)(K_k-k)^{i+j}$ for $i,j = 0,1,\ldots,p$.

When the true model has order $\bar{p}$ but we fit with order $p < \bar{p}$, the omitted higher-order terms $\sum_{j=p+1}^{\bar{p}} \alpha_j (K_i - k)^j$ contribute to the bias. Through the normal equations, this contribution takes the form:
\begin{equation}
b=\begin{pmatrix}
\alpha_{p+1}S_{n,p+1}+\cdots+\alpha_{p+a}S_{n,p+a} \\
\vdots \\
\alpha_{p+1}S_{n,2p+1}+\cdots+\alpha_{p+a}S_{n,2p+a},
\end{pmatrix}
\end{equation}


where vector $b$ is $(p+1)$-dimensional, and its $\ell$-th element (for $\ell = 0,1,\ldots,p$) captures how the omitted terms project onto the $\ell$-th basis function through the weighted inner product.

The asymptotic conditional variance matrix can be written as:
\begin{equation}
\label{eq:var_polynomialselector}
\operatorname{Var}\{\hat{\alpha}(k) \mid \mathbb{K} \} = S_n^{-1}\left(\mathbf{X}^T \Sigma \mathbf{X}\right)S_n^{-1},
\end{equation}

where
\begin{equation}
        \Sigma=\textbf{diag}\{\kappa^2_h(K_i-k)\tau^2(K_i)\}.
\end{equation}

As we already demonstrated, the calculation of asymptotic conditional MSE requires estimations of some unknown quantities. Based on Theorem 3.1 in \citep{fan2018local}, for a general local regression, its solution $\hat{\alpha} = (\hat{\alpha}_0, \hat{\alpha}_1, \ldots, \hat{\alpha}_p)^T$ converge to the true $\alpha$, under the assumption \ref{assup:convergence analysis} with $h\rightarrow 0, nh\rightarrow \infty, n\rightarrow \infty$. Therefore, a pilot estimator approach can be applied here to estimate those unknown quantities. Under the condition above, those estimations will converge to the true value.

After establishing the pilot estimator, we fit a sequence of local regressions with different polynomial order $p$. Their asymptotic conditional MSE can be obtained using the estimations from the pilot estimator and the formulas above. We then select the global optimal polynomial order of the current iteration, based on asymptotic conditional MSE minimization.

\paragraph{Construct Pilot Estimator}

We now fit a local regression with $\bar{p}$ and bandwidth $\bar{h}$ as the pilot estimator, which provides an approximation of $f(\cdot)$ within a small neighborhood of $k$. Although $\bar{p}$ can be picked discretionarily, we still need to determine the bandwidth $\bar{h}$ for the pilot estimator.

The numerical procedure for selecting the optimal pilot bandwidth $\bar{h}$ closely follows the leave-one-out cross-validation (LOO-CV) method commonly used in kernel smoothing. For each candidate bandwidth $h$ in a predefined grid $\mathcal{H}$, we fit a local regression while systematically leaving out one observation at a time. The resulting prediction errors are squared, and then aggregated into the cross-validation score:

\begin{equation}
\mathrm{CV}(h)=\frac{1}{n} \sum_{i=1}^n\left\{\sigma_i-\widehat{f}_{h,-i}\left(K_i\right)\right\}^2,
\end{equation}

where $\widehat{f}_{h,-i}\left(K_i\right)$ represents the local polynomial estimate at $K_i$, obtained using bandwidth $h$ while excluding the $i$-th data point. The optimal pilot bandwidth $\bar{h}$ is chosen as the value that minimizes $\mathrm{CV}(h)$.

This cross-validation approach eliminates the need for explicit knowledge of the underlying error structure, ensuring a fully data-driven selection of $\bar{h}$. By optimizing the bandwidth in this manner, we enhance the stability of higher-order coefficient estimation, thereby improving the accuracy and robustness of the pilot estimator.

\begin{algorithm}[H]
\caption{Cross-Validated Pilot Bandwidth Selection}
\label{alg:pilot_bandwidth}
\begin{algorithmic}
\Require Data: $\{(K_i, \sigma_i)\}_{i=1}^n$, strike $k$, candidate bandwidth grid $\mathcal{H} = \{h_1, h_2, \ldots, h_M\}$, pilot estimator polynomial order $\bar{p}$, kernel function $\kappa(\cdot)$.
\State Initialize $\mathrm{CV}(h) \gets 0$ for every $h\in \mathcal{H}$.
\For{each $h \in \mathcal{H}$}
  \For{$i=1$ to $n$}
    \State Exclude $(K_i,\sigma_i)$ from the sample.
    \State Fit a local polynomial of order $\bar{p}$ at $k$ using bandwidth $h$ with the remaining data.
    \State Predict the response $\widehat{f}_{h,-i}(K_i)$.
    \State Update: $\mathrm{CV}(b) \gets \mathrm{CV}(h) + \bigl[\sigma_i - \widehat{f}_{h,-i}(K_i)\bigr]^2$.
  \EndFor
  \State Compute the average error: $\mathrm{CV}(h) \gets \mathrm{CV}(b)/n$.
\EndFor
\State Select the pilot bandwidth: 
\[
\bar{h} \;=\; \arg\min_{h\in \mathcal{H}} \mathrm{CV}(h).
\]
\Return $\bar{h}$
\end{algorithmic}
\end{algorithm}

\paragraph{Bias and Variance Calculation}

we now show how to obtain practical values of bias and variance in equation \ref{eq:bias_polynomialselect},\ref{eq:var_polynomialselector}. For bias calculation, by using the pilot estimator, we have:
\begin{equation}
\hat{b}=\begin{pmatrix}
\hat\alpha_{p+1}S_{n,p+1}+\cdots+\hat\alpha_{p+a}S_{n,p+a} \\
\vdots \\
\hat\alpha_{p+1}S_{n,2p+1}+\cdots+\hat\alpha_{p+a}S_{n,2p+a}
\end{pmatrix},
\end{equation}
where the coefficients $\hat\alpha_{p+1},\ldots,\hat\alpha_{p+a}$ are now given by the pilot estimator.

For variance calculation, based on the $p$ order estimation $\hat{f}(\cdot)$ of the true IV function $f(\cdot)$, we compute the residuals as:

\begin{equation}
\hat{\varepsilon}_i=\sigma_i - \hat{f}(K_i).
\end{equation}

To determine the appropriate variance $\tau^2(K_i), i=1,\ldots,n$, the user must first choose between the homoscedasticity and heteroscedasticity assumptions. 

Under the local homoscedasticity assumption (i.e.,$\tau^2(K_i)=\tau^2$ for all $i$), we estimate the variance using a pseudo-Nadaraya–Watson estimator that we propose:

\begin{equation}
\hat{\tau}^2 = \frac{\sum_{j=1}^{n}\kappa\Bigl(\frac{K_j-k}{h}\Bigr)\hat{\varepsilon}_j^2}{\sum_{j=1}^{n}\kappa\Bigl(\frac{K_j-k}{h}\Bigr)}.
\end{equation}

Under the local heteroscedasticity assumption, where the variance may vary with $K_i$, the pseudo-Nadaraya–Watson estimator become: 

\begin{equation}
\hat{\tau}^2(K_i) = \frac{\sum_{j=1}^{n}\kappa\Bigl(\frac{K_j-K_i}{h}\Bigr)\hat{\varepsilon}_j^2}{\sum_{j=1}^{n}\kappa\Bigl(\frac{K_j-K_i}{h}\Bigr)}.
\end{equation}

Our estimators differs from the traditional Nadaraya-Watson estimator in several key aspects. While the classic Nadaraya-Watson estimator is primarily used to estimate the unknown quantity $f(\cdot)$, it is not designed for noise variance estimation. Moreover, our approach estimates the residuals using $\hat{\varepsilon}_j$, whereas the traditional Nadaraya-Watson method assumes access to the true residuals $\varepsilon_j$, which are unknown in our case. Despite these differences, we can show that this estimator converges almost surely to the true residual variance at a specific rate:

\begin{proposition}
\label{prop:tau convergence rate}
    Under assumption \ref{assup:convergence analysis}, the convergence rate of the estimator $\hat{\tau}(K)^2$ to the true variance $\tau^2(K)$ is:
\begin{align*}
|\hat{\tau}(K)^2 - \tau(K)^2| &= O_p\left(\frac{1}{\sqrt{nh}}\right) + O_p(h^{p+1}) + O(h^2) + O\left(\frac{1}{nh}\right),
\end{align*}
where $p \geq 0$.
\end{proposition}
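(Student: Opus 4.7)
The plan is to start from the algebraic identity $\hat{\varepsilon}_j = \varepsilon_j + r_j$ with $r_j := f(K_j)-\hat f(K_j)$, so that, writing $w_j := \kappa\bigl((K_j-K)/h\bigr)$,
\begin{equation*}
\hat{\tau}^2(K) \;=\; A + 2B + C, \qquad A := \frac{\sum_j w_j\varepsilon_j^2}{\sum_j w_j},\quad B := \frac{\sum_j w_j\varepsilon_j r_j}{\sum_j w_j},\quad C := \frac{\sum_j w_j r_j^2}{\sum_j w_j}.
\end{equation*}
The four terms in the claim will then emerge from bounding $A-\tau^2(K)$, $B$, and $C$ separately.

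First I would handle $A$. This is a standard Nadaraya--Watson regression of $\varepsilon_j^2$, whose conditional mean is $\tau^2(K_j)$. Under Assumption \ref{assup:convergence analysis}, $\tau^2(\cdot)$ has a bounded second derivative, so a Taylor expansion of the numerator and denominator around $K$, combined with a standard variance calculation for kernel-weighted sums, gives $|A-\tau^2(K)| = O(h^2) + O_p\bigl(1/\sqrt{nh}\bigr)$; this produces the $O(h^2)$ and one copy of the $O_p(1/\sqrt{nh})$ contribution. For $C$, the pilot local-polynomial estimator of order $p$ satisfies, pointwise for $K_j$ inside the shrinking bandwidth neighborhood where $w_j>0$, the rate $r_j = O_p(h^{p+1}) + O_p(1/\sqrt{nh})$ by Theorem 3.1 of \citep{fan2018local}. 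Squaring and averaging with the nonnegative weights $w_j/\sum_k w_k$ preserves this order, yielding $C = O_p(h^{2(p+1)}) + O_p(1/(nh))$.

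The cross term $B$ is the delicate one, and I would control it by Cauchy--Schwarz with respect to the discrete inner product induced by $\{w_j\}$:
\begin{equation*}
|B| \;\le\; \sqrt{A\cdot C} \;=\; \sqrt{\bigl(\tau^2(K)+o_p(1)\bigr)\bigl(O_p(h^{2(p+1)}) + O_p(1/(nh))\bigr)} \;=\; O_p(h^{p+1}) + O_p\bigl(1/\sqrt{nh}\bigr),
\end{equation*}
which supplies the $O_p(h^{p+1})$ term. Summing the three bounds and absorbing the subdominant $O_p(h^{2(p+1)})$ into $O_p(h^{p+1})$ gives
\begin{equation*}
|\hat{\tau}^2(K) - \tau^2(K)| \;\le\; |A - \tau^2(K)| + 2|B| + C \;=\; O_p\bigl(1/\sqrt{nh}\bigr) + O_p(h^{p+1}) + O(h^2) + O\bigl(1/(nh)\bigr),
\end{equation*}
which is exactly the stated rate.

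The main obstacle is that $r_j$ is not independent of $\varepsilon_j$: since $\hat f$ is a linear smoother of the whole noise vector, the same $\varepsilon_j$'s that appear in $A$ and $B$ also drive the stochastic part of $r_j$. A naive "condition on the pilot" argument therefore fails, and a sharp decoupling would require expanding the smoother weights $\hat f(K_j) = \sum_k W_{jk}\sigma_k$ and carefully tracking which $\varepsilon_k$'s enter both factors of $B$. The Cauchy--Schwarz step above circumvents this bookkeeping at the cost of a possibly loose constant, but it is exactly tight enough to deliver the rate in the proposition; the only other place where care is needed is verifying that the uniform bound on $r_j$ over the active $h$-neighborhood holds under Assumption \ref{assup:convergence analysis}, which follows from the pointwise rate and the continuity of $f^{(p+1)}$ in that neighborhood.
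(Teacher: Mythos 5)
Your decomposition $\hat{\varepsilon}_j^2=\varepsilon_j^2+2\varepsilon_j r_j+r_j^2$, the Cauchy--Schwarz bound on the cross term, and the appeal to Fan's pointwise rates for the pilot estimator are exactly the steps the paper takes, so your proof is essentially the paper's. The only cosmetic differences are that you apply Cauchy--Schwarz in the normalized $w_j/\sum_k w_k$ inner product (slightly cleaner than the paper's direct bound on the numerator), and your $1/(nh)$ contribution arrives as $O_p(1/(nh))$ via the $C$ term rather than as the deterministic $O(1/(nh))$ bias the paper extracts from a second-order Taylor expansion of the Nadaraya--Watson ratio; since the proposition's statement mixes $O$ and $O_p$ anyway, this does not affect the conclusion.
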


\subsubsection{Data-Driven Optimal Bandwidth Selector}
\label{sec:Data-Driven Optimal Bandwidth Selector}

Beginning with a fixed polynomial order $p$ and strike $k$, we now search for the optimal bandwidth $h$ that minimizes the asymptotic conditional MSE. To determine it, we start by considering the theoretical form of asymptotically optimal bandwidth selection \citep{fan2018local}:

\begin{equation}
h_{\mathrm{opt}}\left(k\right)=C_{\nu, p}(\kappa)\left[\frac{\tau^2\left(k\right)}{\left\{f^{(p+1)}\left(k\right)\right\}^2 g\left(k\right)}\right]^{\frac{1}{2 p+3}} n^{-\frac{1}{2 p+3}},
\end{equation}

where

\begin{equation}
C_{\nu, p}(\kappa)=\left[\frac{(p+1)!^2(2 \nu+1) \int \kappa_\nu^{* 2}(t) d t}{2(p+1-\nu)\left\{\int t^{p+1} \kappa_\nu^*(t) d t\right\}^2}\right]^{\frac{1}{2 p+3}}
\end{equation}

 follows the same notation in Fan's book, and one can refer them in section \ref{proposition: cvp}. There are three quantities that must be estimated in this formula: the design density $g(\cdot)$, the higher-order derivative $f^{(p+1)}(\cdot)$, and the variance $\tau^2(\cdot)$. Traditionally, these quantities are difficult to estimate directly in local regression, limiting its practical applicability. 

In our framework, both $f^{(p+1)}(\cdot)$ and $\tau^2(\cdot)$ can be easily obtained from the pilot estimator. However, the estimation of the design density $g(\cdot)$ is more challenging. In many local regression applications, the independent variable follows a continuous distribution, allowing standard kernel density estimation (KDE) techniques to approximate the design density effectively. In contrast, the strike prices of standardized option contracts are inherently discrete and separated by fixed intervals. Consequently, their distribution can be viewed as a discrete uniform distribution. Applying KDE techniques to such a distribution typically yields a uniform distribution, which provides no useful information.

To overcome this limitation, we reinterpret each individual trade as a data point. In other words, rather than having a single observation per strike, the number of observations at a given strike is equal to its trading volume. This reinterpretation enables us to safely apply KDE techniques and obtain meaningful information about market participation and conviction. Moreover, the optimal bandwidth selection based on this interpretation method naturally aligns with common market practices: a larger (smaller) trading volume implies a shorter (longer) bandwidth, equivalent to assigning more (less) weight to the current strike.

With our specific estimation of those quantities in hand, we substitute them into the bandwidth formula, and are able to show later that this formula of selection bandwidth: 

\begin{equation}
\hat{h}_{\mathrm{opt}} \;=\; C_{v,p}\; \left[ \frac{\hat{\tau}^2(k)}{\bigl[\widehat{f}^{(p+1)}(k)\bigr]^2\,\hat{g}(k)} \right]^{\frac{1}{2p+3}} \, n^{-\frac{1}{2p+3}},
\end{equation}

is actually asymptotically optimal for minimal ACMSE.

Rather than relying on traditional 'rule-of-thumb' (ROT) methods—which depend on generic assumptions about function smoothness—our approach incorporates market volume information to achieve a more precise balance between bias and variance.

\begin{proposition}
\label{prop:hp are global optimal}
Under assumption \ref{assup:convergence analysis}, if for $n^{th}$ iteration in Algorithm \ref{alg:Optimal Denoising Parameter Selection}, $h_{n+1}=\hat{h}_{opt}$ and  $p_{n+1}$ is obtained following section \ref{sec:Data-Driven Optimal Polynomial Order Selector}, then $p_{n+1}, h_{n+1}$ are indeed the global optimal selectors for fixed $k$ and corresponding $h, p:$
\begin{enumerate}
    \item[] $p_{n+1}=\arg \min _{p \in \mathcal{P}} \hat{Z}_k\left(p, h_n\right)$,
    \item[] $h_{n+1}=\arg \min _h \hat{Z}_k\left(p_{n+1}, h\right)$,
\end{enumerate}
where $\mathcal{P}$ is the sample space for polynomial order selector, and $\hat{Z}_k$ is asymptotic conditional MSE for fixed $k$.
\end{proposition}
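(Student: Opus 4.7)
The plan is to prove the two global-optimality claims in order: first the bandwidth claim, because it pins down the closed form; then the polynomial-order claim, which is essentially constructive; and finally address the role of the remainder term that could in principle disturb the argmin operation.

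For the bandwidth claim, I would hold $p = p_{n+1}$ fixed and substitute the pilot-based estimators $\hat g(k)$, $\hat f^{(p+1)}(k)$, and $\hat\tau^2(k)$ into equations (\ref{eq:ACMSE VAR}) and (\ref{eq:ACMSE ODD BIAS})/(\ref{eq:ACMSE EVEN BIAS}). This yields, for odd $p$ and up to the remainder $\mathcal{R}_n(p,h)$, an expression of the form
\begin{equation*}
\hat Z_k(p,h) = A(p)\,h^{2(p+1)} + \frac{B(p)}{n\,h} + \mathcal{R}_n(p,h),
\end{equation*}
with $A(p), B(p) > 0$ gathered from the kernel-dependent scalars $e_1^T S^{-1} c_p$, $e_1^T S^{-1} S^* S^{-1} e_1$ and the plug-in quantities. (The even-$p$ case differs only in that the exponent becomes $2(p+2)$ and $A(p)$ absorbs the $\hat g'/\hat g$ correction.) The leading-order function $A h^{2(p+1)} + B/(nh)$ is strictly convex on $(0,\infty)$, so setting its derivative to zero gives a unique stationary point that is automatically the global minimum; solving the resulting equation $2(p+1)A h^{2p+1} = B/(nh^2)$ and regrouping the kernel moments recovers exactly the closed-form $\hat h_{\mathrm{opt}}$ displayed immediately before the proposition, with the prefactor $C_{\nu,p}(\kappa)$ matching term by term.

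For the polynomial-order claim, the set $\mathcal{P}$ is a finite discrete grid of candidate orders. The selector in Section~\ref{sec:Data-Driven Optimal Polynomial Order Selector} evaluates $\hat Z_k(p, h_n)$ at every $p \in \mathcal{P}$ by plugging the pilot estimators into (\ref{eq:bias_polynomialselect})--(\ref{eq:var_polynomialselector}) with $\hat b$ built from pilot coefficients through order $\bar p$, then returns the argmin by enumeration. Because the minimization is exhaustive over a finite set, global optimality at fixed $h_n$ holds by construction and requires no further argument beyond the consistency of the pilot estimators.

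The main obstacle will be controlling the remainder $\mathcal{R}_n(p,h)$ so that minimizing the leading-order expression truly corresponds to minimizing $\hat Z_k(p,h)$ rather than just its leading part. Under Assumption~\ref{assup:convergence analysis} with $h \to 0$ and $nh \to \infty$, Theorem~3.1 of \citep{fan2018local} together with Proposition~\ref{prop:tau convergence rate} guarantee that the plug-in estimators converge in probability to the true smoothness and variance quantities, and that $\mathcal{R}_n(p,h)$ is of strictly smaller order than the dominant $h^{2(p+1)} + 1/(nh)$ terms. Showing that this domination holds uniformly over the bandwidth range, so that the argmin operation passes to the limit, is the delicate step; once it is established, the rest of the argument reduces to the routine convex calculus sketched above.
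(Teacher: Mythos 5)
Your proposal follows the same route as the paper: strict convexity of the leading-order $\hat Z_k(p,h) \approx A(p)h^{2(p+1)} + B(p)/(nh)$ in $h$ gives a unique stationary point that is the global minimum, and the polynomial order is globally optimal by exhaustive search over the discrete set $\mathcal P$. The one step you compress — ``regrouping the kernel moments recovers $C_{\nu,p}(\kappa)$ term by term'' — is where the paper spends most of its effort: it proves the equivalent-kernel identities $\int K_\nu^{*2}(t)\,dt = e_{\nu+1}^T S^{-1} S^* S^{-1} e_{\nu+1}$ and $\int t^{p+1}K_\nu^*(t)\,dt = e_{\nu+1}^T S^{-1} c_p$, so that the abstract constant $C_{0,p}(\kappa)$ in Fan's optimal-bandwidth formula is shown to equal $[C_2(p)/(2(p+1)C_1(p))]^{1/(2p+3)}$ and the critical point of $\hat Z_k(p,h)$ is exactly $\hat h_{\mathrm{opt}}$. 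Your caution about the remainder $\mathcal R_n(p,h)$ — that dominance of the leading terms must hold for the argmin to be meaningful — is a fair point, but the paper's own proof simply drops the $o_P$ terms before differentiating, so you are being more careful than the source here rather than filling a gap the paper addresses.
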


\subsubsection{From Asymptotic Conditional MSE to True MSE}
\label{sec:From Asymptotic Conditional MSE to True MSE}

With appropriate assumptions and conditions, we prove that the minimum asymptotic conditional MSE $\hat{Z}^*$ obtained in our framework converges to the true minimum MSE $Z^*$. This result provides a strong theoretical foundation for the efficiency and practical utility of our automatic local regression framework, thereby validating its use for optimal denoising in LV model implementation.

\label{sec: IV Denoising theory conclusion}
\begin{proposition}
\label{prop:ACMSE to TRUE MSE}
    Under assumption \ref{assup:convergence analysis} and the same search space $Ph=\{(p,h)\mid p \in \mathcal{P}\}$, let's denote $Z_k^*$ as the optimal true MSE value under unknown quantities $f^{(p+1)}(\cdot), g(\cdot),\tau(\cdot)$. Then one can show the following convergence result:
    $$\lim _{\substack{n \rightarrow \infty \\ n h \rightarrow \infty \\ h \rightarrow 0}} \hat{Z}_k^*=Z_k^*.$$
\end{proposition}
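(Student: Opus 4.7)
The plan is to prove $\hat{Z}_k^* \to Z_k^*$ via a plug-in argument combined with an argmin-continuity step. I would first establish pointwise convergence $\hat{Z}_k(p,h) \to Z_k(p,h)$ for each fixed parameter pair in the shared search space $Ph$, and then upgrade this to convergence of the minima by exploiting the structure of $Ph$ after a suitable rescaling of $h$.

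For the pointwise part, the closed-form expressions in equations \ref{eq:ACMSE VAR}, \ref{eq:ACMSE ODD BIAS}, and \ref{eq:ACMSE EVEN BIAS} exhibit $\hat{Z}_k(p,h)$ as a smooth function of the plug-in quantities $\hat{g}(k)$, $\hat{g}'(k)$, $\hat{f}^{(p+1)}(k)$, $\hat{f}^{(p+2)}(k)$, $\hat{\tau}^2(k)$, plus a higher-order remainder $\mathcal{R}_n(p,h)$. Under Assumption \ref{assup:convergence analysis} and the regime $n\to\infty$, $nh\to\infty$, $h\to 0$, the following hold: (i) the kernel density estimates $\hat{g},\hat{g}'$ converge in probability to $g,g'$ by standard KDE consistency; (ii) the pilot local polynomial estimators $\hat{f}^{(j)}(k)$ converge to $f^{(j)}(k)$ for the required derivative orders by Theorem 3.1 of \citep{fan2018local}; (iii) $\hat{\tau}^2(k) \to \tau^2(k)$ by Proposition \ref{prop:tau convergence rate}; and (iv) $\mathcal{R}_n(p,h) \to 0$ in the stated regime. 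The continuous mapping theorem then yields $\hat{Z}_k(p,h) \to Z_k(p,h)$ for every fixed $(p,h)$.

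To pass from pointwise convergence to convergence of the infima, I would exploit the discrete structure of the polynomial component of $Ph$: since $\mathcal{P}$ is finite, it suffices to prove $\min_h \hat{Z}_k(p,h) \to \min_h Z_k(p,h)$ for each fixed $p$ and then minimize over $\mathcal{P}$. For each fixed $p$, the optimal bandwidth behaves like $h_{\mathrm{opt}} = C_{v,p}[\,\cdot\,]^{1/(2p+3)} n^{-1/(2p+3)}$, so I would reparametrize via $h = c\,n^{-1/(2p+3)}$ with $c$ restricted to a compact interval that contains the asymptotically optimal value; concentration of $\hat{h}_{\mathrm{opt}}$ around this value is supplied by the plug-in consistency already established in Proposition \ref{prop:hp are global optimal}. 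On this compact $c$-interval, $c \mapsto \hat{Z}_k(p, c\,n^{-1/(2p+3)})$ is continuous, and pointwise convergence can then be upgraded to uniform convergence via stochastic equicontinuity of the plug-in functional in $c$. A standard argmin-continuity (epi-convergence) argument delivers convergence of the restricted minima, and taking the minimum over $p \in \mathcal{P}$ completes the proof.

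The main obstacle is precisely this uniform-in-$h$ step: because $h \to 0$ in the stated regime, the component estimators must be controlled uniformly as the local neighborhood collapses, and a naive pointwise argument is insufficient to exchange limit and infimum. The rescaling to the compact $c$-variable described above is the cleanest remedy, reducing the problem to uniform convergence on a fixed compact set where Assumption \ref{assup:convergence analysis} delivers the requisite equicontinuity of the bias and variance expressions in the plug-in quantities. A minor secondary point is that only convergence in probability is available from Proposition \ref{prop:tau convergence rate} and from the pilot estimator rates, so the statement $\hat{Z}_k^* \to Z_k^*$ is naturally read as convergence in probability; an almost-sure strengthening would require upgrading those rates via Borel--Cantelli, which is not needed for the stated result.
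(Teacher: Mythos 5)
Your proposal is correct in overall structure and matches the paper in its first half: both establish pointwise convergence $\hat{Z}_k(p,h)\to Z_k(p,h)$ by plugging in the consistency of $\hat{g}$, $\hat{f}^{(p+1)}$, $\hat{\tau}^2$ (the paper does this more explicitly, expanding $1/\hat{g}(k)$ by Taylor series and tracking the exact $O_P$ rates of each error term rather than invoking the continuous mapping theorem in one stroke). Where you diverge is the upgrade from pointwise convergence to convergence of the minima. You propose to reparametrize $h = c\,n^{-1/(2p+3)}$ onto a compact $c$-interval, establish uniform convergence via stochastic equicontinuity, and invoke epi-convergence/argmin-continuity, then minimize over the finite set $\mathcal{P}$. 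The paper instead runs a direct contradiction/sandwich argument: assume $|\hat{Z}^*_{k,n_j}-Z^*_k|>\epsilon$ along a subsequence, pick a true optimizer $(p^*,h^*)$ and the estimated optimizers $(p^*_{n_j},h^*_{n_j})$, and combine the two optimality inequalities $\hat{Z}_{k,n_j}(p^*_{n_j},h^*_{n_j})\le\hat{Z}_{k,n_j}(p^*,h^*)$ and $Z_k(p^*,h^*)\le Z_k(p^*_{n_j},h^*_{n_j})$ with the explicit rate bounds from part one (which hold along any sequence with $h_{n_j}\to 0$, $n_j h_{n_j}\to\infty$, so in particular along the estimated-optimal bandwidths) to sandwich $|\hat{Z}^*_{k,n_j}-Z^*_k|<\delta$, a contradiction. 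The paper's sandwich avoids the machinery of epi-convergence and uniform-in-$c$ stochastic equicontinuity and works directly with the rate bounds it has already derived; your compactification route is heavier but more cleanly isolates the exact uniformity needed (and is the standard textbook device for exactly this kind of argmin-continuity question). Both are valid; the paper's is the more elementary and self-contained route given the rate analysis it has already set up, while yours would be preferable if one wanted a black-box argument not tied to explicit rates. Your closing observation that the convergence is naturally in probability, given Proposition \ref{prop:tau convergence rate} and the pilot-estimator rates, is correct and consistent with the paper's use of $O_P$ throughout.
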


\subsection{Stage 2: From Smooth IV to Arbitrage-free LV}

The second stage transforms the smoothed IVs from Stage 1 into an arbitrage-free LV surface. The key insight is that smoothed input naturally lead to smooth LV surfaces through the Dupire formula, as the LV model depends on partial derivatives of observable market prices. By starting with optimally smoothed data, we avoid the spurious oscillations that typically plague local volatility calibration.

\subsubsection{From Smooth IV to Smooth Prices}

Having obtained the denoised implied volatility surface $\hat{\sigma}_{IV}(K,T)$ from Stage 1, we first convert these values to option prices using the Black-Scholes formula:

\begin{equation}
\hat{C}(K,T) = BS(S_0, K, T, r, d, \hat{\sigma}_{IV}(K,T)),
\end{equation}

where $BS(\cdot)$ denotes the Black-Scholes pricing formula. This transformation preserves the smoothness properties: since the Black-Scholes formula is smooth in the implied volatility parameter, our optimally smoothed implied volatilities yield correspondingly smooth option prices. These smooth prices serve as the target for our local volatility calibration.

\subsubsection{Local Volatility Calibration via Finite Differences}

The Dupire formula establishes the fundamental relationship between option prices and local volatility:

\begin{equation}
\frac{\partial C}{\partial t} = \frac{1}{2}{\sigma_{LV}^2}(t,K)K^2 \frac{\partial^2 C}{\partial K^2} - (r-d)K\frac{\partial C}{\partial K}-dC,
\end{equation}

where $C$ is the option price, $K$ is the strike price, $t$ is the time, $r$ is the risk-free rate, $d$ is the dividend rate, and $\sigma_{LV}^2(t,K)$ represents the local volatility function.

To ensure numerical stability and maintain smoothness throughout the calibration, we work with normalized variables. Define the forward price $F(t,T) = S_0e^{(r-d)(T-t)}$, normalized strike $k(t) = K(t)/F(0,t)$, and normalized price $\tilde{C}(t,k(t)) = C(t,K(t))/F(0,t)$. The Dupire formula then simplifies to:

\begin{equation}
\frac{\partial \tilde{C}}{\partial t} = \frac{1}{2}{\sigma_{LV}^2}(t,k)k^2 \frac{\partial^2 \tilde{C}}{\partial k^2}.
\end{equation}

This normalized form eliminates the drift term and improves numerical conditioning, particularly important for maintaining smoothness in the discretized system.

\subsubsection{Optimization Framework}

We implement the local volatility calibration through an implicit finite difference scheme that naturally enforces arbitrage-free conditions \citep{andreasen2010volatility}. The key is to formulate the problem inversely: rather than computing prices forward given local volatilities, we determine the local volatilities that best reproduce our smooth target prices.

At each time step, we solve the optimization problem:

\begin{equation}
\sigma_{LV}^{i-1}(k) = \arg\min_{\sigma \geq 0} \sum_{j} \left(C_i^{\text{smooth}}(k_j) - C_i^{\text{model}}(k_j; \sigma)\right)^2,
\end{equation}

subject to the finite difference evolution:

\begin{equation}
C_i^{\text{model}} = f_{\text{FD}}(C_{i-1}, \sigma_{LV}^{i-1}),
\end{equation}

where $f_{\text{FD}}$ represents the implicit finite difference operator. The non-negativity constraint on $\sigma$ ensures arbitrage-free conditions, as negative local volatilities would violate the no-arbitrage principle.

\subsubsection{Algorithm Implementation}

The complete algorithm proceeds backward in time from the longest maturity to ensure consistency across the entire surface:

\begin{algorithm}[H]
\caption{Finite-Difference Local Volatility Calibration}\label{alg:lv_calibration}
\begin{algorithmic}
\State \textbf{Input:} Smooth prices $\{C^{\text{smooth}}(k_j, T_i)\}$ from Stage 1
\State \textbf{Initialize:} $C_0(k) = \max(S_0 - k, 0)$ \Comment{Terminal payoff}
\For{$i = 1$ to $N$} \Comment{Iterate through maturities}
    \State Solve: $\sigma_{LV}^{i-1} = \arg\min_{\sigma \geq 0} \sum_j \left(C_i^{\text{smooth}}(k_j) - C_i^{\text{model}}(k_j)\right)^2$
    \State Update: $C_i = f_{\text{FD}}(C_{i-1}, \sigma_{LV}^{i-1})$ \Comment{Finite difference step}
\EndFor
\State \textbf{Output:} Local volatility surface $\{\sigma_{LV}(k, T)\}$
\end{algorithmic}
\end{algorithm}

The smoothness of the input prices from Stage 1 ensures that the optimization problem at each time step is well-conditioned. Since the finite difference operator preserves smoothness properties and the optimization respects the smooth structure of the target prices, the resulting local volatility surface inherits the smoothness characteristics of our preprocessed data. This cascading of smoothness from implied volatilities through prices to local volatilities is the key mechanism by which our two-stage approach achieves stable Greeks for exotic option pricing.

\section{Numerical Experiments}

In this section, we demonstrate the effectiveness of our proposed pre-calibration smoothing method through comprehensive numerical experiments conducted on two distinct datasets: model-generated data based on the SVI parameterization (referred to as the SVI market), and real-world market data characterized by a W-shaped IV surface (referred to as the real-world W-shaped market). These datasets were selected to evaluate the method's performance across both controlled model-generated environments and challenging real-world scenarios.

\subsection{The SVI Market}

In the first part of numerical experiments, we conduct an experiment using the SVI market. In an ideal market, the IV function is expected to be both continuous and smooth. However, real-world markets provide only discrete market observables. Moreover, those observables are subject to various sources of random noise introduced by market imperfections. These imperfections cause the observed IVs to deviate from their true values. As demonstrated in Figure \ref{fig:SVI_IV}, to simulate this scenario, we consider an arbitrage-free SVI curve using the following parameters: $a=0.030358$, $b=0.0503815$, $\rho=-0.1$, $m=0.3$, and $\sigma=0.048922$. The time to maturity is one year, the underlying stock price $S_0$ is $1$, and both the risk-free rate and dividend rate are zero. For our numerical experiments, we first take a discrete sequence of IVs from this SVI curve (referred to as the ideal data) and then add i.i.d normally distributed noise with $\mu = 0$ and $\sigma = 0.001$ to the ideal data (referred to as the noisy data). For reproducibility, we set the random seed to $11$.

\begin{figure}[H]
\centering
\begin{minipage}{0.45\textwidth}
  \centering
  \includegraphics[width=\linewidth]{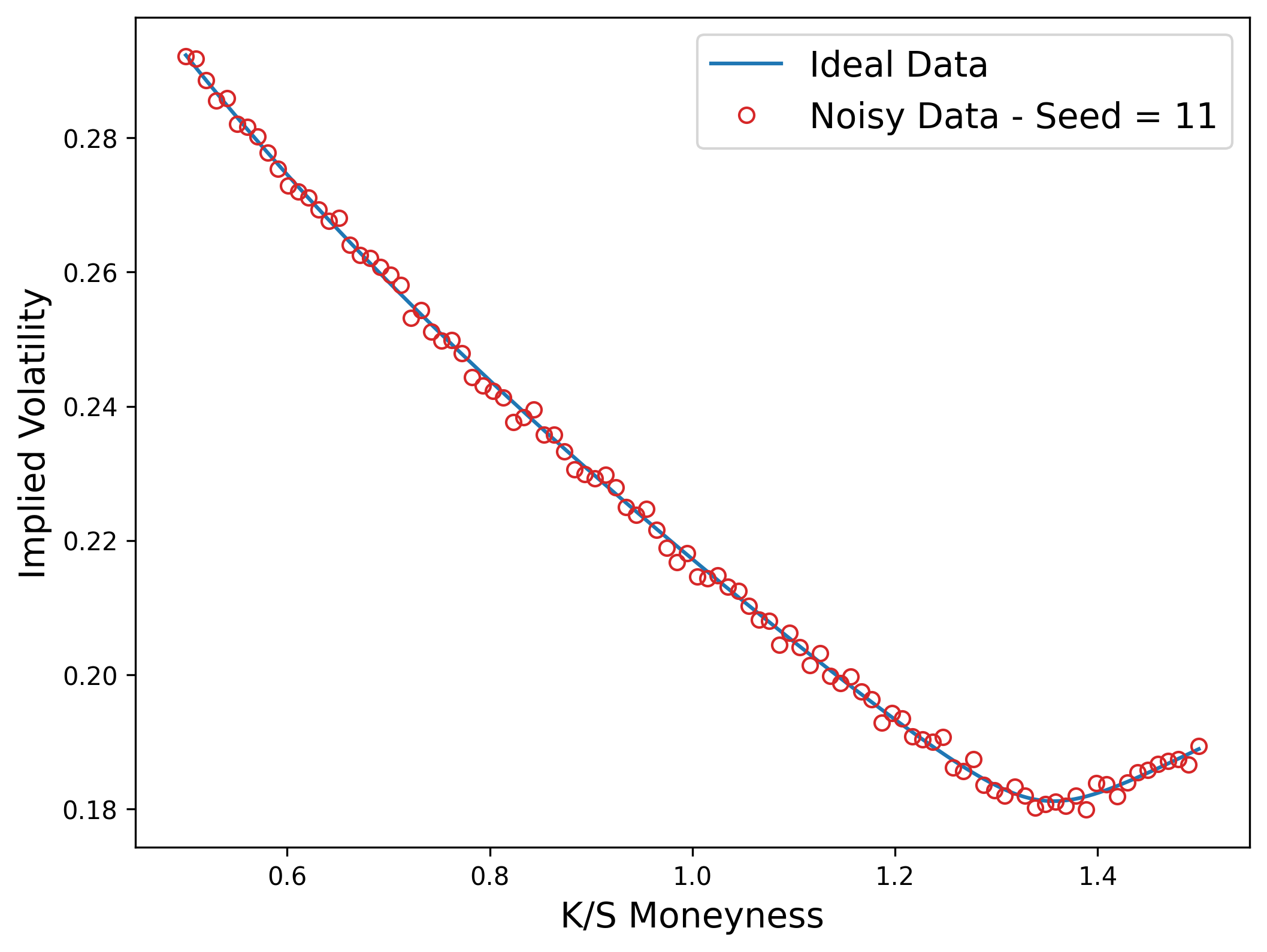}
\end{minipage}
\caption{SVI IV Curve}
\label{fig:SVI_IV}
\end{figure}

Through the Black-Scholes formula, we calculated the corresponding European call option prices from both ideal and noisy data. In Figure \ref{fig:SVI_Market_C_kk_Ideal} and \ref{fig:SVI_Market_C_kk_Noisy}, we compared the second-order strike derivative, $\frac{\partial^2 C}{\partial k^2}$, calculated from those prices. As an important component in the Dupire formula, a stable $\frac{\partial^2 C}{\partial k^2}$ is of vital importance for a stable and smooth LV function. However, as we demonstrated, even minimal noise in market observables can lead to substantial fluctuations in the computed partial derivatives. 

Moreover, without proper pre-calibration smoothing, the calibrated LV function will try its best to match this instability. This is because most of the objective functions are solely focused on the price difference, which means the LV function whose price function's $\frac{\partial^2 C}{\partial k^2}$ is closest to this spiky one is preferred. We will further demonstrate this problem later.

\begin{figure}[H]
\centering
\begin{minipage}{0.45\linewidth}
  \centering
  \includegraphics[width=\linewidth]{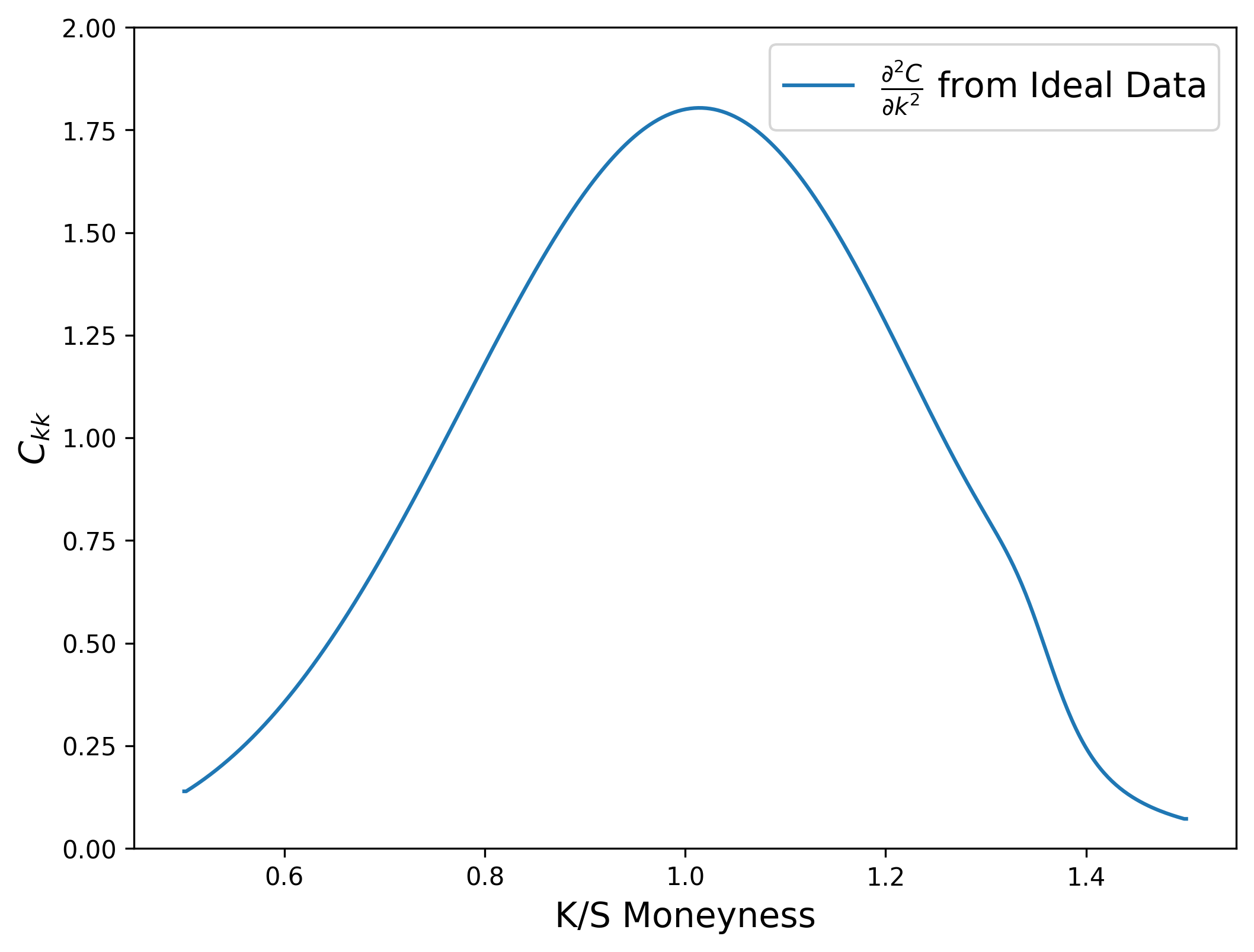}
  \caption{Market Price $\frac{\partial^2 C}{\partial k^2}$ from Ideal Data}
  \label{fig:SVI_Market_C_kk_Ideal}
\end{minipage}
\hfill
\begin{minipage}{0.45\linewidth}
  \centering
  \includegraphics[width=\linewidth]{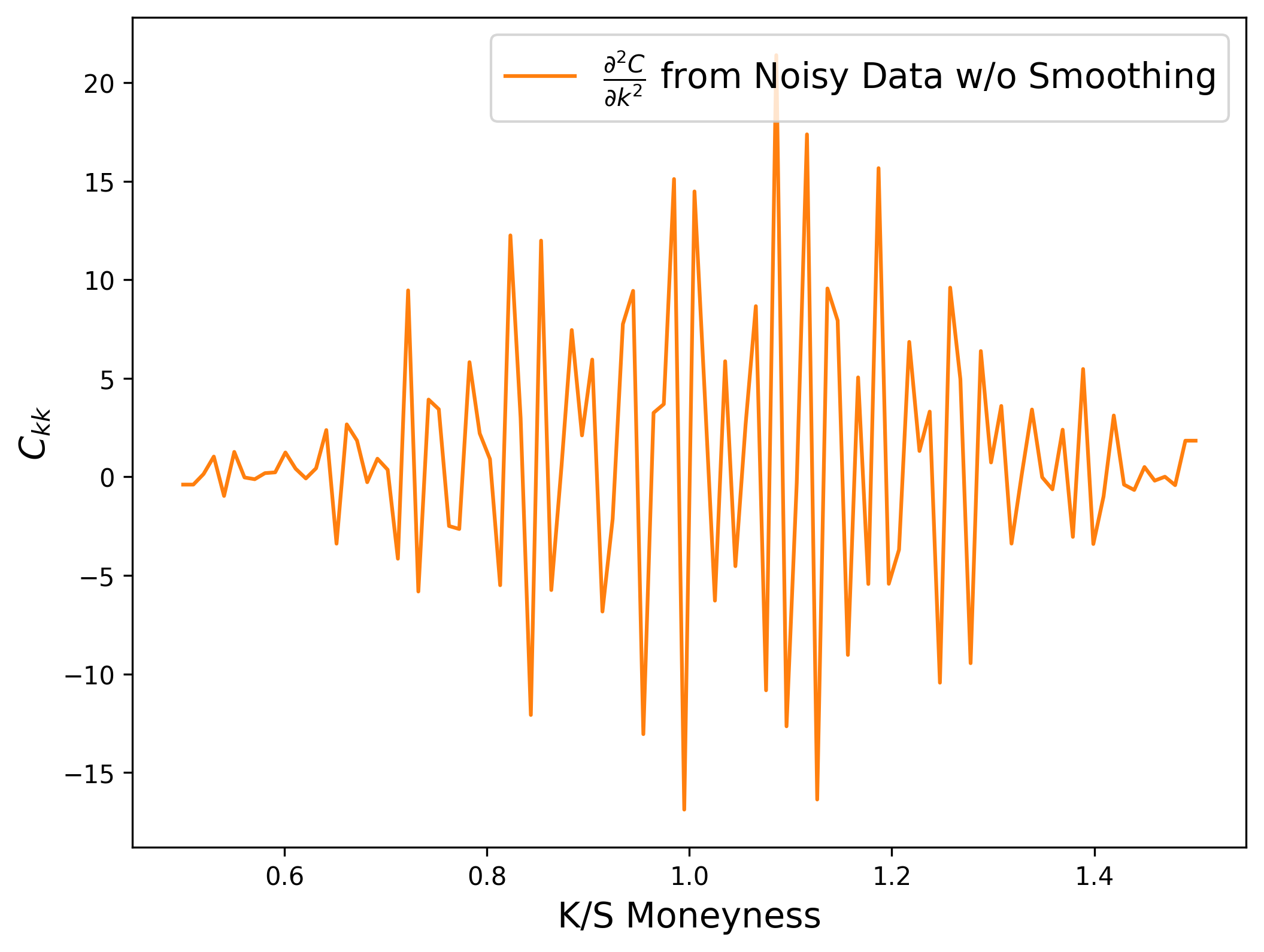}
  \caption{Market Price $\frac{\partial^2 C}{\partial k^2}$ from Noisy Data}
  \label{fig:SVI_Market_C_kk_Noisy}
\end{minipage}
\end{figure}

In the following paragraphs, we first apply a standard F-D based LV fitting method directly to both the ideal and noisy data and illustrate the resulting issues. We then apply our proposed method prior to the F-D fitting step, demonstrating how it effectively resolves those problems, and eventually lead to stable Greeks.

\subsubsection{Calibrate with only the F-D Based LV Fitting Method}

We begin by applying the F-D based LV fitting method to both the ideal and noisy data. Our numerical results in Figure \ref{fig:SVI_LV_FD_one_seed} show that the LV function generated from the noisy data inherits the fluctuations in the partial derivatives, becoming spiky. In contrast, the LV function generated from the ideal data remains smooth.

\begin{figure}[H]
\centering
\begin{minipage}{0.45\textwidth}
  \centering
  \includegraphics[width=\linewidth]{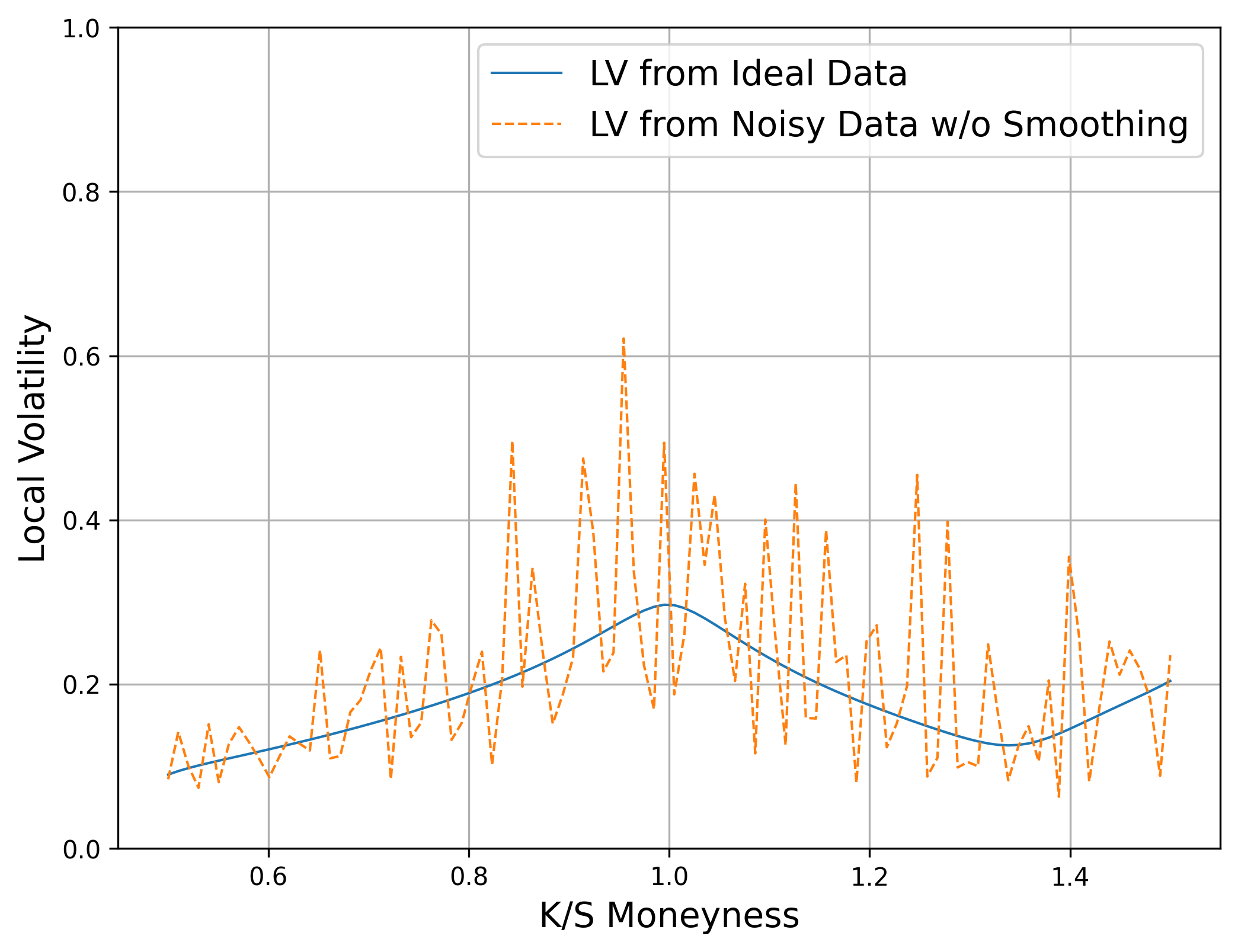}
\end{minipage}
\caption{SVI LV Curve}
\label{fig:SVI_LV_FD_one_seed}
\end{figure}

To quantify the discrepancy between the model and the market, we calculate the calibration error as the average absolute relative difference between the model IV and the market IV, where the model IV is calculated using the calibrated LV function to price European call options and then solve for the IV:

\begin{equation}
\text{Calibration Error} = \frac{1}{N} \sum_{i=1}^{N} \left| \frac{\sigma_i^{\text{model}} - \sigma_i^{\text{market}}}{\sigma_i^{\text{market}}} \right| \times 100\%
\end{equation}

Notably, despite the spikes observed in the LV function from the noisy data, the overall calibration error remains small, and the arbitrage-free condition is still satisfied. This suggests that a small calibration error alone does not guarantee a smooth LV function. In other words, solely focusing on minimizing the price discrepancy is insufficient to ensure smoothness, as the generated LV function may still exhibit undesirable spikes.

\begin{table}[H]
\centering
\textbf{Absolute Calibration Error in \% for Noisy Data} \\[1ex]
\begin{tabular}{|c|c|c|c|}
\hline
             & Moneyness $< 0.95$ &  $0.95 <$ Moneyness $< 1.05$ & $1.05<$ Moneyness \\ \hline
Error in \%  &          0.21\%      &                   0.34\%         &               0.17\% \\ \hline
\end{tabular}
\caption {Absolute Calibration Error in \% for Noisy Data}
\end{table}

Moreover, as demonstrated in Figure \ref{fig:SVI_LV_FD_two_seed}, by altering the random seed of the noisy data, and repeat the LV calibration process, we further demonstrate that the LV function generated through F-D based LV fitting method is unstable when small changes occur in the input data. We conclude that, in the presence of random market noise, directly applying this method can lead to unpredictable and undesirable spikes in the LV function. 

\begin{figure}[H]
\centering
\begin{minipage}{0.45\textwidth}
  \centering
  \includegraphics[width=\linewidth]{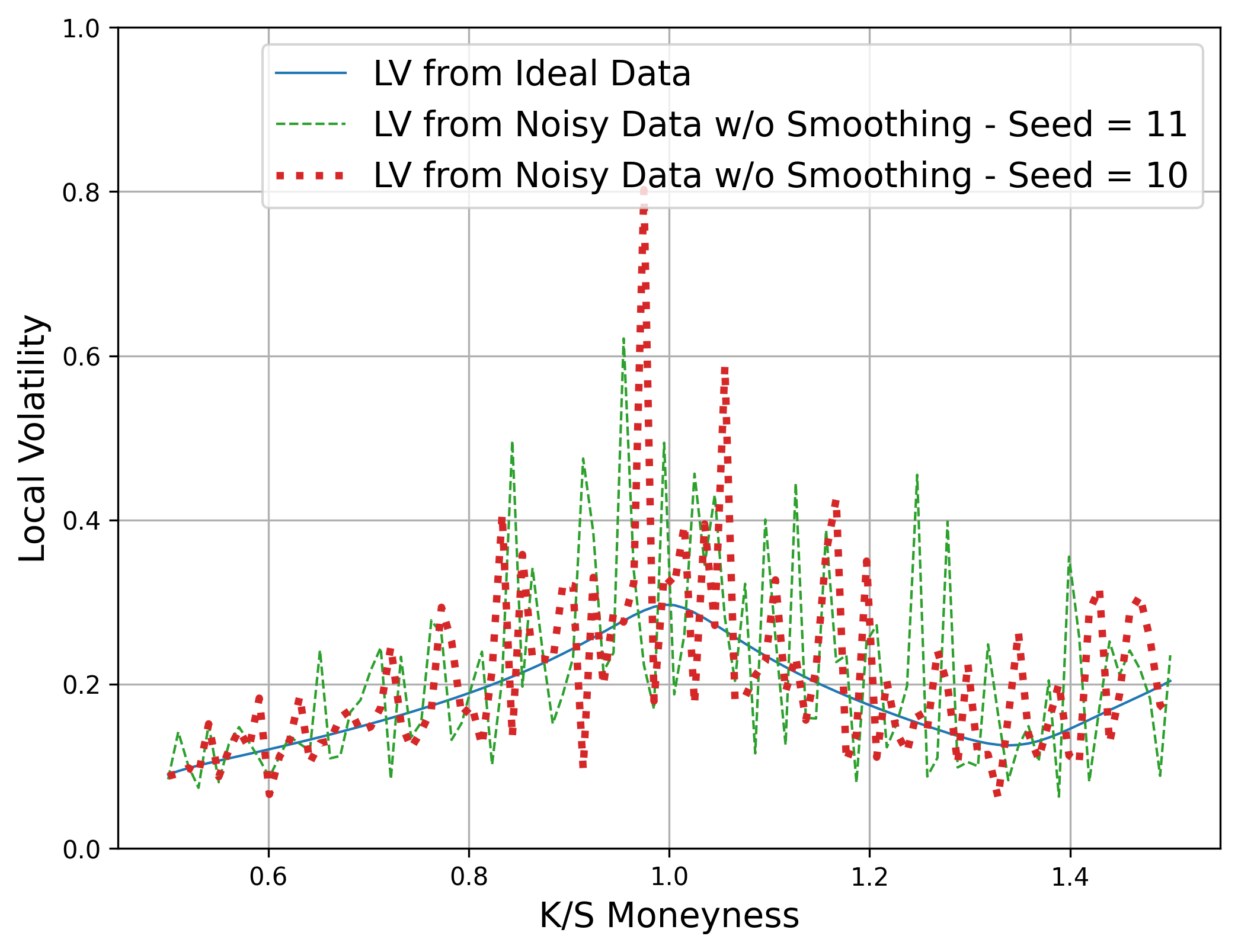}
\end{minipage}
\caption{SVI LV Curve with Different Random Seed}
\label{fig:SVI_LV_FD_two_seed}
\end{figure}

As discussed earlier, without proper pre-calibration smoothing, the instability in market observable will be inherited by the calibrated LV function. Here we price the European call options using both the spiky and smooth LV function, and then calculate the $\frac{\partial^2 C}{\partial k^2}$. As we demonstrated in Figure \ref{fig:SVI_Model_C_kk_Ideal} and \ref{fig:SVI_Model_C_kk_Noisy}, the $\frac{\partial^2 C}{\partial k^2}$ from spiky LV function's model price function inherit the instability in Figure \ref{fig:SVI_Market_C_kk_Noisy}. 

\begin{figure}[H]
\centering
\begin{minipage}{0.45\linewidth}
  \centering
  \includegraphics[width=\linewidth]{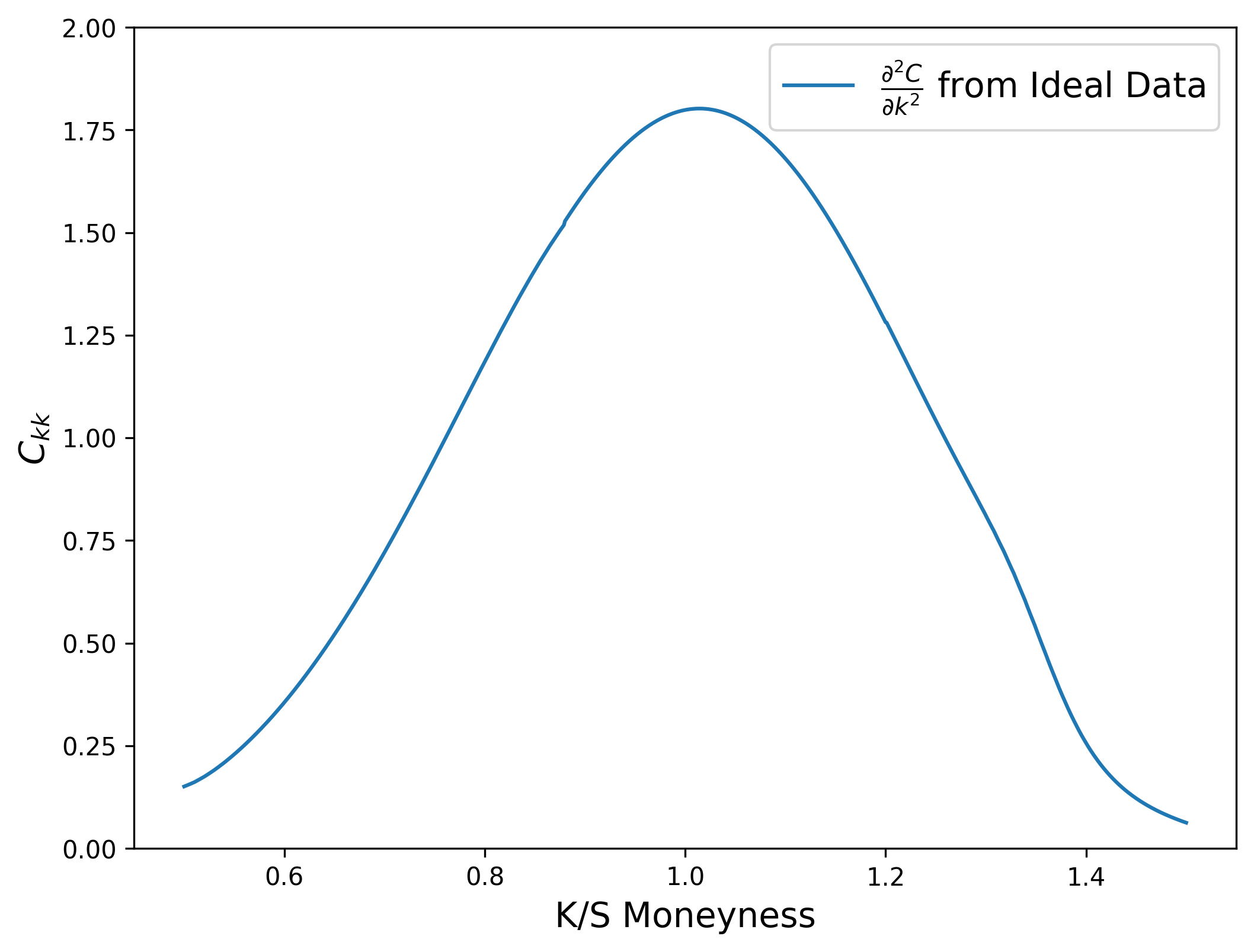}
  \caption{Model Price $\frac{\partial^2 C}{\partial k^2}$ from Ideal Data}
  \label{fig:SVI_Model_C_kk_Ideal}
\end{minipage}
\hfill
\begin{minipage}{0.45\linewidth}
  \centering
  \includegraphics[width=\linewidth]{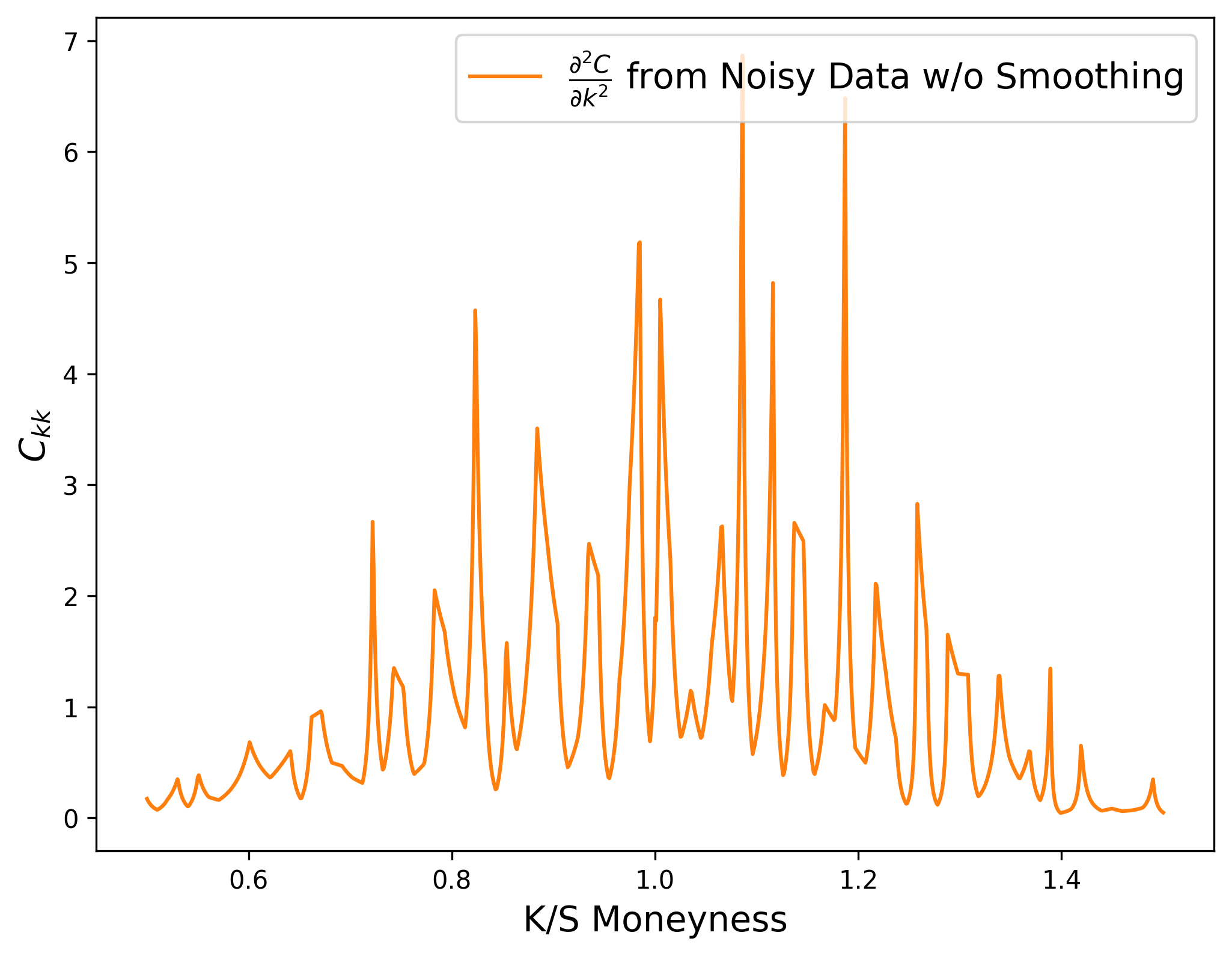}
  \caption{Model Price $\frac{\partial^2 C}{\partial k^2}$ from Noisy Data}
  \label{fig:SVI_Model_C_kk_Noisy}
\end{minipage}
\end{figure}

Furthermore, to illustrate the unreliable Greeks problem, we calculate the Delta and Gamma using both ideal and noisy data under the sticky-strike assumption with different spot price levels for a selected European call option. Specifically, we here assume the spot price $S_0$ varies from $0.5$ to $1.5$, while the strike is fixed at $1$. For every spot level, we recalibrate the corresponding LV function using ideal and noisy data, and calculate the Greeks using the generated LV functions, respectively. As demonstrated in Figure \ref{fg:SVI_Delta_no_smoothing} and \ref{fg:SVI_Gamma_no_smoothing}, without proper pre-calibration smoothing, the Gamma value from the noisy data is unreliable compared to the Gamma value from the ideal data. Though the Delta value is relatively reliable due to the simplicity of the contract.

\begin{figure}[H]
\centering
\begin{minipage}{0.45\linewidth}
  \centering
  \includegraphics[width=\linewidth]{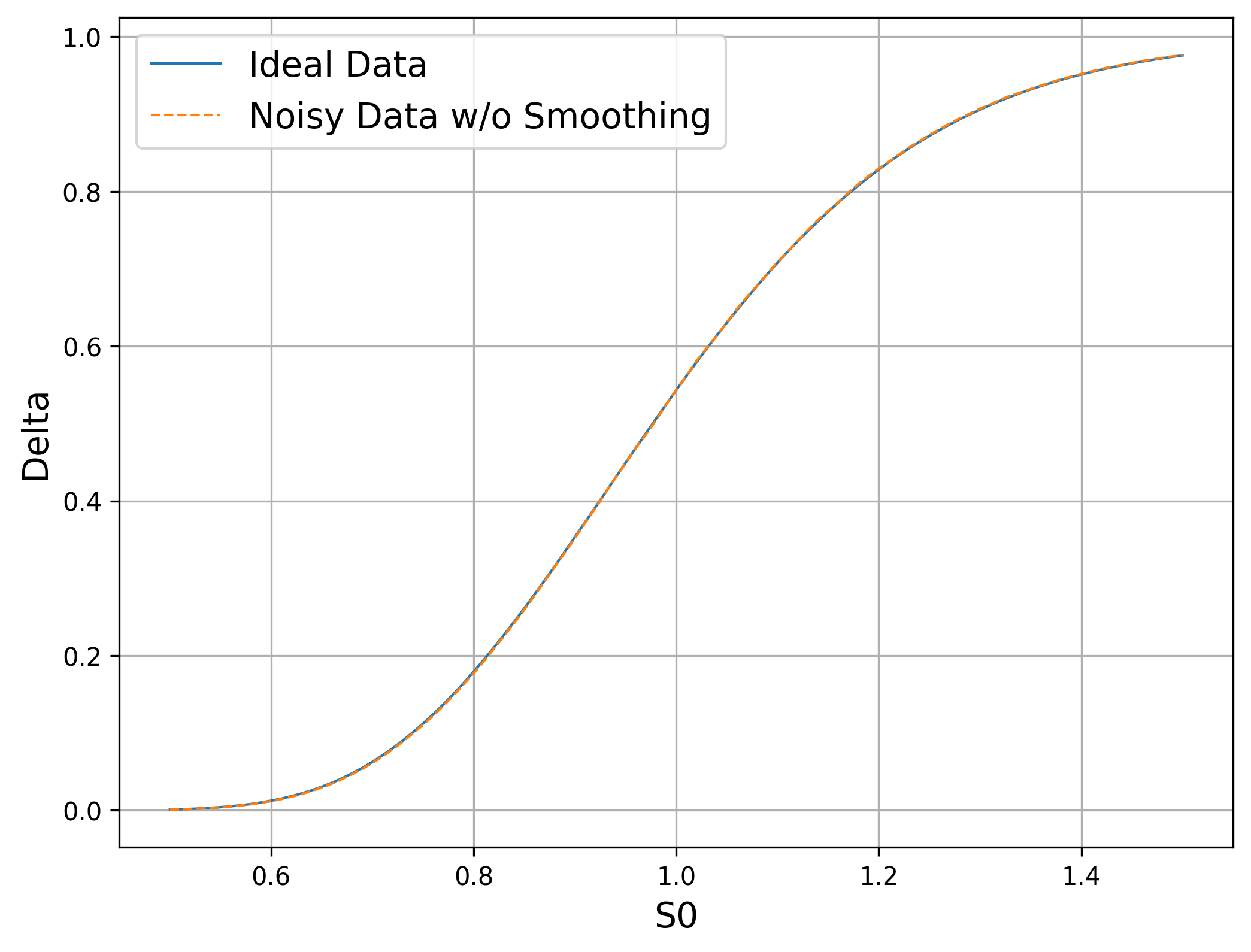}
  \caption{Delta Value}
  \label{fg:SVI_Delta_no_smoothing}
\end{minipage}
\hfill
\begin{minipage}{0.45\linewidth}
  \centering
\includegraphics[width=\linewidth]{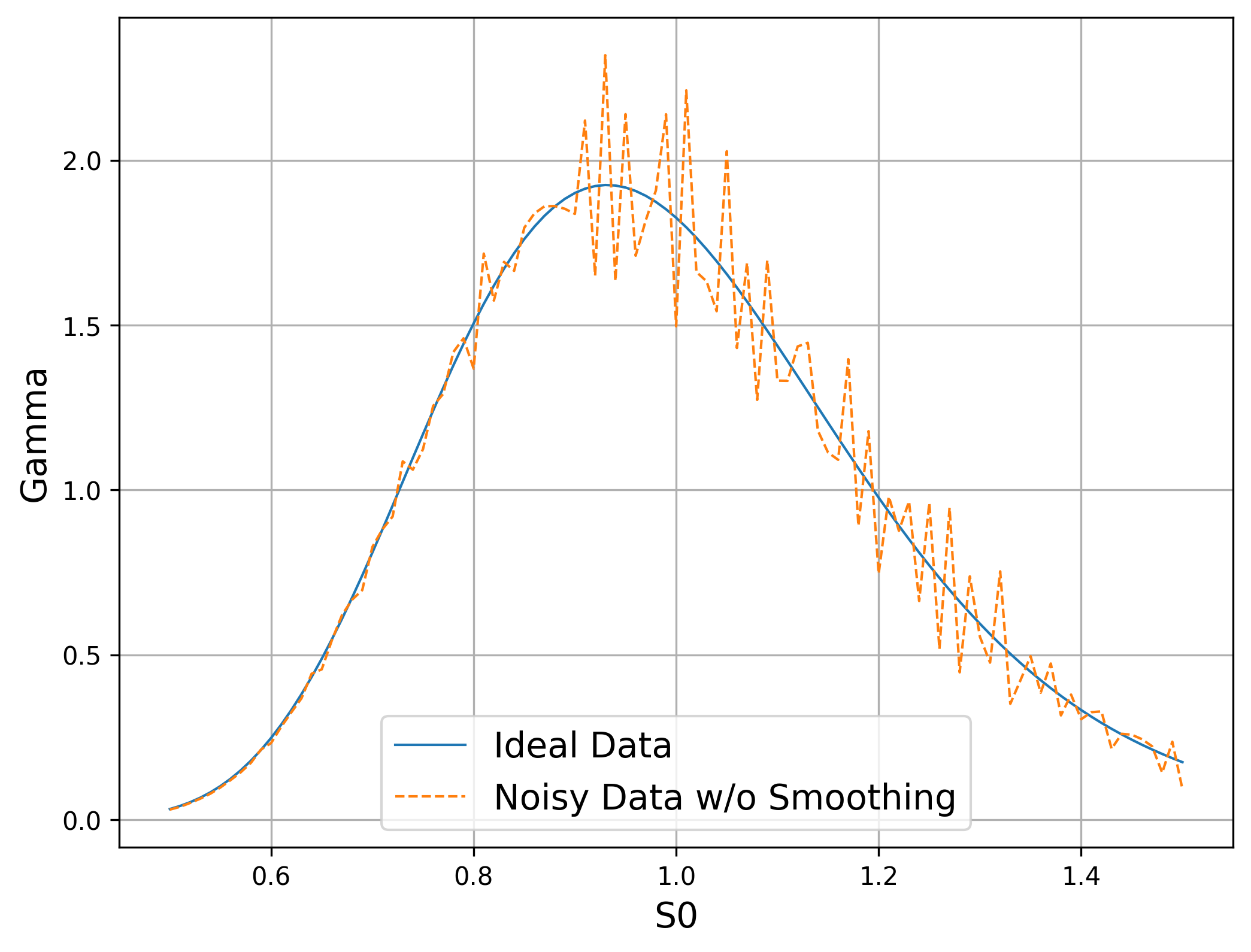}
  \caption{Gamma Value}
  \label{fg:SVI_Gamma_no_smoothing}
\end{minipage}
\end{figure}

\subsubsection{Calibrate with the Proposed Method}

Using the identical SVI market data described above, we demonstrate the effectiveness of our proposed methodology through a systematic two-stage process. First, we apply our pre-calibration smoothing technique to process the noisy data. Second, we pass the denoised data to a standard F-D based LV fitting method to generate a smooth and robust LV function while preserving both accuracy and arbitrage-free conditions. We then conduct a comprehensive comparison between the model price functions and Greeks obtained through our method against their true values derived from ideal data, demonstrating that our approach successfully filters out noise and generates a desirable arbitrage-free LV function that ensures a stable model price function and Greeks.

The effectiveness of our proposed method is clearly evidenced through examination of the second-order strike derivatives ($\frac{\partial^2 C}{\partial k^2}$) of the European call options. Using the same setup as for Figure \ref{fig:SVI_Market_C_kk_Ideal} and \ref{fig:SVI_Market_C_kk_Noisy}, we compute these numerical derivatives from both the original noisy data and the data processed through our proposed method. As demonstrated in Figure \ref{fig:SVI_Ckk_Proposed_one_seed}, the comparison reveals that the results obtained after applying our method exhibit significantly improved stability and demonstrate close alignment with the theoretical partial derivatives computed from ideal data.

\begin{figure}[H]
\centering
\begin{minipage}{0.45\textwidth}
  \centering
  \includegraphics[width=\linewidth]{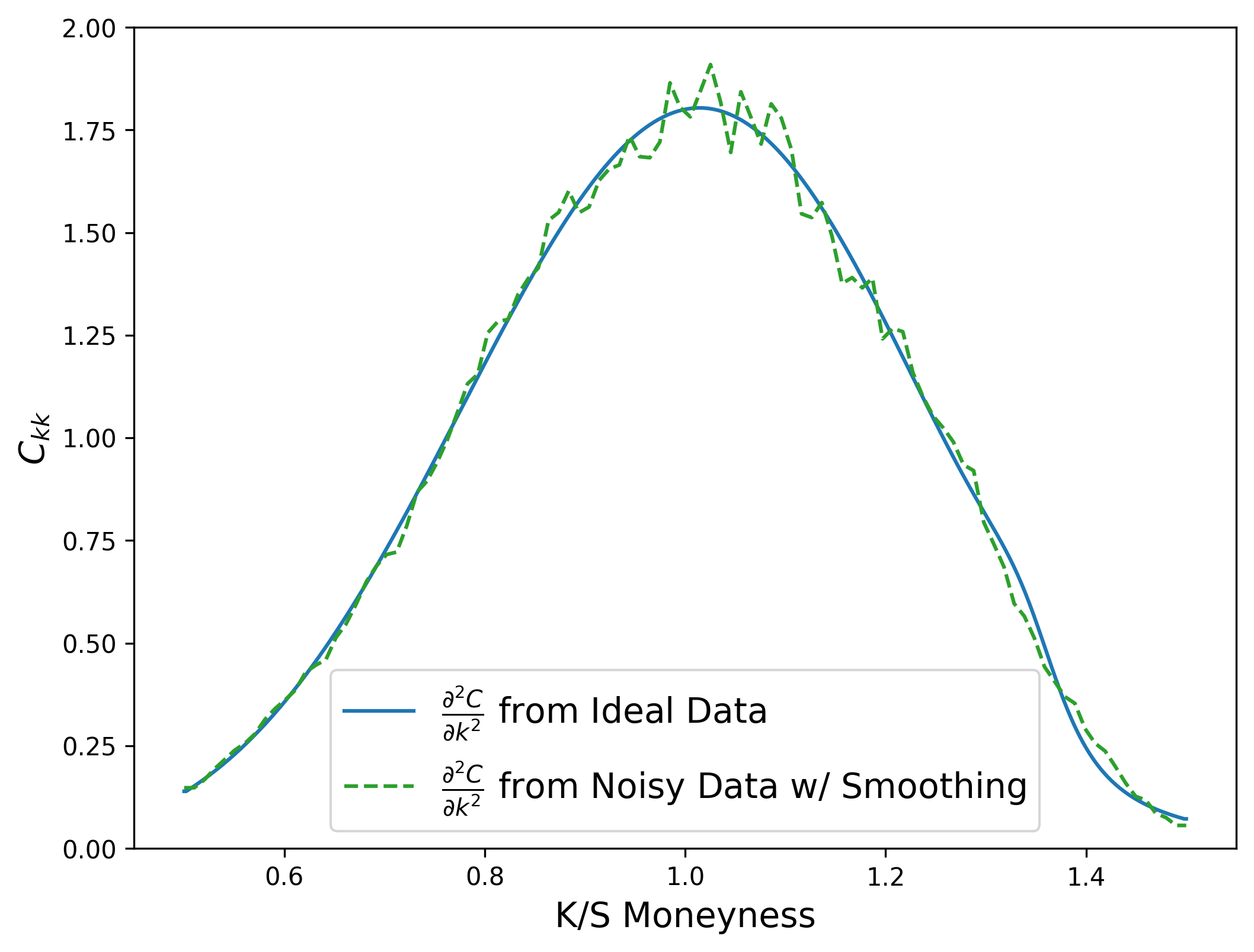}
\end{minipage}
\caption{Market Price $\frac{\partial^2 C}{\partial k^2}$ from Noisy Data with Proposed Method}
\label{fig:SVI_Ckk_Proposed_one_seed}
\end{figure}

We then apply the F-D based LV fitting method to find the optimal LV function whose model price function best approximates the processed noisy data. As demonstrated in Figure \ref{fig:SVI_LV_Proposed_one_seed}, the generated LV function is much smoother and closer to the LV function from the ideal data. Additionally, the calibration accuracy is also improved. 

\begin{figure}[H]
\centering
\begin{minipage}{0.45\textwidth}
  \centering
  \includegraphics[width=\linewidth]{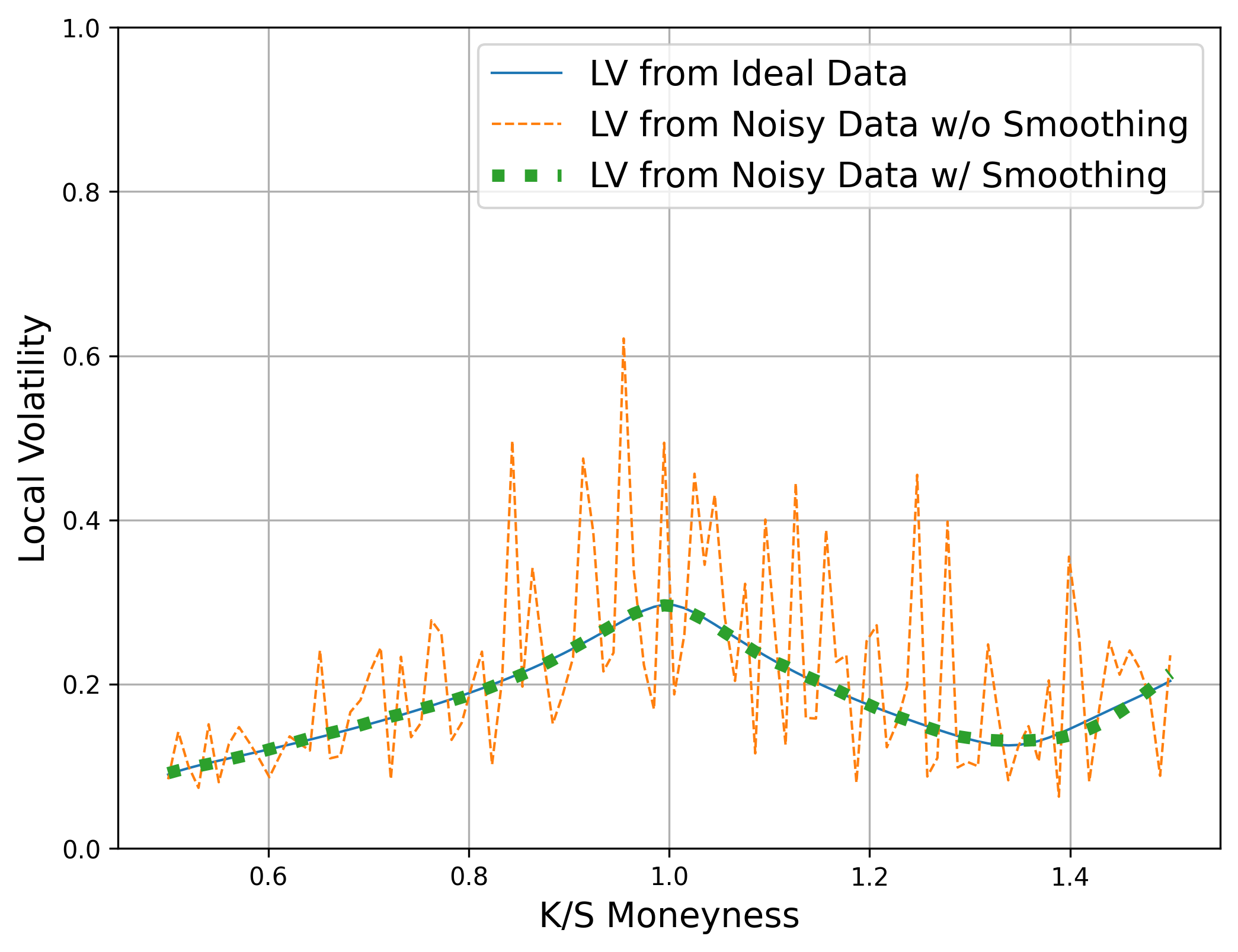}
\end{minipage}
\caption{SVI LV Curve}
\label{fig:SVI_LV_Proposed_one_seed}
\end{figure}

\begin{table}[H]
\centering
\textbf{Absolute Calibration Error in \% for Noisy Data with Proposed Method} \\[1ex]
\begin{tabular}{|c|c|c|c|}
\hline
             & Moneyness $< 0.95$ &  $0.95 <$ Moneyness $< 1.05$ & $1.05<$ Moneyness \\ \hline
Error in \%  &          0.11\%      &                   0.18\%         &               0.12\% \\ \hline
\end{tabular}
\caption {Absolute Calibration Error in \% for Noisy Data with Proposed Method}
\end{table}

We also apply our method to two sets of noisy data generated with different random seeds. As demonstrated in Figure \ref{fig:SVI_LV_Proposed_two_seed} our numerical result shows that the generated LV functions are similar to each other, which emphasizes the robustness of our method for LV model implementation.

\begin{figure}[H]
\centering
\begin{minipage}{0.45\textwidth}
  \centering
  \includegraphics[width=\linewidth]{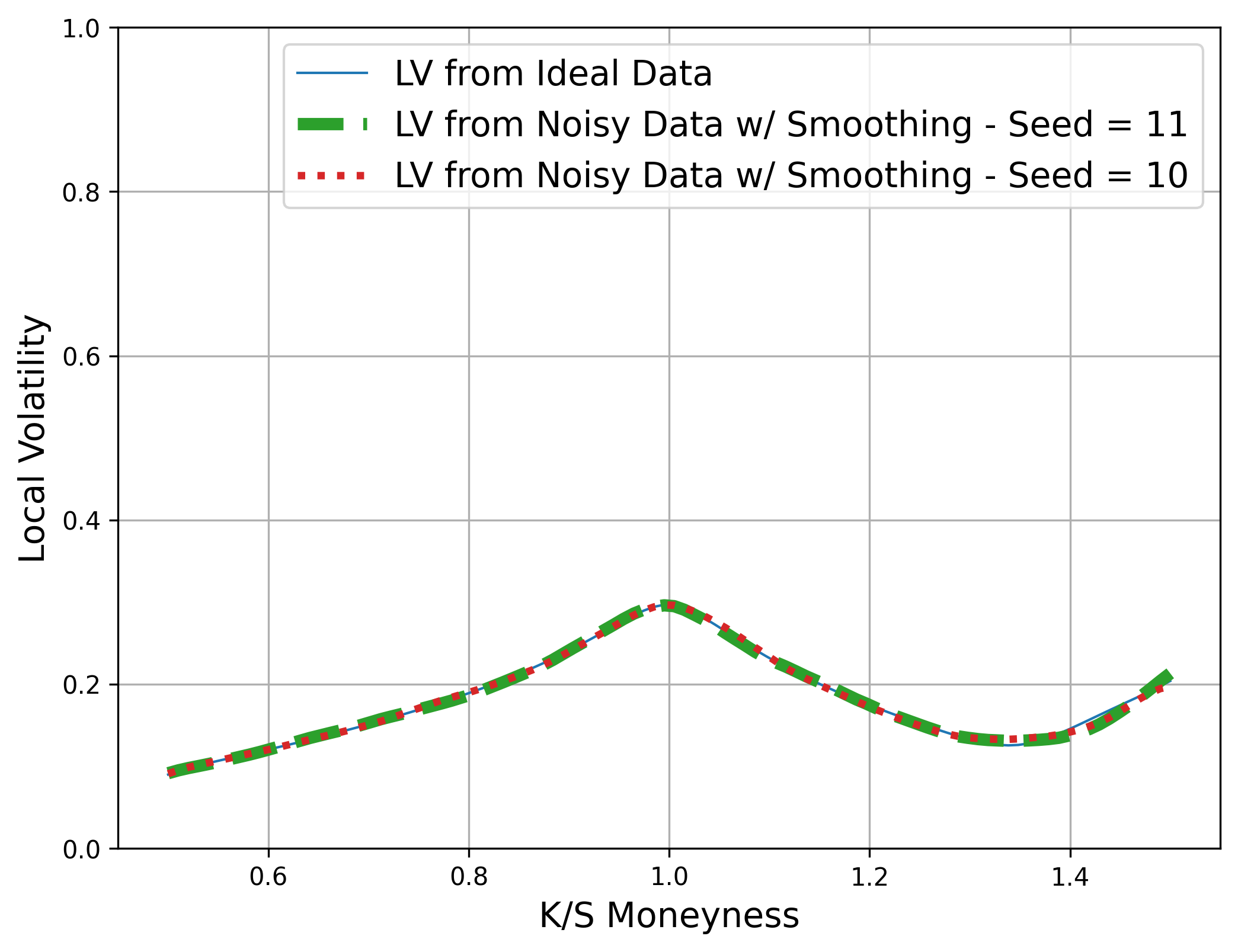}
\end{minipage}
\caption{SVI LV Curve with Different Seeds}
\label{fig:SVI_LV_Proposed_two_seed}
\end{figure}

We now investigate the numerical stability issues of the model price function and the Greeks. As we already demonstrated in the previous numerical experiments, our method efficiently reduces the noise-induced irregularities in the generated LV function and can generate a smooth LV function. Based on this smooth LV function, we now produce the $\frac{\partial^2 C}{\partial k^2}$ using the same setup as for Figure \ref{fig:SVI_Model_C_kk_Ideal} and \ref{fig:SVI_Model_C_kk_Noisy}. We show that by applying our method, as demonstrated in Figure \ref{fig:output_C_kk_Proposed}, we can now get a stable and close to true value $\frac{\partial^2 C}{\partial k^2}$ from our market price function.

\begin{figure}[H]
\centering
\begin{minipage}{0.45\textwidth}
  \centering
  \includegraphics[width=\linewidth]{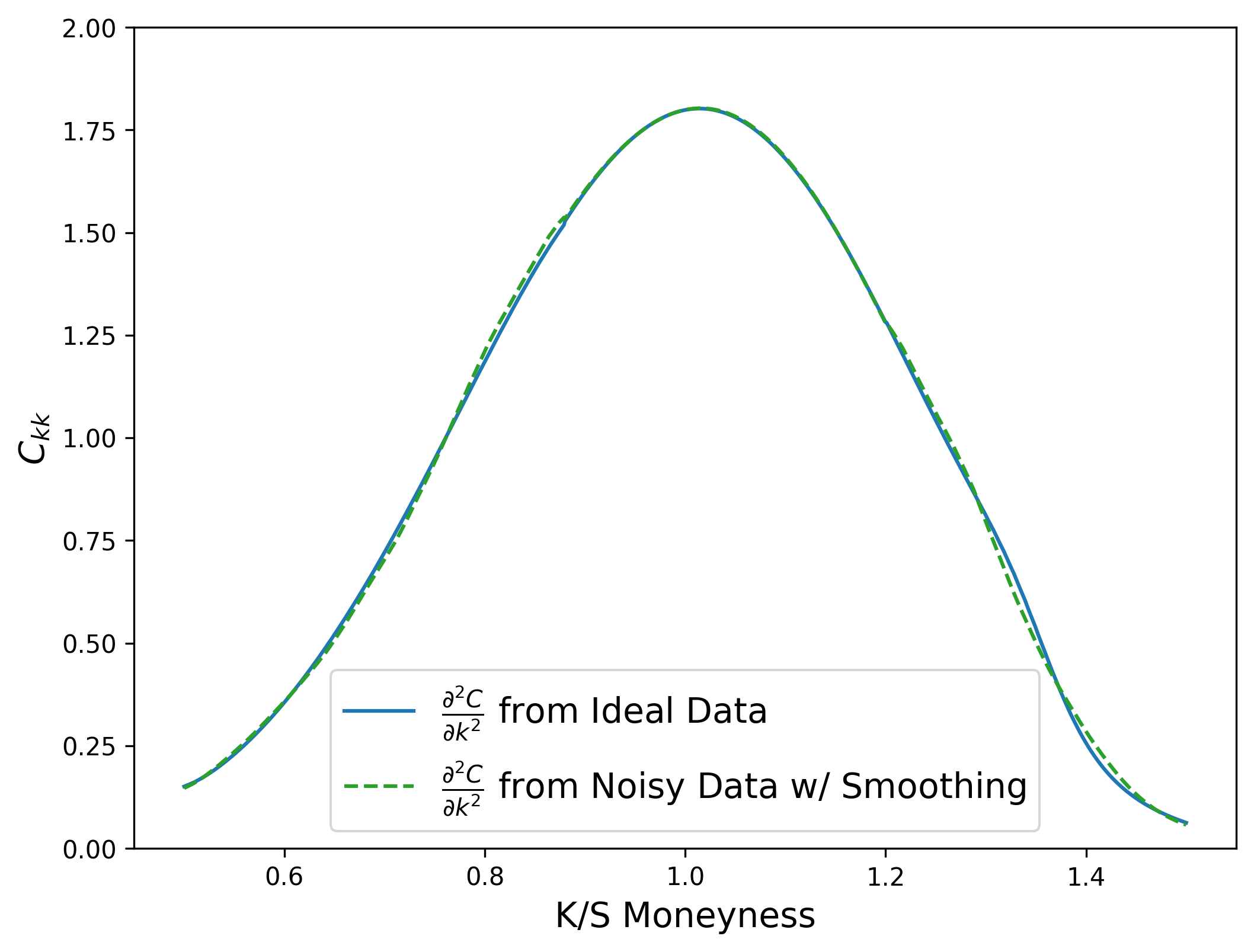}
\end{minipage}
  \caption{Model Price $\frac{\partial^2 C}{\partial k^2}$ from Noisy Data with Proposed Method}
\label{fig:output_C_kk_Proposed}
\end{figure}

Furthermore, to illustrate the unreliable Greeks problem, we calculate the Delta and Gamma using the same setup as for Figure \ref{fg:SVI_Delta_no_smoothing} and \ref{fg:SVI_Gamma_no_smoothing}. As demonstrated in Figure \ref{fg:SVI_Delta_smoothing} and \ref{fg:SVI_Gamma_smoothing}, both our method and the direct LV fitting method produce reliable Delta. However, only our proposed method yields a reliable Gamma. This finding highlights the importance of our proposed method.

\begin{figure}[H]
\centering
\begin{minipage}{0.45\linewidth}
  \centering
  \includegraphics[width=\linewidth]{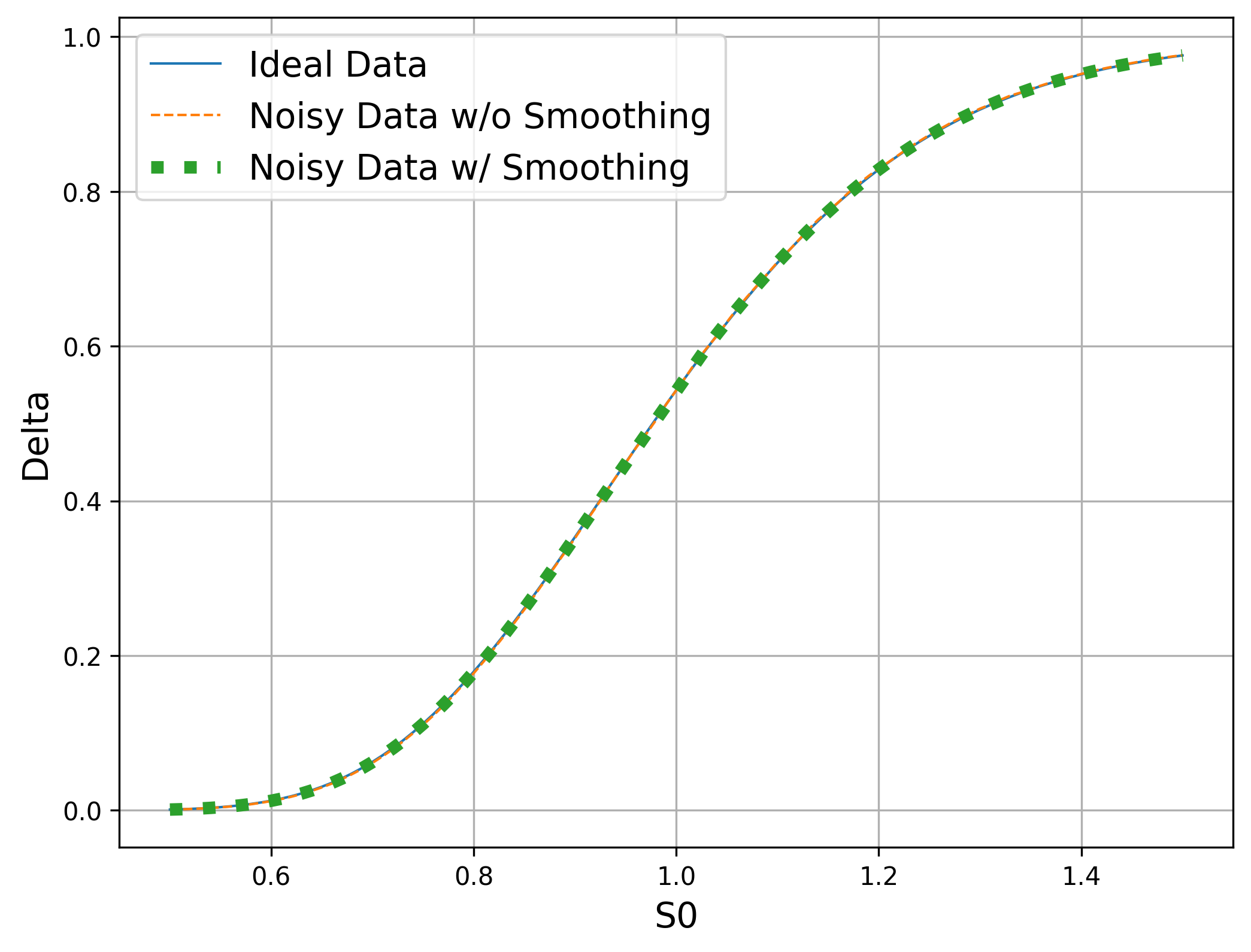}
  \caption{Delta Value}
  \label{fg:SVI_Delta_smoothing}
\end{minipage}
\hfill
\begin{minipage}{0.45\linewidth}
  \centering
  \includegraphics[width=\linewidth]{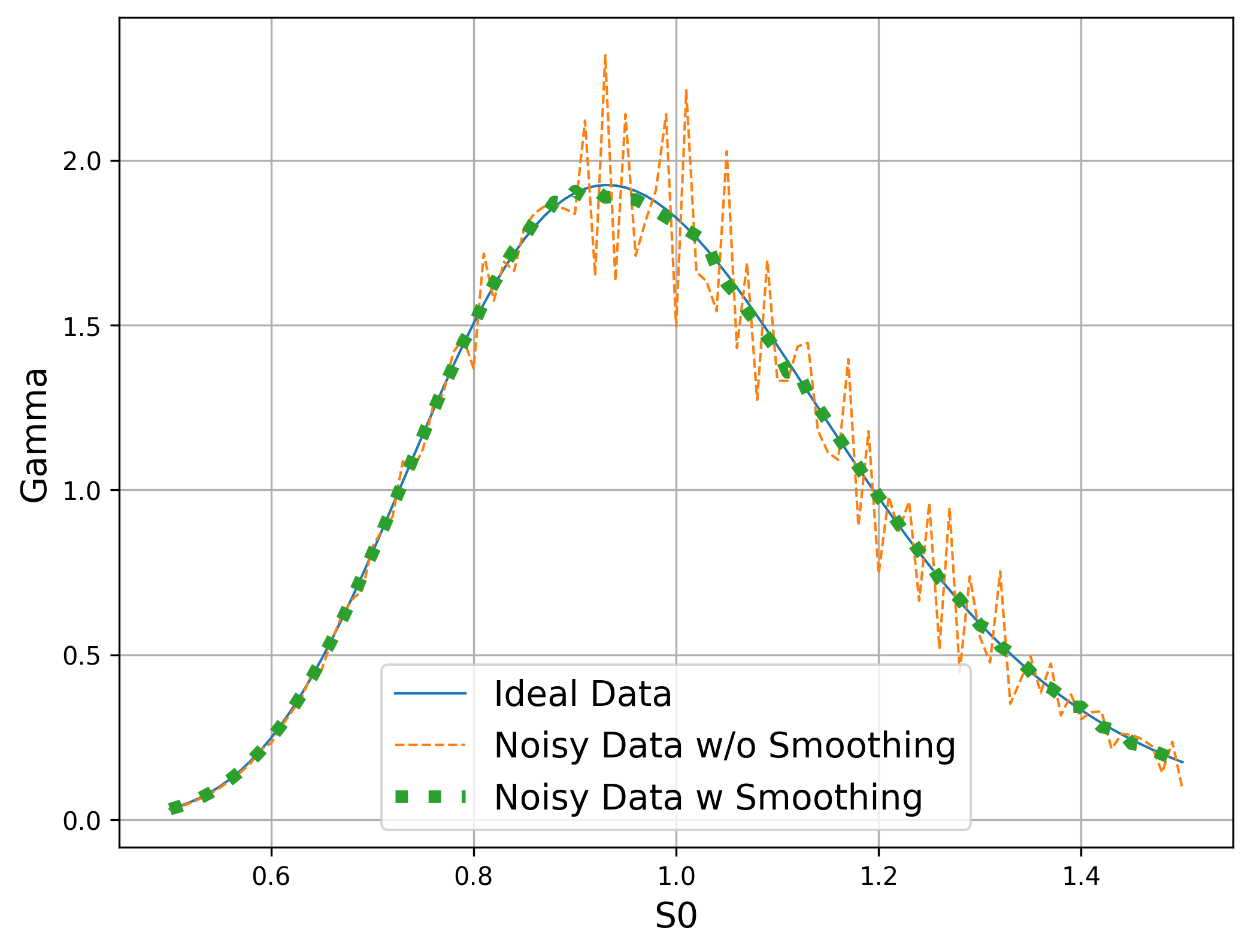}
  \caption{Gamma Value}
  \label{fg:SVI_Gamma_smoothing}
\end{minipage}
\end{figure}

Given that LV model plays a crucial role in the accurate pricing and risk management of exotic derivatives, we extend our analysis to demonstrate the practical significance of our proposed method in this context. To illustrate the impact on exotic option valuation, we conduct the same experimental procedure using an Asian arithmetic average call option, which represents a path-dependent derivative commonly encountered in financial markets.

An Asian arithmetic average call option is a path-dependent derivative whose payoff depends on the arithmetic average of the underlying asset price over a predetermined monitoring period. The payoff at maturity is given by:

\begin{equation}
\text{Payoff} = \max\left(\frac{1}{M}\sum_{i=1}^{M} S_{t_i} - K, 0\right)
\end{equation}

where $S_{t_i}$ represents the underlying asset price at monitoring time $t_i$, $M$ is the number of monitoring periods, and $K$ is the strike price. The option value is calculated as the risk-neutral expectation of the discounted payoff:

\begin{equation}
V_0 = e^{-rT} \mathbb{E}^{\mathbb{Q}}\left[\max\left(\frac{1}{M}\sum_{i=1}^{M} S_{t_i} - K, 0\right)\right]
\end{equation}

For this numerical experiment, we used the similar setup as for Figure \ref{fg:SVI_Delta_no_smoothing}, \ref{fg:SVI_Gamma_no_smoothing}, \ref{fg:SVI_Delta_smoothing} and \ref{fg:SVI_Gamma_smoothing}. For Asian option, we let $M = 12$, which means monthly monitoring. We here assume the spot price $S_0$ varies from $0.8$ to $1.2$, while the strike is fixed at $1$. Through Figure \ref{fig:SVI_Delta_smoothing_AC}, \ref{fig:SVI_Gamma_smoothing_AC2}, and \ref{fig:SVI_Gamma_smoothing_AC}, we can tell that our proposed method yields reliable Delta and Gamma. Compared to Figure \ref{fg:SVI_Delta_no_smoothing}, as the complicity of the contract increased, the stability issue of Delta value also appeared.

\begin{figure}[H]
\centering
\begin{minipage}{0.45\linewidth}
  \centering
  \includegraphics[width=\linewidth]{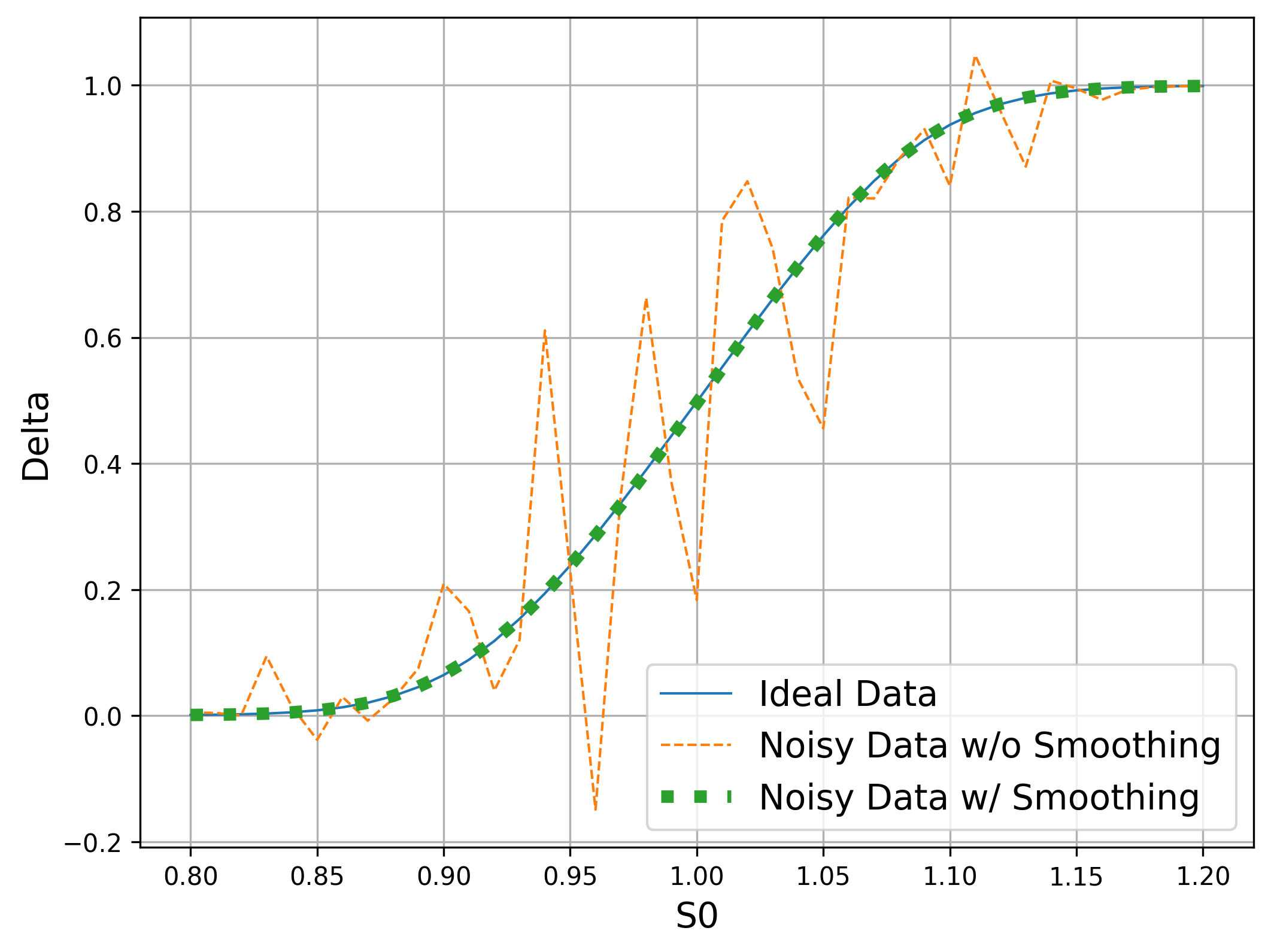}
  \caption{Delta Value}
  \label{fig:SVI_Delta_smoothing_AC}
\end{minipage}
\hfill
\begin{minipage}{0.45\linewidth}
  \centering
  \includegraphics[width=\linewidth]{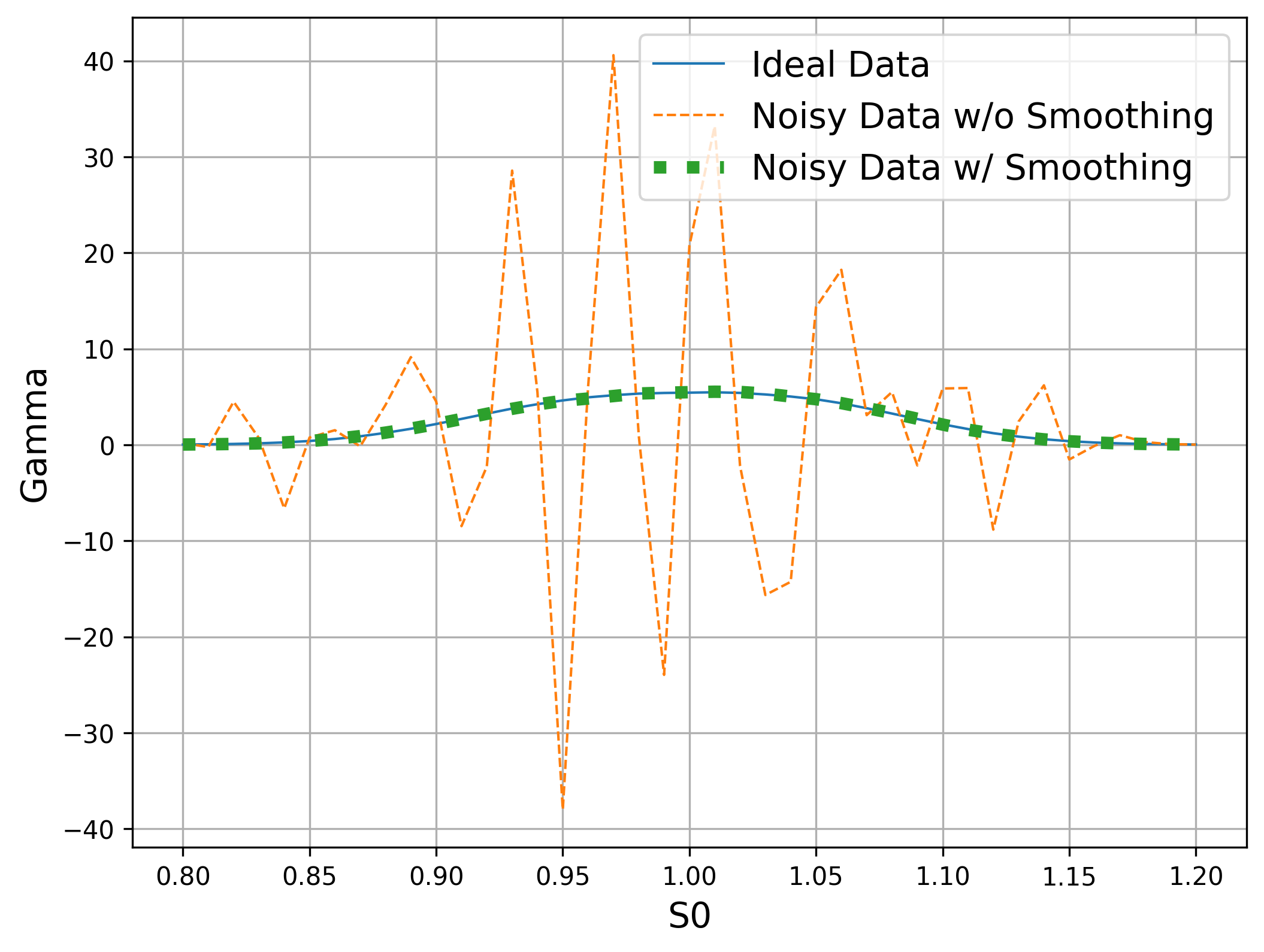}
  \caption{Gamma Value}
  \label{fig:SVI_Gamma_smoothing_AC}
\end{minipage}
\end{figure}

\begin{figure}[H]
\centering
\begin{minipage}{0.45\textwidth}
  \centering
  \includegraphics[width=\linewidth]{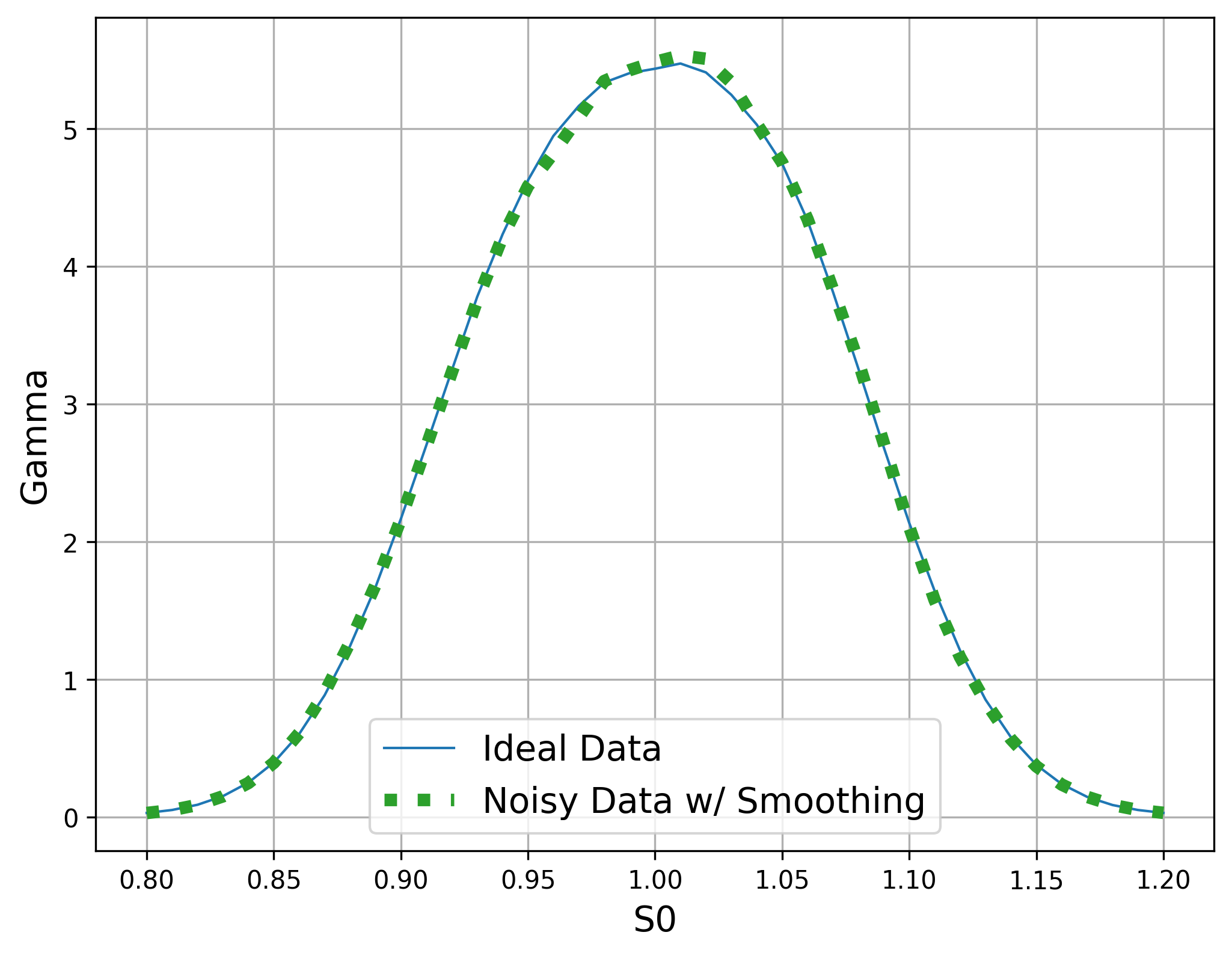}
\end{minipage}
  \caption{Gamma Value Zoomed in}
\label{fig:SVI_Gamma_smoothing_AC2}
\end{figure}

\subsection{Real-World W-Shaped Market}

We will now demonstrate that our method can process extreme and complicated market scenarios with high accuracy while maintaining smoothness. In \citep{alexiou2023pricing}, several extreme market scenarios are mentioned as examples of how the option market reacts to earnings announcement uncertainty. To test our method, we select the W-shaped IV surface of AAPL on October 28, 2013, one day before its quarterly earnings announcement, with data sourced from \textit{OptionMetrics}.

\subsubsection{Calibrate with only the F-D Based LV Fitting Method}

In this IV surface, the shortest time to maturity is four days. We take this IV curve from the IV surface to illustrate the issues. As demonstrated in Figure \ref{fig:AH_method_directly_LV}, if we apply only the F-D based LV fitting method on this IV curve, the calibrated LV function is spiky. Though as demonstrated in Figure \ref{fig:AH_method_directly_IV}, if we price European call options using this spiky LV function, and then invert the Black-Scholes formula to get the model IV, the model IV is closed align with the market IV and inside the spread.

\begin{figure}[H]
\centering
\begin{minipage}{0.45\linewidth}
  \centering
  \includegraphics[width=\linewidth]{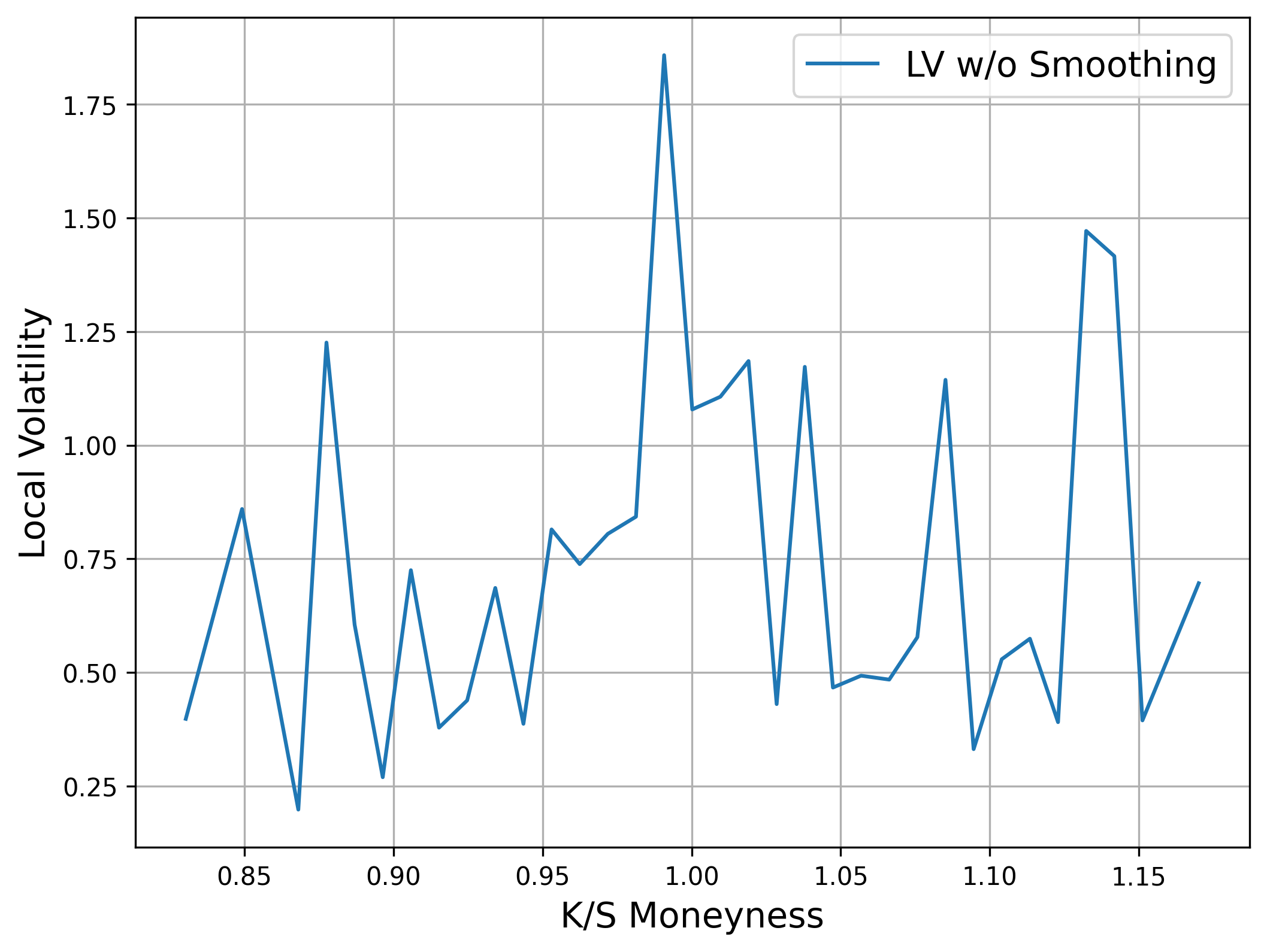}
  \caption{LV function from Direct LV Fitting Method}
  \label{fig:AH_method_directly_LV}
\end{minipage}
\hfill
\begin{minipage}{0.45\linewidth}
  \centering
  \includegraphics[width=\linewidth]{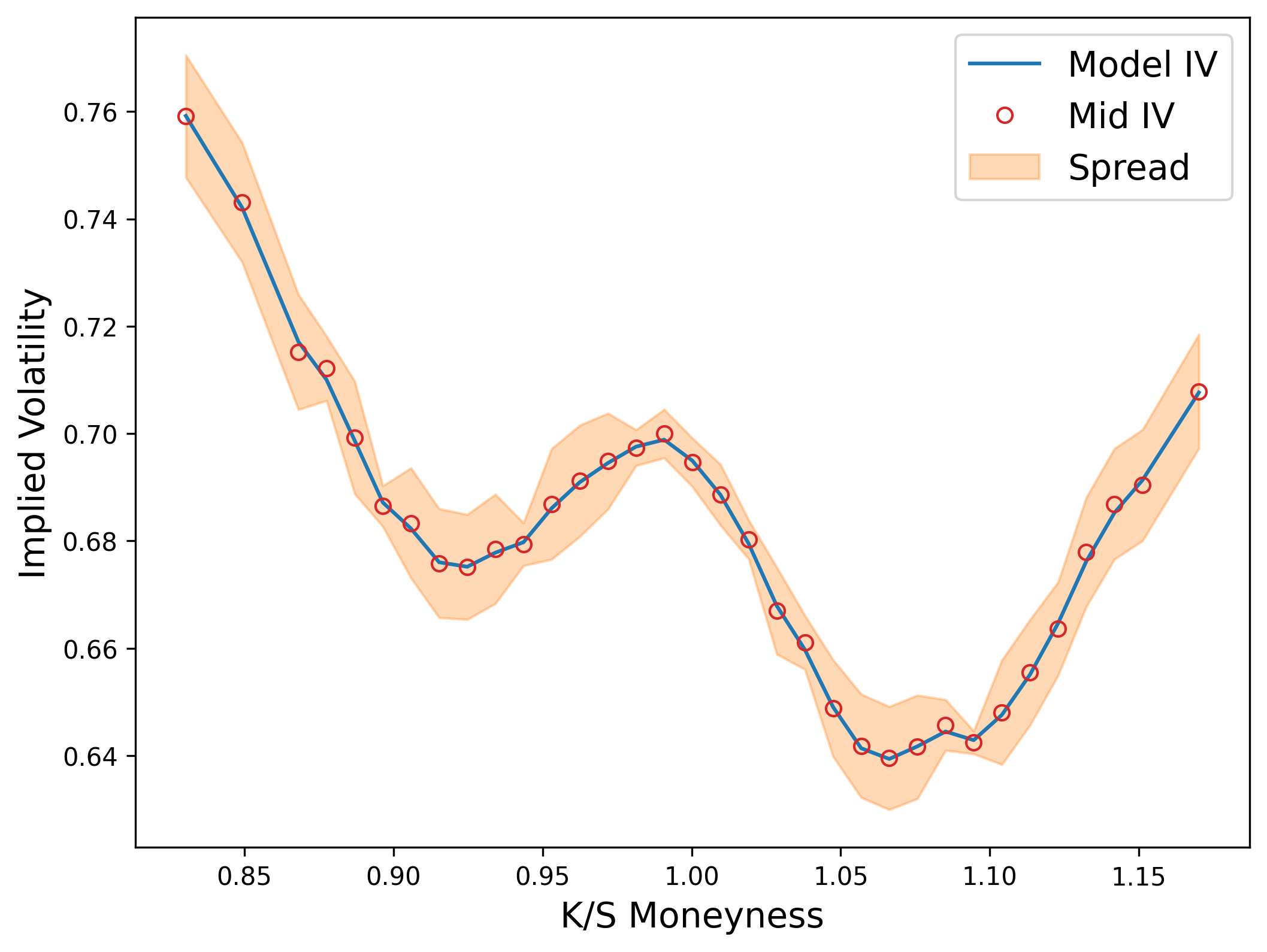}
  \caption{IV Curve from Direct LV Fitting Method}
  \label{fig:AH_method_directly_IV}
\end{minipage}
\end{figure}

\subsubsection{Calibrate with Regularization Method}

One popular option for enhancing the smoothness of the direct LV fitting method is to add a regularization term into the optimization function. As we demonstrated in previous sections, their objective function could usually be written as:

\begin{equation}
F(\theta) = \sum_{i} \left\| C(\theta, K_i) - C_{\text{market}}(K_i) \right\|^2 
+ \lambda \sum_{i} \left\| \frac{\partial^2 C}{\partial K^2} \Big|_{K=K_i} \right\|^2
\end{equation}

Intuitively, achieving the optimal trade-off between smoothness and accuracy is complex and may be unattainable due to the intricate nature of market conditions. As we previously highlighted, producing a smooth LV function necessitates a smooth second derivative, not merely a small second derivative. The distinction between smoothing and minimizing is crucial here; minimizing the second derivative does not inherently lead to a sufficiently smooth LV curve.

As in Figures \ref{fig:regularized_1e6_1}, \ref{fig:regularized_1e6_2}, \ref{fig:regularized_1e5_1}, \ref{fig:regularized_1e5_2}, \ref{fig:regularized_1e4_1}, and \ref{fig:regularized_1e4_2} below, when we increase the regularization strength $\lambda$, the reproduced IV curve deviates from the target market IV. This deviation occurs because the optimization no longer seeks the LV function that best matches the market IV, but rather balances between fit accuracy and smoothness.

\begin{figure}[H]
\centering
\begin{minipage}{0.45\linewidth}
    \centering
    \includegraphics[width=\linewidth]{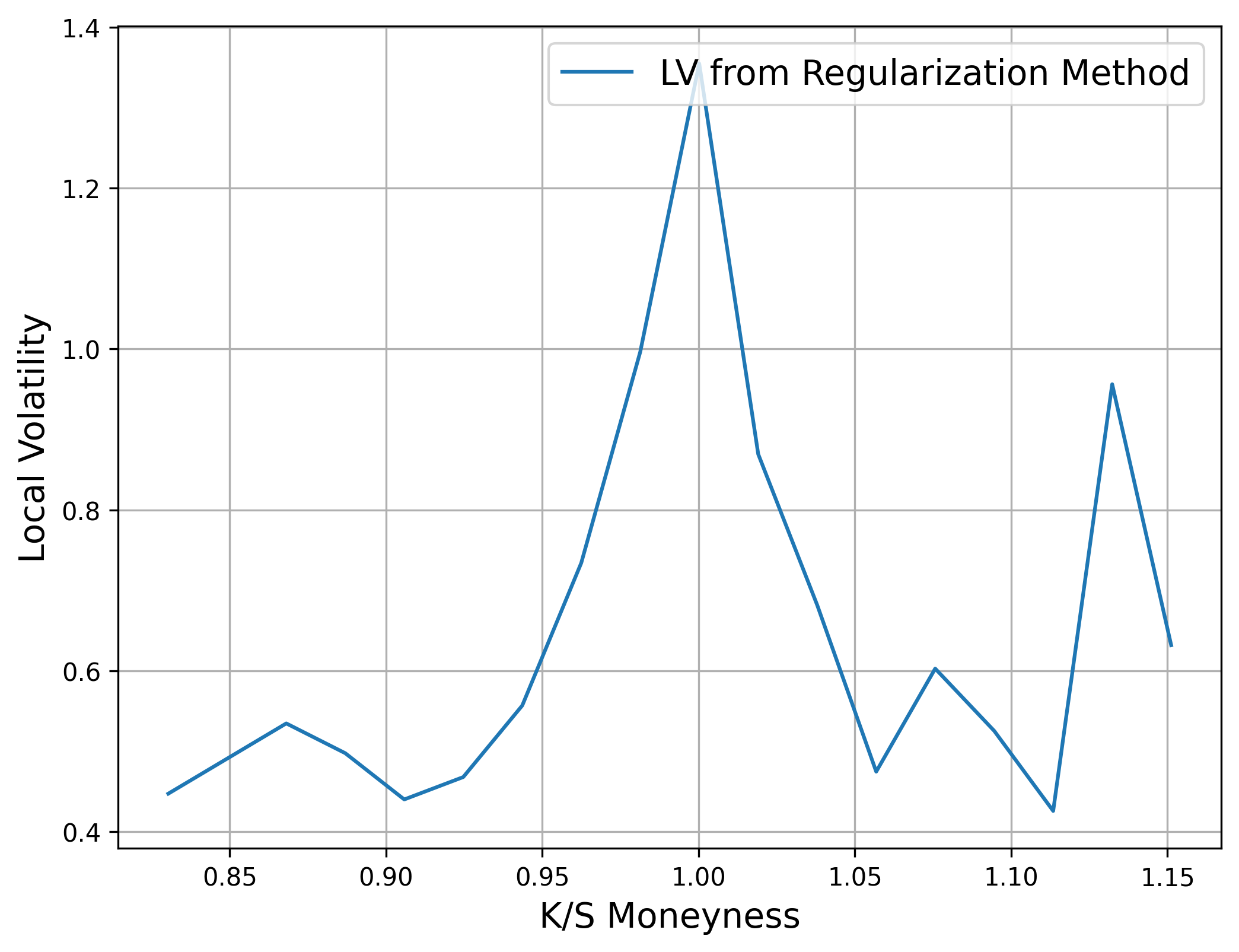}
    \caption{LV curve when $\lambda=1e-6$}
    \label{fig:regularized_1e6_1}
\end{minipage}
\hfill
\begin{minipage}{0.45\linewidth}
    \centering
    \includegraphics[width=\linewidth]{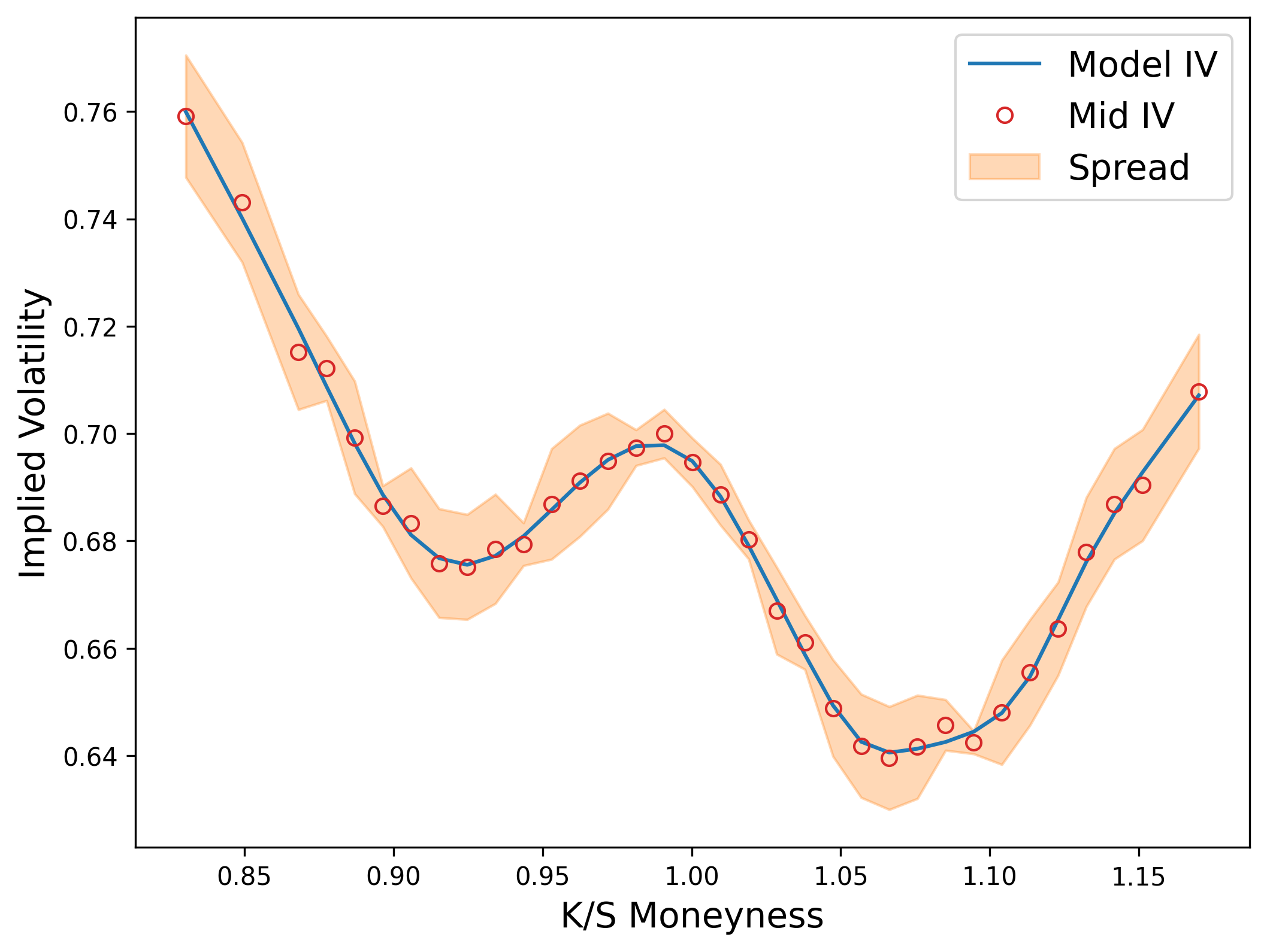}
    \caption{IV Function when $\lambda=1e-6$}
    \label{fig:regularized_1e6_2}
\end{minipage}
\end{figure}

\begin{figure}[H]
\centering
\begin{minipage}{0.45\linewidth}
    \centering
    \includegraphics[width=\linewidth]{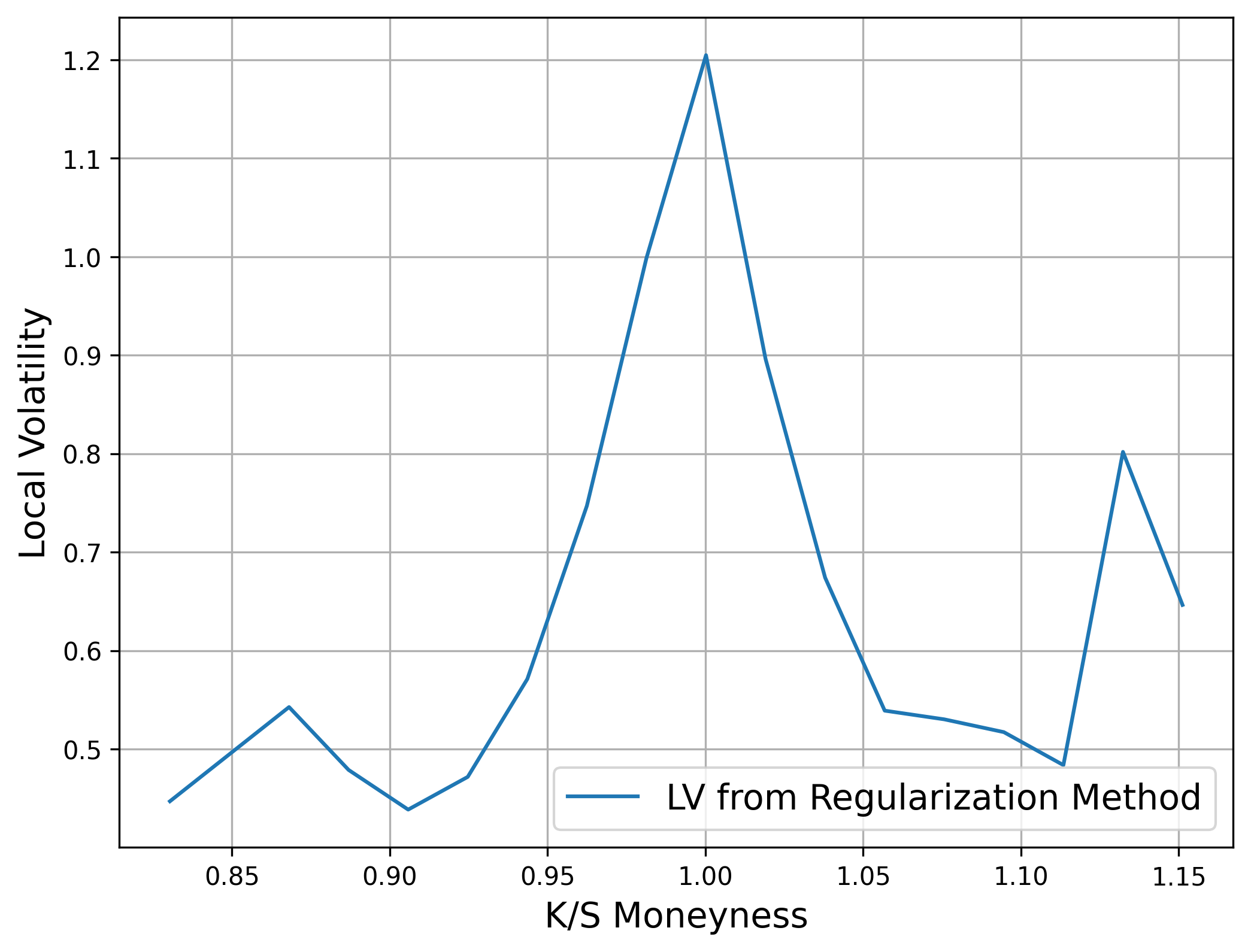}
    \caption{LV curve when $\lambda=1e-5$}
    \label{fig:regularized_1e5_1}
\end{minipage}
\hfill
\begin{minipage}{0.45\linewidth}
    \centering
    \includegraphics[width=\linewidth]{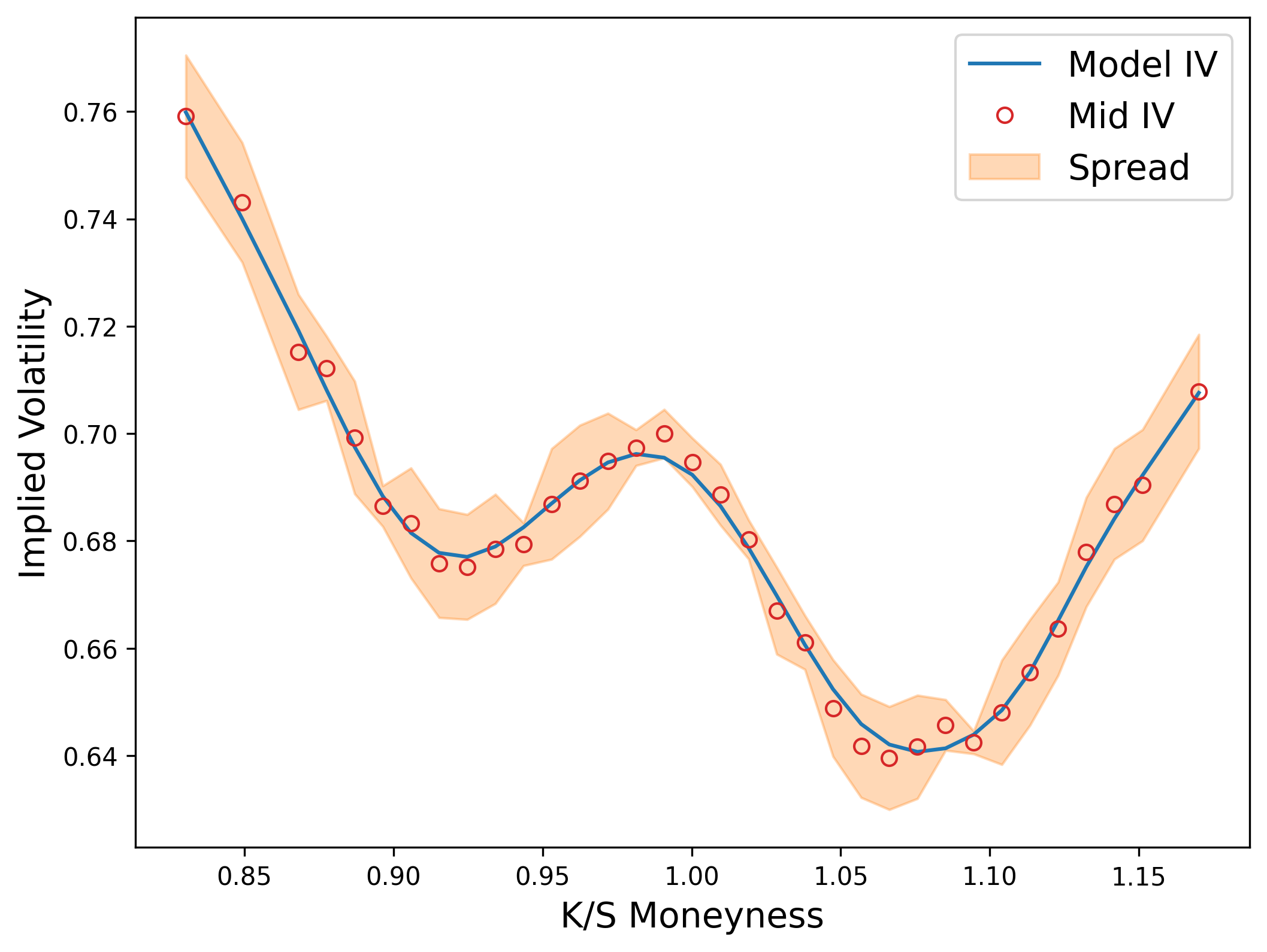}
    \caption{IV Function when $\lambda=1e-5$}
    \label{fig:regularized_1e5_2}
\end{minipage}
\end{figure}

\begin{figure}[H]
\centering
\begin{minipage}{0.45\linewidth}
    \centering
    \includegraphics[width=\linewidth]{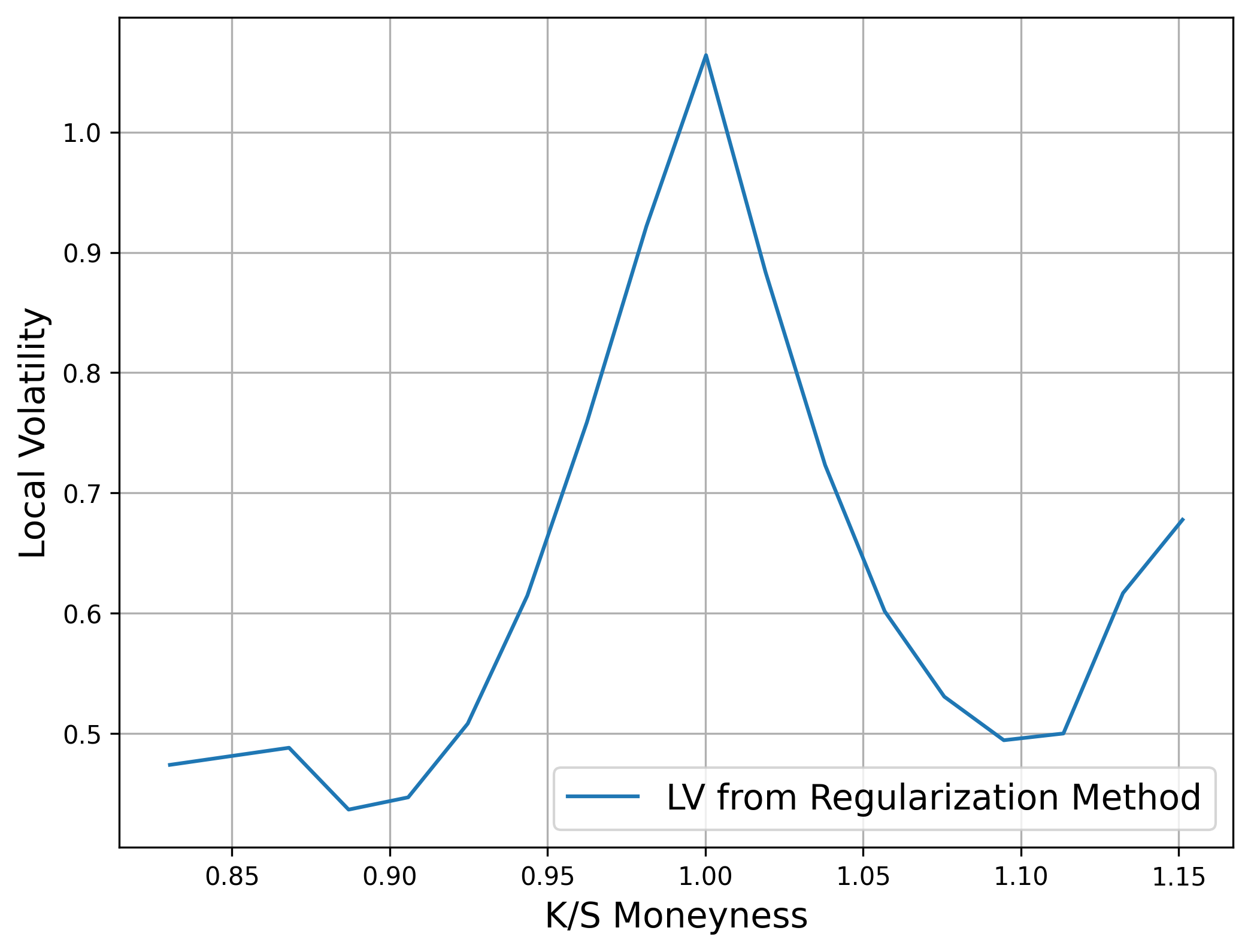}
    \caption{LV curve when $\lambda=1e-4$}
    \label{fig:regularized_1e4_1}
\end{minipage}
\hfill
\begin{minipage}{0.45\linewidth}
    \centering
    \includegraphics[width=\linewidth]{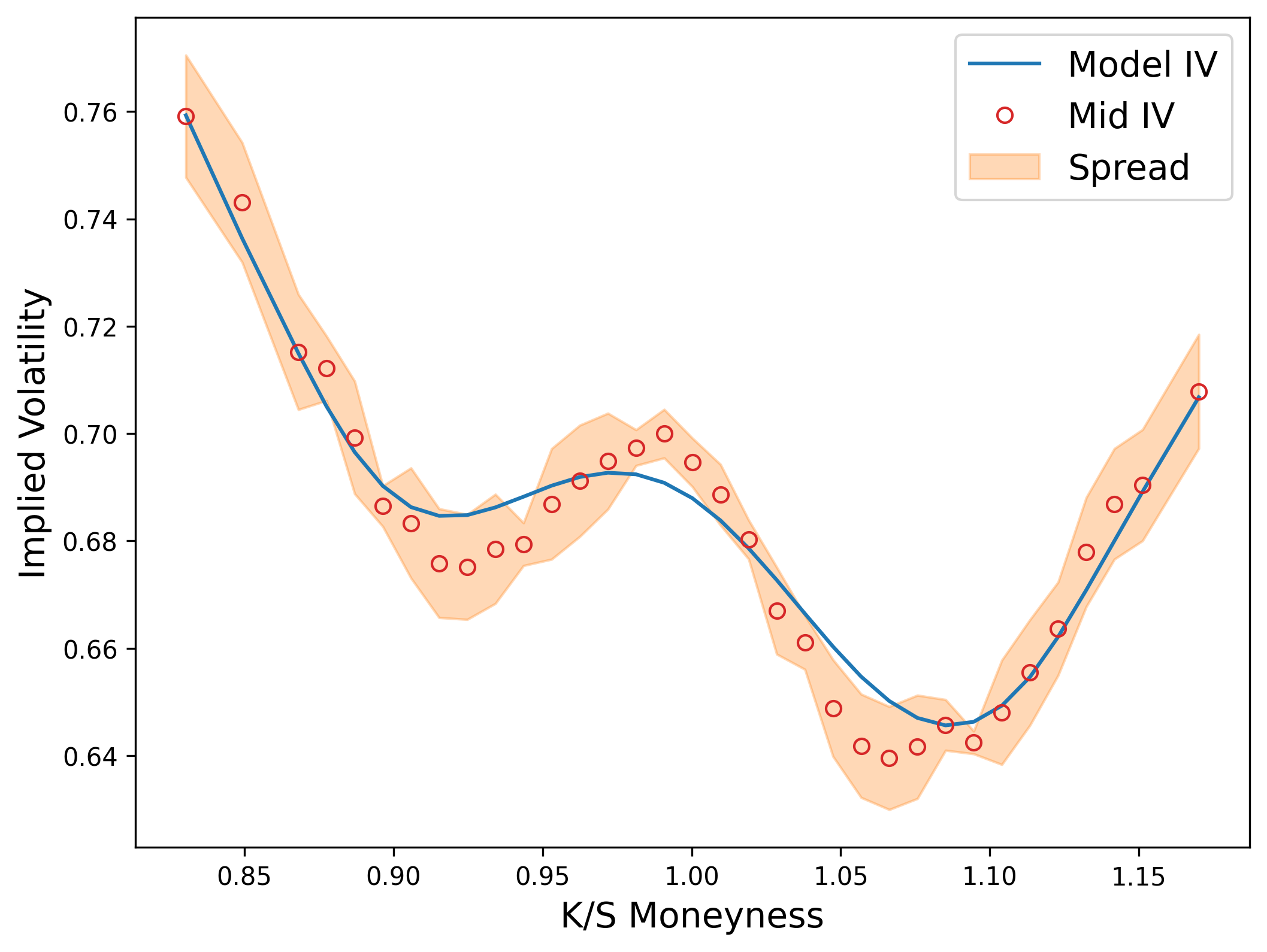}
    \caption{IV Function when $\lambda=1e-4$}
    \label{fig:regularized_1e4_2}
\end{minipage}
\end{figure}

\subsubsection{Calibrate with the Proposed Method}

However, if we apply the proposed method to process the market data prior to the LV fitting, as demonstrated in Figure \ref{fig:Proposed_method_directly_LV} and \ref{fig:Proposed_method_directly_IV}, we can observe that the LV function is smoother, without compromising the fitting quality.

\begin{figure}[H]
\centering
\begin{minipage}{0.45\linewidth}
  \centering
  \includegraphics[width=\linewidth]{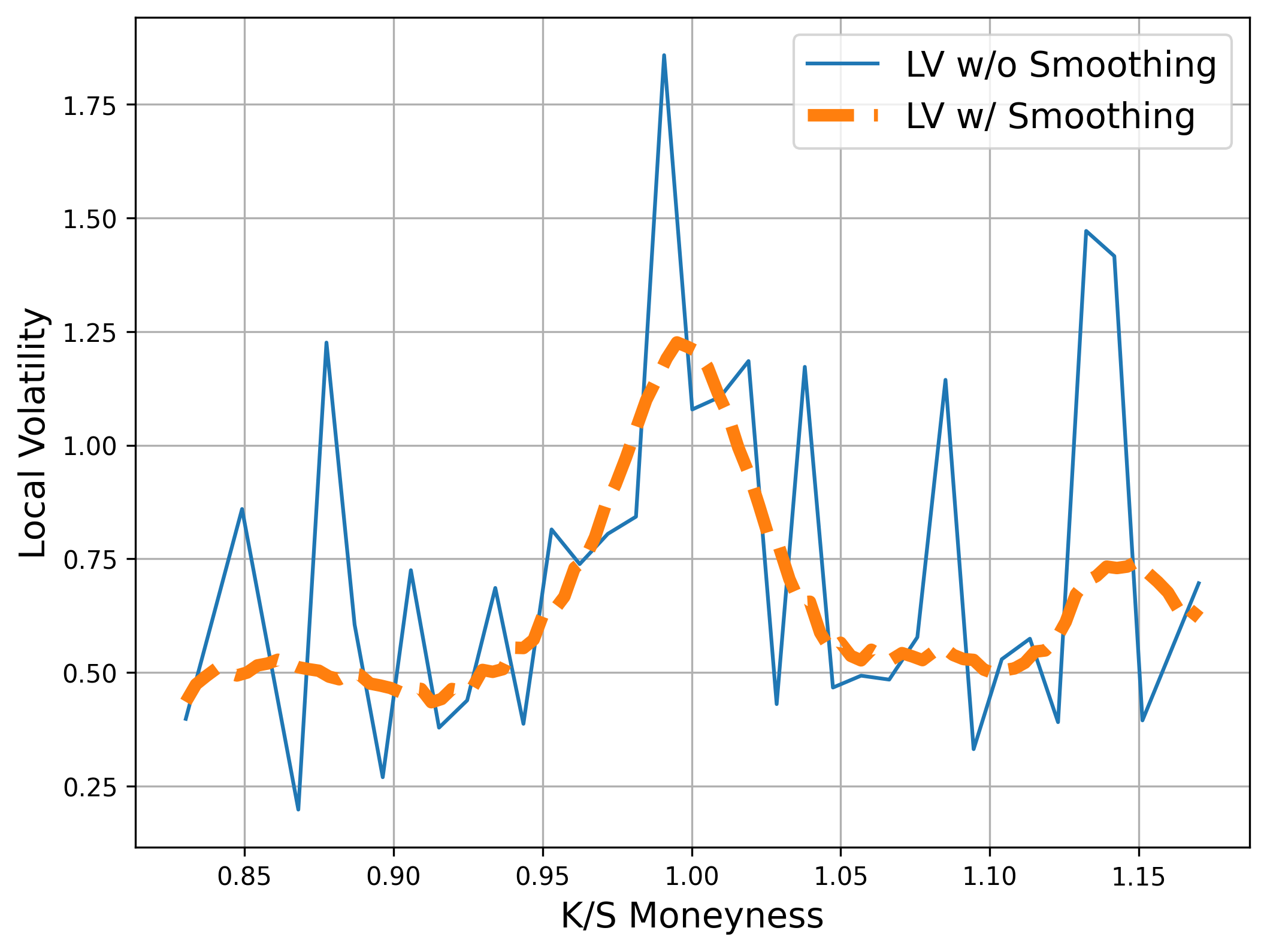}
  \caption{LV function from Proposed Method}
  \label{fig:Proposed_method_directly_LV}
\end{minipage}
\hfill
\begin{minipage}{0.45\linewidth}
  \centering
  \includegraphics[width=\linewidth]{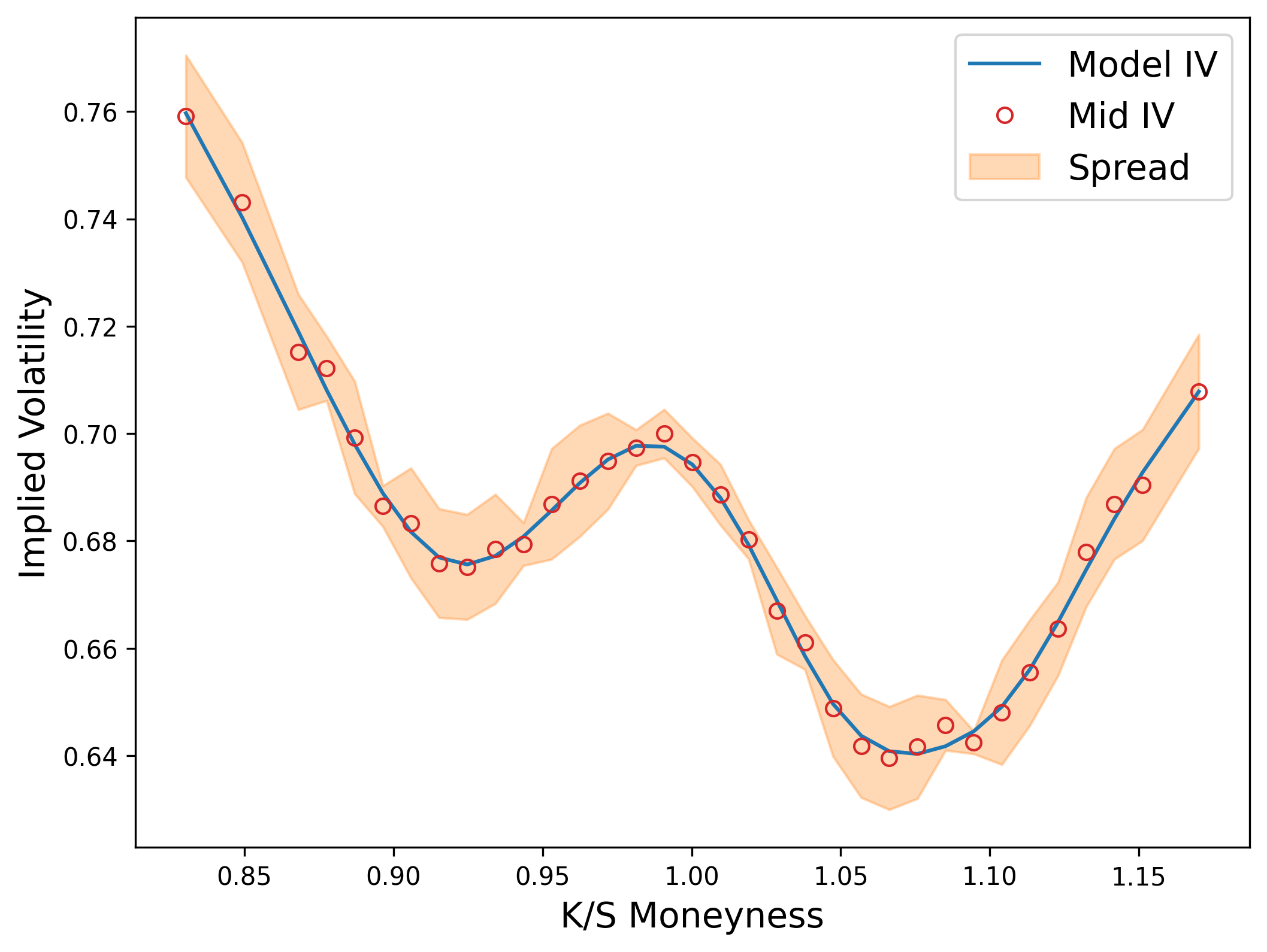}
  \caption{IV Curve from Proposed Method}
  \label{fig:Proposed_method_directly_IV}
\end{minipage}
\label{fig:LP_method}
\end{figure}

Iteratively, as demonstrated in Figure \ref{fig:Proposed_method_directly_LV_Surface}, we construct the LV function for the entire surface. Moreover, through standard F-D approach, we can price the European call options for different maturities, and invert the Black-Scholes formula to get the IV value for different strikes and maturities. This IV surface is demonstrated at Figure \ref{fig:Proposed_method_directly_IV_Surface}, the model IV surface has been shifted up $0.1$ for better viewing.

\begin{figure}[H]
\centering
\begin{minipage}{0.45\linewidth}
  \centering
  \includegraphics[width=\linewidth]{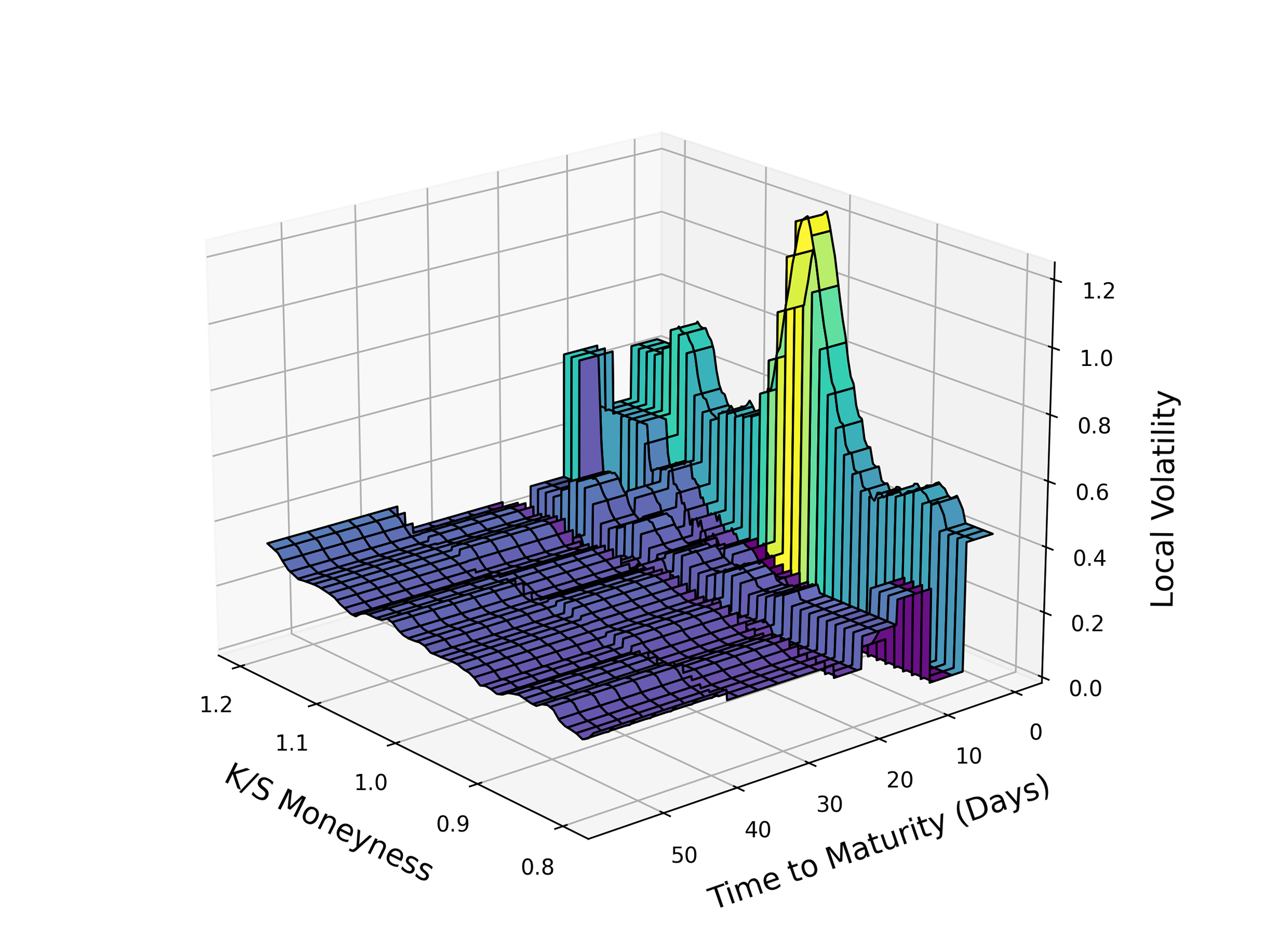}
  \caption{LV function from Proposed Method}
  \label{fig:Proposed_method_directly_LV_Surface}
\end{minipage}
\hfill
\begin{minipage}{0.45\linewidth}
  \centering
  \includegraphics[width=\linewidth]{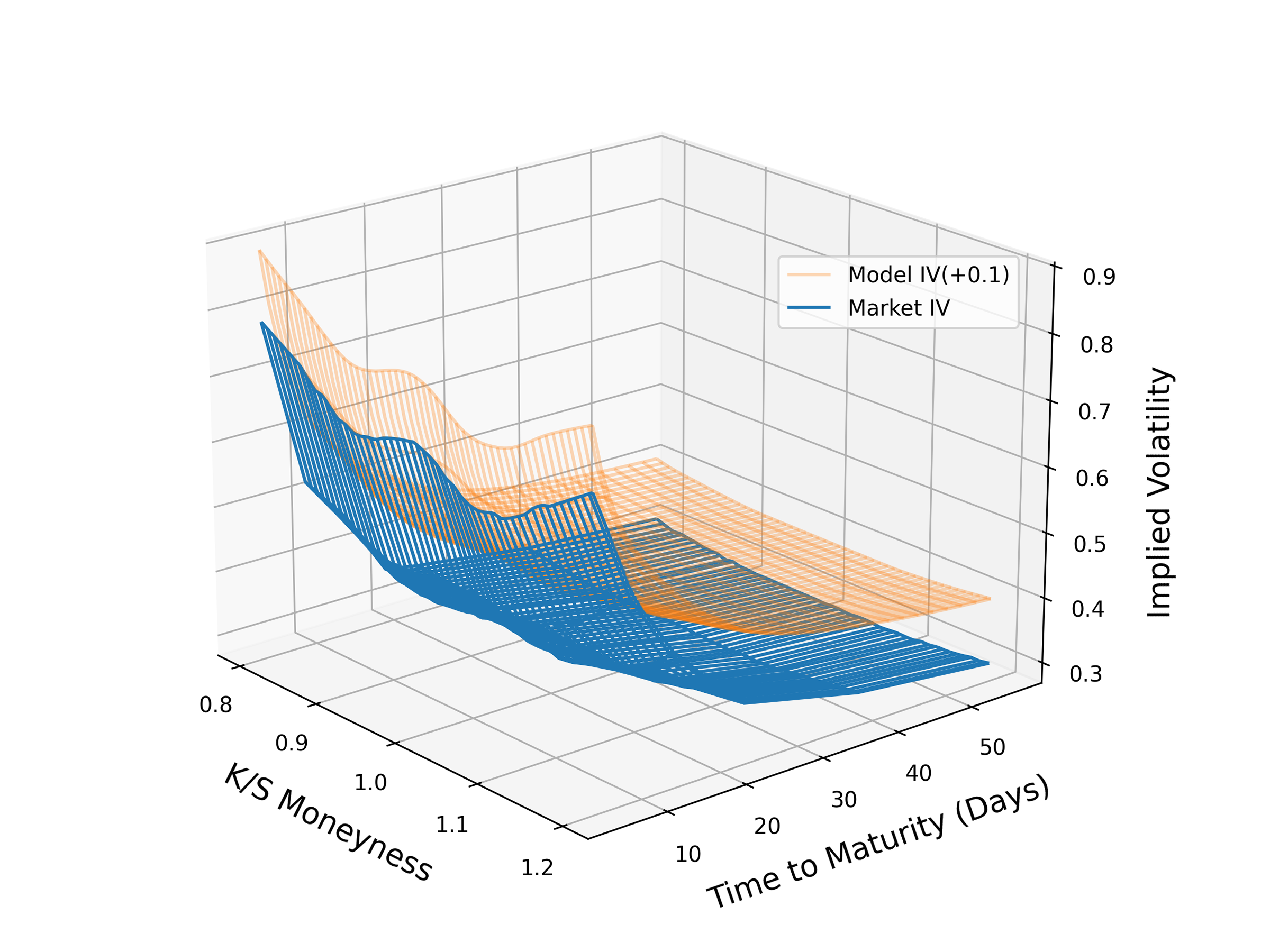}
  \caption{IV Surface from Proposed Method}
  \label{fig:Proposed_method_directly_IV_Surface}
\end{minipage}
\label{fig:LP_method}
\end{figure}

To quantitatively assess the quality of the fit, we report the Fail Ratio of proposed method—the proportion of model prices that fall outside the bid-ask spread:

\begin{equation}
\text{fail ratio} = \frac{\# \text{ model price points outside bid/ask}}{\# \text{ of all points}} \approx 1\%
\end{equation}

\begin{table}[H]
\centering
\textbf{Absolute Calibration Error in \% for AAPL Surface} \\[1ex]
\begin{tabular}{|c|c|c|c|}
\hline
             & Moneyness $< 0.95$ &  $0.95 <$ Moneyness $< 1.05$ & $1.05<$ Moneyness \\ \hline
Maturity $<1$ month  &          0.25\%      &                   0.28\%         &               0.31\% \\ \hline
$1$ month $<$ Maturity $<2$ month  &          0.22\%      &                   0.2\%         &               0.25\% \\ \hline
$2$ month $<$Maturity $<3$ month  &          0.25\%      &                   0.13\%         &               0.24\% \\ \hline
\end{tabular}
\caption {Absolute Calibration Error in \% for AAPL Surface}
\end{table}

\section{Conclusion}

In this paper, we address a fundamental challenge in exotic derivative pricing: the ill-posed nature of Local Volatility (LV) calibration that creates spiky surfaces and unstable Greeks despite the model's superior ability to match market observables across all strikes and maturities. We introduce a novel pre-calibration smoothing method that seamlessly integrates into existing LV calibration workflows to resolve these stability issues.

Our approach employs local regression that automatically minimizes asymptotic conditional mean squared error to pre-process market observables, generating denoised inputs for subsequent LV calibration. This method distinguishes itself by requiring no manual parameter tuning while efficiently filtering market noise and preserving essential market shape information. The resulting framework produces arbitrage-free LV functions that maintain the model's calibration accuracy while delivering the smooth, stable surfaces necessary for reliable hedging.
The empirical study demonstrates that our pre-calibration smoothing yields significantly smoother LV surfaces and greatly improves Greek stability for exotic options with negligible additional computational cost. Moreover, these improvements come without compromising the LV model's defining characteristic: its ability to fit market observables with high fidelity across all strikes and maturities.

Our method's design as a preprocessing step makes it universally applicable to any LV calibration workflow, offering practitioners a practical solution to the persistent problem of unstable Greeks in finite-difference-based exotic derivative valuation. By solving the spiky LV surface problem while preserving calibration accuracy, this framework enhances the reliability and stability of derivative pricing and risk management in practice.

\section{Appendix}

\subsection{Proof for proposition \ref{prop:ACMSE convergece in algo}}

\begin{proof}
We first establish that the sequence $\{\hat{Z}_k(p_n, h_n)\}_{n=0}^{\infty}$ generated by the iterative algorithm is monotonically non-increasing and bounded below, guaranteeing its convergence.

For any iteration $n \geq 0$, we have:
\begin{align}
p_{n+1} &= \arg\min_p \hat{Z}_k(p, h_n)\\
\Rightarrow \hat{Z}_k(p_{n+1}, h_n) &\leq \hat{Z}_k(p_n, h_n)
\end{align}

Similarly, for the second optimization step:
\begin{align}
h_{n+1} &= \arg\min_h \hat{Z}_k(p_{n+1}, h)\\
\Rightarrow \hat{Z}_k(p_{n+1}, h_{n+1}) &\leq \hat{Z}_k(p_{n+1}, h_n)
\end{align}

By transitivity:
\begin{align}
\hat{Z}_k(p_{n+1}, h_{n+1}) \leq \hat{Z}_k(p_{n+1}, h_n) \leq \hat{Z}_k(p_n, h_n)
\end{align}

Thus, $\{\hat{Z}_k(p_n, h_n)\}_{n=0}^{\infty}$ is monotonically non-increasing.

By definition of global minimum, the sequence $\{\hat{Z}_k(p_n, h_n)\}_{n=0}^{\infty}$ is bounded below by $\hat{Z}^*$.

By the Monotone Convergence Theorem, since $\{\hat{Z}_k(p_n, h_n)\}_{n=0}^{\infty}$ is monotonically non-increasing and bounded below, it converges to some limit $L$:
\begin{align}
\lim_{n \to \infty}\hat{Z}_k(p_n, h_n) = L \geq Z^*
\end{align}

We now prove that $L = \hat{Z}^*$ by contradiction.

Assume $L > \hat{Z}^*$. Let $\varepsilon = \frac{L - \hat{Z}^*}{3} > 0$.

By the continuity of $g(\cdot)$, $f^{(p+1)}(\cdot)$, and $\tau^2(\cdot)$ in a neighborhood of $k$, one can ensure that $\hat{Z}_k$ is continuous for both variables. For any point $(p, h) \not\in \hat{Ph}^*$ with $\hat{Z}_k(p, h) \leq L - \varepsilon$, at least one of the following must hold:
\begin{align}
\min_{p'} \hat{Z}_k(p', h) &< \hat{Z}_k(p, h)\\
\min_{h'} \hat{Z}_k(p, h') &< \hat{Z}_k(p, h)
\end{align}

For any $(p_n, h_n)$ satisfying $L - \varepsilon < \hat{Z}_k(p_n, h_n) < L - \varepsilon + \delta$ for some sufficiently small $\delta > 0$:
\begin{itemize}
    \item In the $p$-optimization step: $\hat{Z}_k(p_{n+1}, h_n) \leq \min_{p'} \hat{Z}_k(p', h_n) < \hat{Z}_k(p_n, h_n) - \delta_1$ for some $\delta_1 > 0$
    \item In the $h$-optimization step: $\hat{Z}_k(p_{n+1}, h_{n+1}) \leq \min_{h'} \hat{Z}_k(p_{n+1}, h') < \hat{Z}_k(p_{n+1}, h_n) - \delta_2$ for some $\delta_2 > 0$
\end{itemize}

This implies:
\begin{align}
\hat{Z}_k(p_{n+1}, h_{n+1}) < \hat{Z}_k(p_n, h_n) - \delta
\end{align}
where $\delta = \min(\delta_1, \delta_2) > 0$.

Since $\{\hat{Z}_k(p_n, h_n)\}_{n=0}^{\infty}$ converges to $L$, there exists $N$ such that for all $n \geq N$:
\begin{align}
|\hat{Z}_k(p_n, h_n) - L| < \varepsilon \Rightarrow \hat{Z}_k(p_n, h_n) > L - \varepsilon
\end{align}

However, following our argument, starting from any point where $\hat{Z}_k(p_N, h_N) < L - \varepsilon + \delta$, we would get:
\begin{align}
\hat{Z}_k(p_{N+1}, h_{N+1}) < \hat{Z}_k(p_N, h_N) - \delta < L - \varepsilon
\end{align}

This contradicts our established convergence property that $\hat{Z}_k(p_n, h_n) > L - \varepsilon$ for all $n \geq N$.

Therefore, our assumption that $L > \hat{Z}^*$ is false, and we must have $L = \hat{Z}^*$.

Lastly, we show that the algorithm eventually produces optimal parameter pairs in $\hat{Ph}^*$.

Since $\lim_{n \to \infty} \hat{Z}_k(p_n, h_n) = \hat{Z}^*$, for any $\epsilon > 0$, there exists $N_1$ such that for all $n \geq N_1$, $|\hat{Z}_k(p_n, h_n) - \hat{Z}^*| < \epsilon$.

Choose $\epsilon$ sufficiently small such that for any $(p, h)$ with $\hat{Z}_k(p, h) < Z^* + \epsilon$, if $(p, h) \not\in Ph^*$, then at least one of the optimization steps yields a strict improvement.

Then, there exists $N_2 \geq N_1$ such that for all $n \geq N_2$, $(p_n, h_n) \in \hat{Ph}^*$. Otherwise, we would continually improve beyond $\hat{Z}^*$, which is impossible.

Under this basis, if the algorithm includes the stopping criterion $|\hat{Z}_k(p_{n+1}, h_{n+1}) - \hat{Z}_k(p_n, h_n)| < \epsilon$, then for $n \geq N_2$, we have $\hat{Z}_k(p_{n+1}, h_{n+1}) = \hat{Z}_k(p_n, h_n) = \hat{Z}^*$, satisfying the stopping criterion and causing the algorithm to terminate.

This completes the proof.

\end{proof}
\subsection{Proof for proposition \ref{prop:tau convergence rate}}
\subsubsection{Establish of the Pseudo-Nadaraya-Watson Variance Estimator}
We apply assumption \ref{assup:convergence analysis} through our analysis. Under our settings, for a fixed point $K$, we have $E\left[\varepsilon_i \mid K_i=K\right]=0$ and $\operatorname{Var}\left(\varepsilon_i \mid K_i=K\right)=\tau^2(K)$. To estimate the variance function $\tau^2(K)$ for a given strike $K$, we use the squared residuals from the mean fit. Define the residuals $\hat{\varepsilon}_j := \sigma_j - \hat{f}(K_j)$ as an estimation to the true residual on $j$-th index $\varepsilon_j$. We form a local variance estimator $\hat{\tau}^2(K)$ using a local polynomial of degree 0 by solving a weighted least squares problem: 
$$\hat{\tau}^2(K) := \arg\min_c \sum_{j=1}^n \kappa\left(\frac{K_j-K}{h}\right)(\hat{\varepsilon}_j^2-c)^2$$

This is indeed the locally weighted average of squared residuals around $K$. After simple algebra, one obtains:
$$\hat{\tau}^2(K) = \frac{\sum_{j=1}^n \kappa\left(\frac{K_j-K}{h}\right) \hat{\varepsilon}_j^2}{\sum_{j=1}^n \kappa\left(\frac{K_j-K}{h}\right)}$$

For simplicity, we denote $\hat{\tau}_i^2 := \left.\hat{\tau}^2(K)\right|_{K=K_i}$.

Recall the classic Nadaraya-Watson estimator for $\tau^2$ in our case:
\begin{align*}
\tilde{\tau}^2(K) = \frac{\sum_{j=1}^n \kappa\left(\frac{K_j-K}{h}\right) \varepsilon_j^2}{\sum_{j=1}^n \kappa\left(\frac{K_j-K}{h}\right)}
\end{align*}
where $\varepsilon_j$ is the true error residual for the $j$-th index. However, our estimator $\hat{\tau}^2(K)$ is based on the estimated implied volatility function $\hat{f}\left(K_j\right)$. This gives us $\hat{\varepsilon}_j$ instead of $\varepsilon_j$. The convergence of our estimator to the true variance $\tau^2(K)$ therefore requires verification. Our goal is to establish that for each fixed point $K$, the estimator $\hat{\tau}^2(K)$ converges to the true variance $\tau^2(K)$ with probability 1 as $n \rightarrow \infty$.

\subsubsection{Convergence analysis for Nadaraya-Watson Variance Estimator $\tilde{\tau}^2(K)$}

The estimator $\tilde{\tau}^2(K)$ is a Nadaraya-Watson kernel regression estimate of $E[\varepsilon^2 | K]$. Standard results of kernel regression confirms its unbiasedness up to smoothing bias. The residual term, however, requires separate analysis beyond classic results.

We first recall the Strong Law of Large Numbers (SLLN) theorem:
\begin{theorem}
Let $\{Z_i\}_{i=1}^n$ be an independent and identically distributed sequence of random variables with finite expectation $E[|Z_1|] < \infty$. Then
$$
\frac{1}{n} \sum_{i=1}^n Z_i \xrightarrow{\text{a.s.}} E[Z_1] \quad \text{as } n \rightarrow \infty
$$
\end{theorem}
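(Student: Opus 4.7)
The plan is to follow Etemadi's truncation argument, which requires only Chebyshev's inequality and elementary moment estimates, and in particular avoids characteristic functions. First I would reduce to the nonnegative case by writing $Z_i = Z_i^+ - Z_i^-$: each of $(Z_i^+)$ and $(Z_i^-)$ is an i.i.d.\ integrable sequence, so proving the SLLN for each and subtracting yields the signed result. From this point I assume $Z_i \geq 0$.

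Next I would truncate: set $Y_i = Z_i \mathbf{1}_{\{Z_i \leq i\}}$ and $S_n = \sum_{i=1}^n Y_i$. A Borel--Cantelli step handles the removal of the truncation: since $\sum_{i=1}^\infty P(Z_i > i) = \sum_{i=1}^\infty P(Z_1 > i) \leq \int_0^\infty P(Z_1 > x)\,dx = E[Z_1] < \infty$, almost surely $Z_i = Y_i$ for all but finitely many $i$, so $\tfrac{1}{n}\sum_{i=1}^n Z_i$ and $S_n/n$ have the same a.s.\ limit. Moreover, dominated convergence gives $E[Y_i] \to E[Z_1]$, hence $n^{-1} \sum_{i=1}^n E[Y_i] \to E[Z_1]$ by Cesaro. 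It therefore suffices to prove $(S_n - E[S_n])/n \to 0$ almost surely.

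The core step is convergence along a geometric subsequence. Fix $\alpha > 1$ and let $k_n = \lfloor \alpha^n \rfloor$. By independence of the $Y_i$ and Chebyshev's inequality,
\begin{equation*}
\sum_{n=1}^{\infty} P\!\left(\left|\frac{S_{k_n} - E[S_{k_n}]}{k_n}\right| > \epsilon\right) \leq \frac{1}{\epsilon^{2}} \sum_{n=1}^{\infty} \frac{1}{k_n^{2}} \sum_{i=1}^{k_n} \operatorname{Var}(Y_i).
\end{equation*}
Interchanging summation order and using the geometric-spacing bound $\sum_{n:\, k_n \geq i} k_n^{-2} \leq C_\alpha / i^{2}$ (with $C_\alpha$ depending only on $\alpha$), together with $\operatorname{Var}(Y_i) \leq E[Y_i^{2}] = E[Z_1^{2}\mathbf{1}_{\{Z_1 \leq i\}}]$, the double sum is dominated by
\begin{equation*}
C_\alpha \sum_{i=1}^{\infty} \frac{E[Z_1^{2} \mathbf{1}_{\{Z_1 \leq i\}}]}{i^{2}} = C_\alpha\, E\!\left[Z_1^{2} \sum_{i \geq \max(Z_1,1)} \frac{1}{i^{2}}\right] \leq 2 C_\alpha (E[Z_1] + 1) < \infty,
\end{equation*}
using the elementary tail bound $\sum_{i \geq z} i^{-2} \leq 2/z$ for $z \geq 1$. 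Borel--Cantelli then gives $(S_{k_n} - E[S_{k_n}])/k_n \to 0$ a.s., and combined with $E[S_{k_n}]/k_n \to E[Z_1]$ yields $S_{k_n}/k_n \to E[Z_1]$ almost surely.

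Finally I would interpolate to the full sequence using monotonicity of $S_n$, available because $Y_i \geq 0$. For $k_n \leq m < k_{n+1}$,
\begin{equation*}
\frac{k_n}{k_{n+1}} \cdot \frac{S_{k_n}}{k_n} \;\leq\; \frac{S_m}{m} \;\leq\; \frac{k_{n+1}}{k_n} \cdot \frac{S_{k_{n+1}}}{k_{n+1}}.
\end{equation*}
Since $k_{n+1}/k_n \to \alpha$, letting $n \to \infty$ gives $E[Z_1]/\alpha \leq \liminf S_m/m \leq \limsup S_m/m \leq \alpha E[Z_1]$ almost surely. Taking a countable sequence $\alpha_j \downarrow 1$ and intersecting the resulting a.s.\ events yields $\lim_m S_m/m = E[Z_1]$ a.s., which, combined with the initial $\pm$ reduction and the tail-truncation step, proves the theorem. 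The main obstacle is the variance bookkeeping in the third step: independence alone is not strong enough to control $\operatorname{Var}(S_n)/n^{2}$ at every index $n$, so the geometric subsequence is essential to obtain a summable bound, and the interpolation plus $\alpha \downarrow 1$ limit is what converts the subsequence result into convergence of the full sequence.
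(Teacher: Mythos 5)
Your proposal is a correct and essentially complete sketch of Etemadi's proof of the strong law of large numbers: the positive/negative-part reduction, the index-dependent truncation $Y_i = Z_i\mathbf{1}_{\{Z_i\le i\}}$ with the Borel--Cantelli removal of the truncation, the Chebyshev-plus-geometric-subsequence variance bookkeeping (your bound $\sum_i i^{-2}E[Z_1^2\mathbf{1}_{\{Z_1\le i\}}]\le 2(E[Z_1]+1)$ is valid), and the monotone interpolation with $\alpha\downarrow 1$ all hold up; the only cosmetic omission is the routine passage from "for each fixed $\epsilon$" to a single almost-sure event via a countable sequence of $\epsilon$'s. The point of comparison, however, is that the paper does not prove this statement at all: it is the classical SLLN, quoted in the appendix as a known textbook result solely to establish almost-sure convergence of the numerator and denominator of the Nadaraya--Watson-type variance estimator in the proof of Proposition~\ref{prop:tau convergence rate}. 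So your work supplies a self-contained elementary proof where the paper simply cites standard probability theory; as a bonus, your argument uses independence only through $\operatorname{Var}(S_{k_n})=\sum_i\operatorname{Var}(Y_i)$ and the Borel--Cantelli lemma for the truncation events, so it actually establishes the stronger pairwise-independent version of the law, which is more than the paper needs but certainly sufficient.
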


We denote that:
\begin{align*}
    \tilde{\tau}^2(K) &= \frac{N(K)}{D(K)} = \frac{\sum_{j=1}^n \kappa\left(\frac{K_j-K}{h}\right) \varepsilon_j^2}{\sum_{j=1}^n \kappa\left(\frac{K_j-K}{h}\right)},\\
    Z_j &= \kappa\left(\frac{K_j-K}{h}\right) \varepsilon_j^2    
\end{align*}

Since $\{(K_j, \varepsilon_j)\}$ are i.i.d. and kernel $\kappa$ is symmetric with bounded support, the $Z_j$ are i.i.d. with finite expectation. Using the law of iterated expectations:
$$
E[Z_j] = E\left[\kappa\left(\frac{K_j-K}{h}\right) \varepsilon_j^2\right] = \int \kappa\left(\frac{t-K}{h}\right) E[\varepsilon_j^2 | K_j=t] g(t) dt = \int \kappa\left(\frac{t-K}{h}\right) \tau^2(t) g(t) dt
$$

By the SLLN,
$$
\frac{1}{n} N(K) = \frac{1}{n} \sum_{j=1}^n Z_j \xrightarrow{\text{a.s.}} \int \kappa\left(\frac{t-K}{h}\right) \tau^2(t) g(t) dt
$$

Similarily for the Denominator $D(K)$, we have:
\begin{align*}
    \xi_j &= \kappa\left(\frac{K_j-K}{h}\right),\quad E[\xi_j] = \int \kappa\left(\frac{t-K}{h}\right) g(t) dt,\\
    \Rightarrow&\left[\frac{1}{n} D(K) = \frac{1}{n} \sum_{j=1}^n \xi_j \xrightarrow{\text{a.s.}} \int \kappa\left(\frac{t-K}{h}\right) g(t) dt\right]
\end{align*}

Recall the Continuous Mapping theorem:
\begin{theorem}
\label{theom:cont map}
If $a_n \rightarrow a$ and $b_n \rightarrow b$ almost surely, and if $G$ is continuous at $b$ (with $b \neq 0$), then
$$
G(a_n, b_n) \rightarrow G(a, b) \quad \text{almost surely.}
$$
\end{theorem}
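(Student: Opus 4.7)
The plan is to reduce the claim to deterministic (pointwise) convergence on a probability-one event, and then invoke the sequential characterization of continuity. The argument is essentially a bookkeeping exercise at the level of measurable sets, together with one standard measurability check.

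First I would unpack the two almost-sure hypotheses: there exist measurable sets $\Omega_a, \Omega_b \subseteq \Omega$ with $P(\Omega_a) = P(\Omega_b) = 1$ such that $a_n(\omega) \to a(\omega)$ for every $\omega \in \Omega_a$ and $b_n(\omega) \to b(\omega)$ for every $\omega \in \Omega_b$. Setting $\Omega_0 := \Omega_a \cap \Omega_b$, the intersection of two probability-one events again has probability one, so $P(\Omega_0) = 1$. On $\Omega_0$ both coordinate convergences hold simultaneously.

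Next I would fix an arbitrary $\omega \in \Omega_0$ and view $\bigl(a_n(\omega), b_n(\omega)\bigr)$ as a deterministic sequence in the relevant product space. Componentwise convergence to $(a(\omega), b(\omega))$ is immediate from the definition of $\Omega_0$. Continuity of $G$ at that point (where the stated condition $b \neq 0$ excludes the singular locus relevant to the downstream use $G(x,y) = x/y$ in combining $N(K)$ and $D(K)$) then yields $G\bigl(a_n(\omega), b_n(\omega)\bigr) \to G\bigl(a(\omega), b(\omega)\bigr)$ via the sequential characterization of continuity. Since this holds for every $\omega$ in a probability-one set, we conclude $G(a_n, b_n) \to G(a, b)$ almost surely.

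There is no serious technical obstacle: the result is a direct consequence of the definition of almost-sure convergence combined with sequential continuity. The only mild point requiring care is the measurability of $G(a_n, b_n)$, which holds because $G$ is continuous and hence Borel-measurable, so its composition with the random vector $(a_n, b_n)$ is again a random variable, rendering the almost-sure statement well-posed.
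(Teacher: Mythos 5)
Your proposal is correct: intersecting the two probability-one events, applying the sequential characterization of continuity pointwise on that intersection, and noting Borel measurability of $G$ is exactly the standard argument, and the paper itself states this result as a recalled textbook fact (the continuous mapping theorem) without giving any proof, so there is nothing different to compare against. The only cosmetic point is that the paper's phrase ``continuous at $b$ (with $b \neq 0$)'' should be read as continuity at the limit pair $(a,b)$, with $b \neq 0$ only serving to keep the intended map $G(x,y)=x/y$ away from its singular set --- which is precisely how your argument uses it.
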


Taking
$$
a_n = \frac{1}{n} N(K) \quad \text{and} \quad b_n = \frac{1}{n} D(K)
$$
with
$$
a = \int \kappa\left(\frac{t-K}{h}\right) \tau^2(t) g(t) dt \quad \text{and} \quad b = \int \kappa\left(\frac{t-K}{h}\right) g(t) dt
$$

Under assumption \ref{assup:convergence analysis} and theorem \ref{theom:cont map}, we have $b \neq 0$ and 
$\hat{\tau}^2(K) = \frac{\frac{1}{n} N(K)}{\frac{1}{n} D(K)}\xrightarrow[n\to\infty]{a.s.}
\frac{\int \kappa\left(\frac{t-K}{h}\right) \tau^2(t) g(t) dt}{\int \kappa\left(\frac{t-K}{h}\right) g(t) dt}$.

Next, we analyze the convergence of its expectation and variance.

Let us denote:
\begin{align*}
\mu_N &= E[N(K)] = E\left[\sum_{j=1}^n \kappa\left(\frac{K_j-K}{h}\right) \varepsilon_j^2\right] = n\int \kappa\left(\frac{t-K}{h}\right) \tau^2(t) g(t) dt = O(nh)\\
\mu_D &= E[D(K)] = E\left[\sum_{j=1}^n \kappa\left(\frac{K_j-K}{h}\right)\right] = n\int \kappa\left(\frac{t-K}{h}\right) g(t) dt = O(nh)
\end{align*}

We do a second-order Taylor expansion of the function $G(a, b) = \frac{a}{b}$ around the point $(\mu_N, \mu_D)$:

\begin{align*}
\tilde{\tau}^2(K) &= \frac{\mu_N}{\mu_D} + \frac{1}{\mu_D}(N(K) - \mu_N) - \frac{\mu_N}{\mu_D^2}(D(K) - \mu_D) + \frac{1}{2}\frac{\partial^2 G}{\partial a^2}(\mu_N, \mu_D)(N(K) - \mu_N)^2 \\
&+ \frac{\partial^2 G}{\partial a \partial b}(\mu_N, \mu_D)(N(K) - \mu_N)(D(K) - \mu_D) + \frac{1}{2}\frac{\partial^2 G}{\partial b^2}(\mu_N, \mu_D)(D(K) - \mu_D)^2 + R_n
\end{align*}
where $\mathcal{R}_n$ represents higher-order remainder terms.

Taking the expectation, and noting that $E[N(K) - \mu_N] = E[D(K) - \mu_D] = 0$, we get:
\begin{align*}
E[\tilde{\tau}^2(K)] &= \frac{\mu_N}{\mu_D} + \frac{1}{2}\frac{\partial^2 G}{\partial a^2}(\mu_N, \mu_D)E[(N(K) - \mu_N)^2] \\
&+ \frac{\partial^2 G}{\partial a \partial b}(\mu_N, \mu_D)E[(N(K) - \mu_N)(D(K) - \mu_D)] \\
&+ \frac{1}{2}\frac{\partial^2 G}{\partial b^2}(\mu_N, \mu_D)E[(D(K) - \mu_D)^2] + E[\mathcal{R}_n]
\end{align*}

where
\begin{align*}
\frac{\partial^2 G}{\partial a^2}(a, b) = 0,\quad \frac{\partial^2 G}{\partial a \partial b}(a, b) = -\frac{1}{b^2} ,\quad \frac{\partial^2 G}{\partial b^2}(a, b) = \frac{2a}{b^3}
\end{align*}

The expectation of $Z_j$ can be calculated using the law of iterated expectations:
\begin{align}
E[Z_j] &= E\left[\kappa\left(\frac{K_j-K}{h}\right)^2 \varepsilon_j^2\right] \\
&= \int \kappa\left(\frac{t-K}{h}\right)^2 E[\varepsilon_j^2|K_j=t] g(t) dt \\
&= \int \kappa\left(\frac{t-K}{h}\right)^2 \tau^2(t) g(t) dt\\
&=\int \kappa(u)^2 \tau^2(K+hu) g(K+hu) h du=O(h)
\end{align}

where the last step holds because $\kappa$ is a positive symmetric kernel with bounded support, and both $\tau^2$ and $g$ are continuous functions.

Therefore, $E\left[\sum_{j=1}^n Z_j\right] = n \cdot E[Z_j] = n \cdot O(h) = O(nh)$.

By law of total variance:$
\operatorname{Var}\left(Z_j\right)=E\left[\operatorname{Var}\left(Z_j \mid K_j\right)\right]+\operatorname{Var}\left(E\left[Z_j \mid K_j\right]\right)
$, one can express the conditional moments:
$E\left[Z_j \mid K_j=t\right]=K\left(\frac{t-K}{h}\right)^2 \tau^2(t)$, and $\operatorname{Var}\left(Z_j \mid K_j=t\right)=K\left(\frac{t-K}{h}\right)^4 \cdot \operatorname{Var}(\varepsilon_j^2 | K_j=t)$

Since the fourth moment of $\varepsilon_j$ exists by assumption, $\operatorname{Var}(\varepsilon_j^2 | K_j=t) = O(1)$. After change of variables $u=\frac{t-K}{h}$ and integration:
$\operatorname{Var}\left(Z_j\right)= O(h)$ and this gives $\text{Var}(N(K))=O(nh)$.

Similarly, one gets:
$$
\begin{gathered}
\operatorname{Var}(D(K))=\operatorname{Var}\left(\sum_{j=1}^n \xi_j\right)=\sum_{j=1}^n \operatorname{Var}\left(\xi_j\right)=O(n h) \\
\operatorname{Cov}(N(K), D(K))=\operatorname{Cov}\left(\sum_{j=1}^n Z_j, \sum_{j=1}^n \xi_j\right)=\sum_{j=1}^n \operatorname{Cov}\left(Z_j, \xi_j\right)=O(n h)
\end{gathered}
$$

Now, substituting into our expansion:
\begin{align*}
E[\tilde{\tau}^2(K)] &= \frac{\mu_N}{\mu_D} + \left(-\frac{1}{(\mu_D)^2}\right)O(nh) + \frac{1}{2}\left(\frac{2\mu_N}{(\mu_D)^3}\right)O(nh) + E[R_n]\\
&= \frac{\mu_N}{\mu_D} + O\left(\frac{nh}{(nh)^2}\right)= \frac{\mu_N}{\mu_D} + O\left(\frac{1}{nh}\right)
\end{align*}

In terms of bias, we perform a Taylor expansion of $\tau^2(t)$ around $K$:
\begin{align*}
\tau^2(K+hu) &= \tau^2(K) + hu\tau^{2\prime}(K) + \frac{1}{2}h^2u^2\tau^{2\prime\prime}(K) + O(h^3)
\end{align*}

By making a change of variables $t = K + hu$, one see:
\begin{align*}
\int \kappa\left(\frac{t-K}{h}\right) \tau^2(t) g(t) dt &= \int \kappa(u) \tau^2(K+hu) g(K+hu) h du\\
&= h\tau^2(K)g(K)\int \kappa(u) du + h^2\tau^{2\prime}(K)g(K)\int u\kappa(u) du\\
&+ \frac{h^2}{2}\tau^2(K)g'(K)\int u\kappa(u) du + O(h^3)\\
\int \kappa\left(\frac{t-K}{h}\right) g(t) dt &= hg(K)\int \kappa(u) du + \frac{h^2}{2}g'(K)\int u\kappa(u) du + O(h^3)
\end{align*}

Since $\kappa$ is a symmetric kernel, $\int u\kappa(u) du = 0$, and $\int \kappa(u) du = 1$, we get:
\begin{align*}
\frac{\mu_N}{\mu_D} &= \frac{h\tau^2(K)g(K) + O(h^3)}{hg(K) + O(h^3)}\\
&= \tau^2(K) + O(h^2)
\end{align*}

Combining the results and we obtain:
\begin{align*}
E[\tilde{\tau}^2(K)] &= \tau^2(K) + O(h^2) + O\left(\frac{1}{nh}\right)
\end{align*}

Next, we see the variance is computed as:
\begin{align*}
\text{Var}(\tilde{\tau}^2(K)) &= \text{Var}\left(\frac{\mu_N}{\mu_D} + \frac{1}{\mu_D}(N(K) - \mu_N) - \frac{\mu_N}{\mu_D^2}(D(K) - \mu_D) + \mathcal{R}_2\right) \\
&= \text{Var}\left(\frac{1}{\mu_D}(N(K) - \mu_N) - \frac{\mu_N}{\mu_D^2}(D(K) - \mu_D) + \mathcal{R}_2\right) \\
&= \frac{1}{\mu_D^2}\text{Var}(N(K)) + \frac{\mu_N^2}{\mu_D^4}\text{Var}(D(K)) - \frac{2\mu_N}{\mu_D^3}\text{Cov}(N(K), D(K)) + \text{Var}(R_2) \\
&+ \frac{2}{\mu_D}\text{Cov}(N(K), \mathcal{R}_2) - \frac{2\mu_N}{\mu_D^2}\text{Cov}(D(K), \mathcal{R}_2)
\end{align*}

From the properties of Taylor remainders and under our assumption \ref{assup:convergence analysis}, the terms involving $\mathcal{R}_2$ are of smaller order than the main terms. Thus:
\begin{align*}
\text{Var}(\tilde{\tau}^2(K)) &= \frac{1}{(\mu_D)^2}\text{Var}(N(K)) + \frac{(\mu_N)^2}{(\mu_D)^4}\text{Var}(D(K)) - \frac{2\mu_N}{(\mu_D)^3}\text{Cov}(N(K), D(K)) + o\left(\frac{1}{nh}\right) \\
&= \frac{1}{(O(nh))^2} O(nh) + \frac{(O(nh))^2}{(O(nh))^4} O(nh) - \frac{2O(nh)}{(O(nh))^3}O(nh) + o\left(\frac{1}{nh}\right) \\
&= O\left(\frac{1}{nh}\right) + O\left(\frac{1}{nh}\right) - O\left(\frac{1}{nh}\right) + o\left(\frac{1}{nh}\right) = O\left(\frac{1}{nh}\right)
\end{align*}

Now, we obtain the following lemma:
\begin{lemma}[Convergence rate of Nadaraya-Watson Variance Estimator]
\label{lemma:convergence_ND}
Under the conditions that $h \to 0$ and $nh \to \infty$ as $n \to \infty$, $\tilde{\tau}^2(K)$ converges to $\tau^2(K)$ in rate
$$O_p\left(\frac{1}{\sqrt{n h}}\right)+O\left(h^2\right)+O\left(\frac{1}{n h}\right)$$
where $O$ notation represents the exact rate at which a fixed sequence converges, and $O_p$ notation represents the rate at which a random sequence converges in probability.
\end{lemma}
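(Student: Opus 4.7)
The plan is to write $\tilde{\tau}^2(K) = N(K)/D(K)$ as a ratio of two i.i.d.\ sums and analyze it via the delta method combined with standard kernel calculus. The overall strategy is to decompose the error as
$$\tilde{\tau}^2(K) - \tau^2(K) = \bigl(\tilde{\tau}^2(K) - E[\tilde{\tau}^2(K)]\bigr) + \bigl(E[\tilde{\tau}^2(K)] - \tau^2(K)\bigr),$$
bound the bias term and the stochastic term separately, and combine them. The target rates I expect are $O(h^2) + O(1/(nh))$ for the bias and $O_p(1/\sqrt{nh})$ for the stochastic deviation.

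For the bias, I would first compute $\mu_N = E[N(K)]$ and $\mu_D = E[D(K)]$ by conditioning on $K_j$ and then substituting $u = (t-K)/h$ inside the resulting integrals. Using the symmetry of $\kappa$ and a second-order Taylor expansion of $\tau^2(t)g(t)$ and $g(t)$ around $K$ — which is justified by the smoothness clauses of Assumption~\ref{assup:convergence analysis} — the odd moments $\int u\kappa(u)\,du$ vanish and the leading behavior of $\mu_N/\mu_D$ is $\tau^2(K) + O(h^2)$. The $O(1/(nh))$ piece of the bias will fall out of a second-order delta-method expansion of $G(a,b) = a/b$ around $(\mu_N,\mu_D)$: after taking expectations the mean-zero linear terms drop, and the remaining quadratic terms involve $\mathrm{Var}(N), \mathrm{Var}(D), \mathrm{Cov}(N,D)$, each of which is $O(nh)$ by the same i.i.d.\ kernel-integral calculation used for the means. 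Dividing through by the $O((nh)^2)$ denominator factors produces the $O(1/(nh))$ correction.

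For the stochastic term, the same second-order Taylor expansion of $G$ yields
$$\mathrm{Var}(\tilde{\tau}^2(K)) = \frac{\mathrm{Var}(N)}{\mu_D^2} + \frac{\mu_N^2\,\mathrm{Var}(D)}{\mu_D^4} - \frac{2\mu_N\,\mathrm{Cov}(N,D)}{\mu_D^3} + o\!\left(\frac{1}{nh}\right),$$
and plugging the $O(nh)$ and $O((nh)^2)$ magnitudes into each term gives $\mathrm{Var}(\tilde{\tau}^2(K)) = O(1/(nh))$. Chebyshev's inequality then upgrades this to $\tilde{\tau}^2(K) - E[\tilde{\tau}^2(K)] = O_p(1/\sqrt{nh})$. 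Combining with the bias bound via the triangle inequality produces the stated rate.

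The main obstacle will be controlling the Taylor remainder $\mathcal{R}_n$ uniformly enough that it can be dropped at both the mean and variance level. This requires showing that the higher-order partial derivatives of $G(a,b)=a/b$ evaluated in a shrinking neighborhood of $(\mu_N,\mu_D)$, multiplied by the higher central moments of $N$ and $D$, are genuinely of smaller order than $1/(nh)$. This in turn uses two facts that must be checked carefully: $\mu_D \asymp nh$ stays bounded away from zero because $g(K) > 0$ and $\int \kappa = 1$, and the compact support plus boundedness of $\kappa$ together with finite fourth moments of $\varepsilon_j$ keep the higher central moments of $N$ and $D$ well-behaved. Once these two regularity facts are pinned down, the delta-method expansion is rigorous and the lemma follows.
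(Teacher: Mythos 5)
Your proposal matches the paper's proof essentially step for step: the paper likewise computes $\mu_N,\mu_D \asymp nh$ by conditioning and change of variables, performs a second-order delta-method expansion of $G(a,b)=a/b$ around $(\mu_N,\mu_D)$ to extract the $O(1/(nh))$ bias correction and the $O(1/(nh))$ variance, obtains the $O(h^2)$ bias from a Taylor expansion of $\tau^2(t)g(t)$ exploiting kernel symmetry, and then invokes Chebyshev's inequality followed by a bias--variance triangle-inequality combination. The paper also treats the Taylor remainder somewhat informally (asserting it is of smaller order under Assumption~\ref{assup:convergence analysis}), so the obstacle you flag is genuine but is not resolved in more detail in the paper either.
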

\begin{proof}
To analyze the deviation of $\tilde{\tau}^2(K)$ from its expectation, we use Chebyshev's inequality:

\begin{align*}
P\left(|\tilde{\tau}^2(K) - E[\tilde{\tau}^2(K)]| > \varepsilon\right) \leq \frac{\text{Var}(\tilde{\tau}^2(K))}{\varepsilon^2} = O\left(\frac{1}{nh\varepsilon^2}\right)
\end{align*}

Setting $\varepsilon = C/\sqrt{nh}$ for some constant $C$ gives:

\begin{align*}
P\left(|\tilde{\tau}^2(K) - E[\tilde{\tau}^2(K)]| > \frac{C}{\sqrt{nh}}\right) \leq \frac{O(1/(nh))}{C^2/(nh)} = O\left(\frac{1}{C^2}\right)
\end{align*}

This shows:
\begin{align*}
\tilde{\tau}^2(K) - E[\tilde{\tau}^2(K)] = O_p\left(\frac{1}{\sqrt{nh}}\right)
\end{align*}

Combining with our bias result:
\begin{align*}
\tilde{\tau}^2(K) - \tau^2(K) &= [\tilde{\tau}^2(K) - E[\tilde{\tau}^2(K)]] + [E[\tilde{\tau}^2(K)] - \tau^2(K)] \\
&= O_p\left(\frac{1}{\sqrt{nh}}\right) + O(h^2) + O\left(\frac{1}{nh}\right)
\end{align*}

Note that one can obtain the result more directly via bias-variance decomposition. 

The decompose the estimation error can be written as:
\begin{align}
\tilde{\tau}^2(K) - \tau^2(K) = \underbrace{[\tilde{\tau}^2(K) - E[\tilde{\tau}^2(K)]]}_{\text{stochastic error}} + \underbrace{[E[\tilde{\tau}^2(K)] - \tau^2(K)]}_{\text{bias}}
\end{align}

where the stochastic error term converges at rate $O_p(1/\sqrt{nh})$ by Chebyshev's inequality, while the bias term exhibits rate $O(h^2) + O(1/(nh))$ through Taylor expansion.  In the following proof sections, we directly apply the bias-variance decomposition in convergence analysis.

\end{proof}

\subsubsection{Decomposition of the Estimated Residual Variance}

Now consider the actual estimator $\hat{\tau}^2(K)$ which uses the estimated residuals $\hat{\varepsilon}_j = \sigma_j - \hat{f}(K_j)$ instead of the true $\varepsilon_j$. We can decompose it as follows:

$$\hat{\varepsilon}_j^2 = (\sigma_j - \hat{f}(K_j))^2 = (\sigma_j - f(K_j) + f(K_j) - \hat{f}(K_j))^2 = \varepsilon_j^2 + 2\varepsilon_j[f(K_j) - \hat{f}(K_j)] + [f(K_j) - \hat{f}(K_j)]^2$$

Plugging this into the formula for $\hat{\tau}^2(K)$, we get:
\begin{align*}
\hat{\tau}^2(K) &= \frac{\sum_{j=1}^n \kappa\left(\frac{K_j-K}{h}\right) \varepsilon_j^2}{\sum_{j=1}^n \kappa\left(\frac{K_j-K}{h}\right)} + \frac{\sum_{j=1}^n \kappa\left(\frac{K_j-K}{h}\right) 2\varepsilon_j[f(K_j) - \hat{f}(K_j)]}{\sum_{j=1}^n \kappa\left(\frac{K_j-K}{h}\right)} \\
&+ \frac{\sum_{j=1}^n \kappa\left(\frac{K_j-K}{h}\right)[f(K_j) - \hat{f}(K_j)]^2}{\sum_{j=1}^n \kappa\left(\frac{K_j-K}{h}\right)}
\end{align*}

Denote:
\begin{align*}
\text{ND} &= \frac{\sum_{j=1}^n \kappa\left(\frac{K_j-K}{h}\right) \varepsilon_j^2}{\sum_{j=1}^n \kappa\left(\frac{K_j-K}{h}\right)},\quad
A = \frac{\sum_{j=1}^n \kappa\left(\frac{K_j-K}{h}\right) 2\varepsilon_j[f(K_j) - \hat{f}(K_j)]}{\sum_{j=1}^n \kappa\left(\frac{K_j-K}{h}\right)} \\
B &= \frac{\sum_{j=1}^n \kappa\left(\frac{K_j-K}{h}\right)[f(K_j) - \hat{f}(K_j)]^2}{\sum_{j=1}^n \kappa\left(\frac{K_j-K}{h}\right)}
\end{align*}

We need to show that $A$ and $B$ converge to zero as $n \rightarrow \infty$, $h \rightarrow 0$, and $nh \rightarrow \infty$.

We first introduce an asymptotic convergence theorem from \citep{fan1996study} :
\begin{theorem}
\label{theorem:fan}
Assume that $g\left(k\right)>0$ and that $g(\cdot), f^{(p+1)}(\cdot)$ and $\sigma^2(\cdot)$ are continuous in a neighborhood of $x_0$. Further, assume that $h \rightarrow 0$ and $n h \rightarrow \infty$. Then the asymptotic conditional variance of $\widehat{f}_\nu\left(k\right)$ is given by

$$
\begin{gathered}
\operatorname{Var}\left\{\hat{f}_\nu\left(x_0\right) \mid  \mathbb{K}\right\}=e_{\nu+1}^T S^{-1} S^* S^{-1} e_{\nu+1} \frac{\nu!^2 \tau^2\left(k\right)}{g\left(k\right) n h^{1+2 \nu}} \\
+o_P\left(\frac{1}{n h^{1+2 \nu}}\right)
\end{gathered}
$$

The asymptotic conditional bias for $p-\nu$ odd is given by

$$
\begin{gathered}
\operatorname{Bias}\left\{\widehat{f}_\nu\left(k\right) \mid  \mathbb{K}\right\}=e_{\nu+1}^T S^{-1} c_p \frac{\nu!}{(p+1)!} f^{(p+1)}\left(k\right) h^{p+1-\nu} \\
+o_P\left(h^{p+1-\nu}\right)
\end{gathered}
$$

Further, for $p-\nu$ even the asymptotic conditional bias is

$$
\begin{aligned}
\operatorname{Bias}\left\{\widehat{f}_\nu\left(x_0\right) \mid\mathbb{K}\right\}= & e_{\nu+1}^T S^{-1} \tilde{c}_p \frac{\nu!}{(p+2)!}\left\{f^{(p+2)}\left(k\right)\right. \\
& \left.+(p+2) f^{(p+1)}\left(k\right) \frac{g^{\prime}\left(k\right)}{g\left(k\right)}\right\} h^{p+2-\nu} \\
& +o_P\left(h^{p+2-\nu}\right)
\end{aligned}
$$

provided that $g^{\prime}(\cdot)$ and $f^{(p+2)}(\cdot)$ are continuous in a neighborhood of $k$ and $n h^3 \rightarrow \infty$.    
\end{theorem}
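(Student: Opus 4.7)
The plan is to start from the closed form $\hat{\alpha} = (\mathbf{X}^T\mathbf{W}\mathbf{X})^{-1}\mathbf{X}^T\mathbf{W}\sigma$ given earlier in the paper, set $\hat{f}_\nu(k)=\nu!\,\hat{\alpha}_\nu=\nu!\,e_{\nu+1}^T(\mathbf{X}^T\mathbf{W}\mathbf{X})^{-1}\mathbf{X}^T\mathbf{W}\sigma$, and decompose the response as $\sigma_i = f(K_i)+\varepsilon_i$. With $\alpha_j = f^{(j)}(k)/j!$ and $r_i = f(K_i)-\sum_{j=0}^{p}\alpha_j(K_i-k)^j$ the Taylor remainder, this gives the conditional decomposition
\begin{equation*}
\hat{\alpha}-\alpha \;=\; (\mathbf{X}^T\mathbf{W}\mathbf{X})^{-1}\mathbf{X}^T\mathbf{W}r \;+\; (\mathbf{X}^T\mathbf{W}\mathbf{X})^{-1}\mathbf{X}^T\mathbf{W}\varepsilon,
\end{equation*}
where, conditional on $\mathbb{K}$, the first term is deterministic (contributing the bias) and the second is mean-zero (contributing the variance). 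From here everything reduces to asymptotic analysis of the weighted moments.

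Next I would introduce the diagonal scaling $H=\mathrm{diag}(1,h,h^2,\ldots,h^p)$ and $u_i=(K_i-k)/h$, so that $\kappa_h(K_i-k)=\kappa(u_i)/h$. A Riemann-sum/LLN argument combined with a Taylor expansion of $g$ around $k$ (using Assumption \ref{assup:convergence analysis}) gives
\begin{equation*}
\tfrac{1}{n}H^{-1}(\mathbf{X}^T\mathbf{W}\mathbf{X})H^{-1} \;=\; g(k)\,S + o_P(1),
\qquad
\tfrac{h}{n}H^{-1}(\mathbf{X}^T\mathbf{W}\Sigma\mathbf{W}\mathbf{X})H^{-1} \;=\; g(k)\tau^2(k)\,S^* + o_P(1),
\end{equation*}
with $S$ and $S^*$ the standard kernel-moment matrices used in the theorem statement. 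Sandwiching and extracting the $(\nu+1,\nu+1)$ entry, pulling two factors of $h^{-\nu}$ from the two $H^{-1}$'s and a $(nh)^{-1}$ from the normalization, immediately yields the stated variance formula with the $\nu!^2/(g(k)nh^{1+2\nu})$ rate.

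For the bias I would plug $r_i=\alpha_{p+1}(K_i-k)^{p+1}+\alpha_{p+2}(K_i-k)^{p+2}+\cdots$ into $(\mathbf{X}^T\mathbf{W}\mathbf{X})^{-1}\mathbf{X}^T\mathbf{W}r$. Using the same rescaling, the $\ell$-th entry of $H^{-1}\mathbf{X}^T\mathbf{W}r$ is, to leading order, $n g(k)\alpha_{p+1}h^{p+1+\ell}\mu_{p+1+\ell}(\kappa)$. Sandwiching with $H^{-1}(\mathbf{X}^T\mathbf{W}\mathbf{X})^{-1}H^{-1}\approx (g(k)S)^{-1}/n$ and multiplying by $\nu!\,e_{\nu+1}^T$ produces the $p-\nu$ odd formula with coefficient $e_{\nu+1}^T S^{-1} c_p\, \nu!\,f^{(p+1)}(k)\,h^{p+1-\nu}/(p+1)!$, where $c_p=(\mu_{p+1},\ldots,\mu_{2p+1})^T$.

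The main obstacle is the $p-\nu$ even case, where the above leading coefficient $e_{\nu+1}^T S^{-1} c_p$ actually vanishes and a second-order expansion is required. The vanishing follows from a parity argument: since $\kappa$ is symmetric, $\mu_j=0$ for odd $j$, so $S$ is block-diagonal after permuting rows/columns by parity, and when $p-\nu$ is even the vector $c_p$ lies in the opposite parity block from $e_{\nu+1}$, forcing the inner product to be zero. Once this is established, I would push both the Taylor expansion of $f$ to order $p+2$ and the expansion of $g(K_i)=g(k)+g'(k)(K_i-k)+O((K_i-k)^2)$ one step further inside the moment sums. Two contributions of order $h^{p+2-\nu}$ then survive: the $\alpha_{p+2}$ term, giving $f^{(p+2)}(k)$, and a cross term coming from multiplying $\alpha_{p+1}(K_i-k)^{p+1}$ against the $g'(k)(K_i-k)$ correction, which after careful accounting of the combinatorial factor yields exactly $(p+2)f^{(p+1)}(k)g'(k)/g(k)$. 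Tracking this $(p+2)$ coefficient and confirming that no further terms of the same order appear is the delicate part; the rest is bookkeeping.
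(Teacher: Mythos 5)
You should note at the outset that the paper itself offers no proof of Theorem \ref{theorem:fan}: it is imported verbatim from \citep{fan1996study} (see also \citep{fan2018local}), so the only meaningful comparison is with the standard derivation behind that cited result. Your plan is exactly that standard route (split $\hat{\alpha}-\alpha$ into the Taylor-remainder part and the noise part, rescale by $H=\mathrm{diag}(1,h,\ldots,h^p)$, pass to kernel-moment limits), and for the conditional variance and the $p-\nu$ odd bias it is essentially correct, up to small bookkeeping slips in the sketch: the $\ell$-th entry of $H^{-1}\mathbf{X}^T\mathbf{W}r$ is of order $n\,g(k)\,\alpha_{p+1}h^{p+1}\mu_{p+1+\ell}$ (the $h^{\ell}$ cancels), and the normalization should read $H(\mathbf{X}^T\mathbf{W}\mathbf{X})^{-1}H\approx (n\,g(k)\,S)^{-1}$, not $H^{-1}(\mathbf{X}^T\mathbf{W}\mathbf{X})^{-1}H^{-1}\approx (g(k)S)^{-1}/n$; as literally written your two displayed approximations do not assemble into the final formula, although the final formula you state is the right one.

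The genuine gap is in the $p-\nu$ even case, and it is not "just bookkeeping." Once you expand $g(K_i)=g(k)+g'(k)(K_i-k)+\cdots$ inside the moment sums, the same $O(h)$ correction enters $S_n=\mathbf{X}^T\mathbf{W}\mathbf{X}$ itself: $\tfrac1n H^{-1}S_nH^{-1}=g(k)S+h\,g'(k)\tilde S+o_P(h)$ with $\tilde S=(\mu_{j+\ell+1})_{0\le j,\ell\le p}$, so expanding $S_n^{-1}$ produces an additional contribution of the same order $h^{p+2-\nu}$, proportional to $f^{(p+1)}(k)\tfrac{g'(k)}{g(k)}\,e_{\nu+1}^TS^{-1}\tilde S S^{-1}c_p$. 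Your enumeration of surviving terms (the $\alpha_{p+2}$ term and the cross term of $\alpha_{p+1}$ with the $g'$ correction inside $\mathbf{X}^T\mathbf{W}r$) omits this third term, and reconciling it with the stated clean coefficient, in which only $e_{\nu+1}^TS^{-1}\tilde c_p$ appears with the factor $(p+2)$, is precisely the delicate step: for $\nu=0$ one can show $e_{1}^TS^{-1}\tilde S S^{-1}c_p=0$ by a parity/orthogonality identity for the equivalent kernel, but that has to be proved, and for general $\nu$ with $p-\nu$ even it does not follow from the parity argument you give for $e_{\nu+1}^TS^{-1}c_p=0$. Relatedly, your sketch never invokes the hypothesis $nh^3\to\infty$ that the theorem explicitly requires for the even case: the deterministic $O(h)$ design correction in $S_{n,j}$ only dominates the stochastic fluctuation of the moment sums, which is of relative order $O_P\bigl((nh)^{-1/2}\bigr)$, when $nh^3\to\infty$; without this step the claimed $o_P(h^{p+2-\nu})$ remainder in the even-case bias is unjustified. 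Completing the proof requires carrying out this second-order matrix expansion and the accompanying concentration argument explicitly.
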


By leveraging this theorem, we obtain:

\begin{corollary}[Convergence of A]
\label{corollary:convergence_a}
Under the conditions that $h \to 0$ and $nh \to \infty$ as $n \to \infty$, term $A$ converges to zero in probability at rate:
$$A = O_p\left(h^{p+1} + \frac{1}{n^{1/2}h^{1/2}}\right)$$
\end{corollary}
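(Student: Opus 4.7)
I would reduce the problem to two separate rate estimates — one for the kernel-weighted noise $\sum_j \kappa_j\varepsilon_j^2$ and one for the kernel-weighted pilot error $\sum_j \kappa_j(f(K_j)-\hat f(K_j))^2$ — and then recombine them via Cauchy--Schwarz. Writing $\kappa_j := \kappa((K_j-K)/h)$ and
\[
A \;=\; \frac{2\sum_{j=1}^n \kappa_j\,\varepsilon_j\,[f(K_j)-\hat f(K_j)]}{\sum_{j=1}^n \kappa_j} \;=\; \frac{N}{D},
\]
the denominator $D$ can be handled first: the SLLN/Chebyshev argument already used for the Nadaraya--Watson denominator in Lemma~\ref{lemma:convergence_ND} gives $D/(nh)\xrightarrow{p} g(K)\int\kappa>0$ under Assumption~\ref{assup:convergence analysis}, so $D^{-1}=O_p(1/(nh))$.

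\textbf{Cauchy--Schwarz split.} Taking weights $\sqrt{\kappa_j}$,
\[
|N/2|^{2} \;\le\; \Bigl(\sum_{j} \kappa_j\varepsilon_j^{2}\Bigr)\Bigl(\sum_{j}\kappa_j[f(K_j)-\hat f(K_j)]^{2}\Bigr).
\]
This is the crucial trick: it decouples the noise factor from the pilot-error factor without any explicit cross-moment computation. The first factor is routine — iterated expectations and the change of variables $u=(t-K)/h$ give $E[\sum_j \kappa_j\varepsilon_j^{2}]=n\int\kappa((t-K)/h)\tau^{2}(t)g(t)\,dt = O(nh)$, with variance of the same order, so $\sum_j\kappa_j\varepsilon_j^{2}=O_p(nh)$ by Chebyshev, essentially repeating the denominator/numerator analysis from Lemma~\ref{lemma:convergence_ND}.

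\textbf{The pilot-error factor.} This is where Theorem~\ref{theorem:fan} enters. Rather than seeking a uniform bound on $|f(K_j)-\hat f(K_j)|$, I would condition on the design $\mathbb{K}$ and bound only in $L^{2}$. The pointwise bias--variance decomposition from Theorem~\ref{theorem:fan} with $\nu=0$ gives, uniformly for $K_j$ in a neighborhood of $K$ (which is all that matters since $\kappa$ has bounded support),
\[
E\bigl[(f(K_j)-\hat f(K_j))^{2}\,\big|\,\mathbb{K}\bigr] \;=\; \mathrm{Bias}_j^{2}+\mathrm{Var}_j \;=\; O(h^{2(p+1)})+O\!\left(\tfrac{1}{nh}\right),
\]
with constants depending only on local bounds of $g$, $f^{(p+1)}$, and $\tau^{2}$, all finite under Assumption~\ref{assup:convergence analysis}. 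Summing with kernel weights,
\[
E\!\left[\sum_j \kappa_j(f(K_j)-\hat f(K_j))^{2}\,\bigg|\,\mathbb{K}\right] \;=\; O(nh\cdot h^{2(p+1)})+O(nh\cdot\tfrac{1}{nh}) \;=\; O(nh^{2p+3}+1),
\]
and Markov's inequality upgrades this to $\sum_j \kappa_j(f(K_j)-\hat f(K_j))^{2}=O_p(nh^{2p+3}+1)$.

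\textbf{Assembly and main obstacle.} Plugging both bounds into the Cauchy--Schwarz inequality gives $|N|=O_p\bigl(\sqrt{nh\cdot(nh^{2p+3}+1)}\bigr)=O_p(nh\cdot h^{p+1})+O_p(\sqrt{nh})$, and dividing by $D=O_p(nh)$ yields $|A|=O_p(h^{p+1})+O_p\bigl(1/\sqrt{nh}\bigr)$, exactly as claimed. The subtle point — and the reason Cauchy--Schwarz is the right move — is that $\hat f(K_j)$ depends on all observations, including the very $\varepsilon_j$ that multiplies the pilot error in the definition of $A$. A direct expansion $\hat f(K_j)-f(K_j)=\text{Bias}_j+\sum_i w_{ij}\varepsilon_i$ forces careful bookkeeping of the hat-matrix weights $w_{ij}$ and the diagonal contribution $-2\sum_j \kappa_j w_{jj}\varepsilon_j^{2}$, together with all cross-moments $E[\varepsilon_i\varepsilon_j\varepsilon_{i'}\varepsilon_{j'}]$. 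Cauchy--Schwarz together with conditioning on $\mathbb{K}$ sidesteps all of this and reduces the proof to a pointwise application of Theorem~\ref{theorem:fan} plus Markov's inequality.
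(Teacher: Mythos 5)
Your proposal follows essentially the same route as the paper: split $N_A$ via Cauchy--Schwarz into a kernel-weighted noise factor and a kernel-weighted pilot-error factor, bound the former by $O_p(nh)$, bound the latter by $O_p(nh^{2p+3}+1)$ using Fan's bias/variance theorem, and divide by $D=O_p(nh)$. Your version is in fact slightly more careful: the $\sqrt{\kappa_j}$ split is the correct Cauchy--Schwarz weighting (the paper writes $\kappa_j^2$ inside each factor, which does not actually correspond to the quantity $\sum_j \kappa_j\varepsilon_j[f-\hat f]$, although the orders are unaffected since $\kappa$ is bounded with bounded support), and your conditioning on $\mathbb{K}$ plus Markov's inequality is a cleaner justification for the pilot-error sum than the paper's informal ``sum $n$ copies of $O_p(\cdot)$,'' while your closing remark correctly identifies why the Cauchy--Schwarz route avoids bookkeeping the dependence of $\hat f(K_j)$ on $\varepsilon_j$.
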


\begin{proof}
Let us denote the numerator and denominator of $A$ as:
\begin{align}
N_A &= \sum_{j=1}^n \kappa\left(\frac{K_j-K}{h}\right) 2\varepsilon_j[f(K_j) - \hat{f}(K_j)] ,\quad
D = \sum_{j=1}^n \kappa\left(\frac{K_j-K}{h}\right)
\end{align}

Since $\varepsilon_j$ and $f(K_j) - \hat{f}(K_j)$ are not independent, we apply the Cauchy-Schwarz inequality:
\begin{align}
|N_A| &\leq 2\left(\sum_{j=1}^n \kappa\left(\frac{K_j-K}{h}\right)^2 \varepsilon_j^2\right)^{1/2} \left(\sum_{j=1}^n \kappa\left(\frac{K_j-K}{h}\right)^2 [f(K_j) - \hat{f}(K_j)]^2\right)^{1/2}
\end{align}

For the first term involving $\varepsilon_j^2$, we have $
\sum_{j=1}^n \kappa\left(\frac{K_j-K}{h}\right)^2 \varepsilon_j^2 = O_p(nh)$.

For the second term involving $[f(K_j)-\hat{f}(K_j)]^2$, we utilize the asymptotic properties from Fan's theorem \ref{theorem:fan}. Specifically, the local polynomial estimator $\hat{f}(x)$ has:
\begin{align}
\text{Bias}[\hat{f}(x)] &= 
\begin{cases}
O(h^{p+1}), & \text{for $p$ odd} \\
O(h^{p+2}), & \text{for $p$ even}
\end{cases}, \quad
\text{Var}[\hat{f}(x)] = O\left(\frac{1}{nh}\right)
\end{align}

WLOG, we assume $p$ odd for proposition \ref{prop:tau convergence rate} and proposition \ref{prop:ACMSE to TRUE MSE}. Similar analysis can be done for $p$ even by leveraging theorem \ref{theorem:fan}.

Next we apply the standard decomposition of the mean squared error:
\begin{align}
f(K_j)-\hat{f}(K_j) = O_p\left(h^{p+1} + \frac{1}{\sqrt{nh}}\right)
\end{align}

The second term in our Cauchy-Schwarz bound becomes:
\begin{align}
\sum_{j=1}^n \kappa\left(\frac{K_j-K}{h}\right)^2 [f(K_j) - \hat{f}(K_j)]^2 &= O_p\left(nh \cdot \left(h^{p+1} + \frac{1}{\sqrt{nh}}\right)^2\right) \\
&= O_p\left(nh \cdot \left(h^{2(p+1)} + \frac{2h^{p+1}}{\sqrt{nh}} + \frac{1}{nh}\right)\right) \\
&= O_p\left(nh^{2p+3} + 2n^{1/2}h^{p+3/2} + 1\right) \\
&= O_p\left(nh^{2p+3} + 1\right)
\end{align}

Substituting these bounds back into our Cauchy-Schwarz application:
\begin{align}
|N_A| &= O_p\left((nh)^{1/2} \cdot (nh^{2p+3} + 1)^{1/2}\right) \\
&= O_p\left((nh)^{1/2} \cdot ((nh^{2p+3})^{1/2} + 1)\right) \\
&= O_p\left(n^{1/2}h^{1/2} \cdot (n^{1/2}h^{p+3/2} + 1)\right) \\
&= O_p\left(nh^{p+2} + n^{1/2}h^{1/2}\right)
\end{align}

From our earlier analysis of the denominator, $D = O_p(nh)$. Therefore:
\begin{align}
|A| = \frac{|N_A|}{D} = \frac{O_p(nh^{p+2} + n^{1/2}h^{1/2})}{O_p(nh)} = O_p\left(h^{p+1} + \frac{1}{n^{1/2}h^{1/2}}\right)
\end{align}

Since $h \to 0$ and $nh \to \infty$ as $n \to \infty$, both terms converge to zero, proving that $A \xrightarrow{p} 0$.
\end{proof}

\begin{corollary}[Convergence of B]
\label{corollary:convergence_b}
Under the conditions that $h \to 0$ and $nh \to \infty$ as $n \to \infty$, term $B$ converges to zero in probability at rate:
$$B = O_p\left(h^{2p+2} + \frac{1}{nh}\right)$$
\end{corollary}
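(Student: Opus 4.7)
The plan is to mirror the structure of the proof of Corollary \ref{corollary:convergence_a} but exploit the simplification that $B$ contains no $\varepsilon_j$ factor, so no Cauchy--Schwarz step is needed. First I would split $B = N_B/D$ where
\begin{align*}
N_B &= \sum_{j=1}^n \kappa\!\left(\tfrac{K_j-K}{h}\right)\bigl[f(K_j) - \hat{f}(K_j)\bigr]^2, \\
D &= \sum_{j=1}^n \kappa\!\left(\tfrac{K_j-K}{h}\right).
\end{align*}
From the moment analysis already carried out in the proof of Lemma \ref{lemma:convergence_ND}, we have $D = O_p(nh)$, so the task reduces to bounding $N_B$.

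Next I would import the pointwise pilot-estimator rate from Theorem \ref{theorem:fan}. For $p$ odd (WLOG, as in Corollary \ref{corollary:convergence_a}), the bias--variance decomposition gives
\begin{equation*}
f(K_j) - \hat{f}(K_j) = O_p\!\left(h^{p+1} + \tfrac{1}{\sqrt{nh}}\right).
\end{equation*}
Squaring and expanding produces three terms: $h^{2(p+1)}$, the cross term $2\,h^{p+1}/\sqrt{nh}$, and $1/(nh)$. The cross term is absorbed by AM--GM, $2\,h^{p+1}/\sqrt{nh}\le h^{2p+2}+1/(nh)$, so
\begin{equation*}
[f(K_j) - \hat{f}(K_j)]^2 = O_p\!\left(h^{2p+2} + \tfrac{1}{nh}\right).
\end{equation*}
Because $\kappa$ has bounded support, only indices with $|K_j-K|\le h$ contribute to $N_B$, and on this shrinking neighborhood the rate from Theorem \ref{theorem:fan} continues to hold (Assumption \ref{assup:convergence analysis} gives the required continuity of $g,f^{(p+1)},\tau^2$ in a neighborhood of $K$). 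Factoring this uniform stochastic bound out of the kernel-weighted sum yields
\begin{equation*}
N_B = O_p\!\left(nh\right)\cdot O_p\!\left(h^{2p+2} + \tfrac{1}{nh}\right) = O_p\!\left(nh^{2p+3} + 1\right).
\end{equation*}

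Dividing by $D=O_p(nh)$ gives exactly $B = O_p(h^{2p+2} + 1/(nh))$, and the stated conditions $h\to 0$, $nh\to\infty$ then drive both summands to zero. The main obstacle is the step where I replace the pointwise rate on $f(K_j)-\hat{f}(K_j)$ by a bound that can be pulled outside the weighted sum. Strictly speaking this requires a \emph{uniform} (over the kernel support) version of Fan's pointwise MSE rate, which does follow from the continuity assumptions in Assumption \ref{assup:convergence analysis} together with the fact that the effective neighborhood has width $2h\to 0$, but this is the one place where a reader will want more than a line. A clean way to discharge it is to take expectations of $N_B$ conditional on $\mathbb{K}$, use the MSE rate for $\hat f$ uniformly on the shrinking window, and then apply Markov's inequality to convert the conditional moment bound into the $O_p$ statement.
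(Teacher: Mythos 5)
Your proposal matches the paper's proof in all essentials: the same decomposition $B = N_B/D$, the same reduction of $[f(K_j)-\hat f(K_j)]^2$ to $O_p(h^{2p+2}+1/(nh))$ via Theorem \ref{theorem:fan}, and the same conditional-expectation-then-Markov route to turn the pointwise rate into a bound on $N_B$. In fact your suggested fix for the ``factor the rate out of the sum'' step is slightly sharper than the paper's, which computes $E[N_B]$ and then cites the Law of Large Numbers where Markov's inequality for the nonnegative sum $N_B$ is the cleaner justification.
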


\begin{proof}
Let us denote the numerator and denominator of $B$ as:
\begin{align}
N_B &= \sum_{j=1}^n \kappa\left(\frac{K_j-K}{h}\right)[f(K_j) - \hat{f}(K_j)]^2, \quad
D = \sum_{j=1}^n \kappa\left(\frac{K_j-K}{h}\right)
\end{align}

From the asymptotic properties of the local polynomial estimator in theorem \ref{theorem:fan}
\begin{align}
[f(K_j) - \hat{f}(K_j)]^2 = O_p\left(\left(h^{p+1} + \frac{1}{\sqrt{nh}}\right)^2\right) = O_p\left(h^{2(p+1)} + \frac{2h^{p+1}}{\sqrt{nh}} + \frac{1}{nh}\right)
\end{align}

Using the law of iterated expectations:
\begin{align}
E[N_B] &= E\left[\sum_{j=1}^n \kappa\left(\frac{K_j-K}{h}\right)E\left[[f(K_j) - \hat{f}(K_j)]^2 | K_j\right]\right]
\end{align}

Since the expected squared error equals the squared bias plus the variance:
\begin{align}
E\left[[f(K_j) - \hat{f}(K_j)]^2 | K_j\right] = O\left(h^{2(p+1)} + \frac{1}{nh}\right)
\end{align}

Therefore:
\begin{align}
E[N_B] &= O\left(h^{2(p+1)} + \frac{1}{nh}\right) \sum_{j=1}^n E\left[\kappa\left(\frac{K_j-K}{h}\right)\right] \\
&= O\left(h^{2(p+1)} + \frac{1}{nh}\right) \cdot n \cdot O(h)= O\left(nh^{2p+3} + 1\right)
\end{align}

By the Law of Large Numbers:
\begin{align}
N_B = O_p\left(nh^{2p+3} + 1\right)
\end{align}

From our earlier analysis, $D = O_p(nh)$. Therefore:
\begin{align}
B = \frac{N_B}{D} = \frac{O_p\left(nh^{2p+3} + 1\right)}{O_p(nh)} = O_p\left(h^{2p+2} + \frac{1}{nh}\right)
\end{align}

Since $h \to 0$ and $nh \to \infty$ as $n \to \infty$, both terms converge to zero, proving that $B \xrightarrow{p} 0$.
\end{proof}

Recall:
\begin{align*}
\tilde{\tau}^2(K) - \tau^2(K) &= O_p\left(\frac{1}{\sqrt{nh}}\right) + O(h^2) + O\left(\frac{1}{nh}\right),\quad
A = O_p\left(h^{p+1} + \frac{1}{n^{1/2}h^{1/2}}\right) \\
B &= O_p\left(h^{2p+2} + \frac{1}{nh}\right)
\end{align*}

Since $\hat{\tau}^2(K) = \tilde{\tau}^2(K) + A + B$, we have:
\begin{align*}
|\hat{\tau}^2(K) - \tau^2(K)| &= |(\tilde{\tau}^2(K) - \tau^2(K)) + A + B| \\
&\leq |\tilde{\tau}^2(K) - \tau^2(K)| + |A| + |B| \\
&= O_p\left(\frac{1}{\sqrt{nh}}\right) + O(h^2) + O\left(\frac{1}{nh}\right) + O_p\left(h^{p+1} + \frac{1}{n^{1/2}h^{1/2}}\right) + O_p\left(h^{2p+2} + \frac{1}{nh}\right)
\end{align*}

Note that $h^{2p+2} = o(h^{p+1})$ since $h \to 0$ and $p \geq 0$, and $\frac{1}{nh} = o\left(\frac{1}{n^{1/2}h^{1/2}}\right)$ since $\frac{n^{1/2}h^{1/2}}{nh} = \frac{1}{n^{1/2}h^{1/2}} \to 0$ as $nh \to \infty$.

Therefore, simplifying and keeping only the dominant terms:
\begin{align*}
|\hat{\tau}^2(K) - \tau^2(K)| &= O_p\left(\frac{1}{\sqrt{nh}}\right) + O_p(h^{p+1}) + O(h^2) + O\left(\frac{1}{nh}\right)
\end{align*}

In conclusion, we have established the consistency and convergence rate of the Nadaraya-Watson variance estimator $\hat{\tau}^2(K)$ used in the Order Adaptive Local Regression algorithm, which ensures that our approach provides reliable error variance estimates for implied volatility modeling.

\subsection{Proof for proposition \ref{prop:hp are global optimal}}
\label{proposition: cvp}
Following notations from \citep{fan2018local}, we write

$$
C_{\nu, p}(K)=\left[\frac{(p+1)!^2(2 \nu+1) \int K_\nu^{* 2}(t) d t}{2(p+1-\nu)\left\{\int t^{p+1} K_\nu^*(t) d t\right\}^2}\right]^{1 /(2 p+3)}
$$

We also denote $S=\left(\mu_{j+\ell}\right)_{0 \leq j, \ell \leq p}$ as the moment matrix of the kernel, $S^*=\left(\nu_{j+\ell}\right)_{0 \leq j, \ell \leq p}$ as the second moment matrix of the kernel, $c_p=\left(\mu_{p+1}, \ldots, \mu_{2 p+1}\right)^T$, $C_1(p)=\left(e_1^T S^{-1} c_p \frac{1}{(p+1)!}\right)^2$, $C_2(p)=e_1^T S^{-1} S^* S^{-1} e_1$

For the special case where $\nu=0$ (estimating the function itself), we have:$
C_{0, p}(K)=\left[\frac{C_2(p)}{2(p+1) C_1(p)}\right]^{1 /(2 p+3)}
$

Here is the proof:

\begin{proof}
For $\int K_\nu^{* 2}(t) d t$, we transform this into an equivalent form by the definition of the equivalent kernel $K_\nu^*$ from \citep{fan2018local}:$K_\nu^*(t)=e_{\nu+1}^T S^{-1}\left(1, t, \ldots, t^p\right)^T K(t)$. We square this expression and integrate:$
\int K_\nu^{* 2}(t) d t=\int\left[e_{\nu+1}^T S^{-1}\left(1, t, \ldots, t^p\right)^T K(t)\right]^2 d t
$. This can be rewritten by noting that for two vectors $a$ and $b,\left(a^T b\right)^2=a^T\left(b b^T\right) a$ :$
\int K_\nu^{* 2}(t) d t=\int e_{\nu+1}^T S^{-1}\left(1, t, \ldots, t^p\right)^T K(t) \cdot K(t)\left(1, t, \ldots, t^p\right) S^{-1} e_{\nu+1} d t
$. Since $e_{\nu+1}^T S^{-1}$ and $S^{-1} e_{\nu+1}$ are independent of $t$, we can move them outside the integral:

$$
e_{\nu+1}^T S^{-1}\left[\int\left(1, t, \ldots, t^p\right)^T\left(1, t, \ldots, t^p\right) K^2(t) d t\right] S^{-1} e_{\nu+1}
$$

The matrix inside the integral has elements $\int t^{i+j} K^2(t) d t$, which is precisely the definition of $S^*$.

Therefore:

$$
\int K_\nu^{* 2}(t) d t=e_{\nu+1}^T S^{-1} S^* S^{-1} e_{\nu+1}
$$

For $\int t^{p+1} K_\nu^*(t) d t$, we transform this term into an equivalent form by expanding with the equivalent kernel:$
\int t^{p+1} K_\nu^*(t) d t=\int t^{p+1} e_{\nu+1}^T S^{-1}\left(1, t, \ldots, t^p\right)^T K(t) d t
$. Since $e_{\nu+1}^T S^{-1}$ is independent of $t$, i.e. $
\int t^{p+1} K_\nu^*(t) d t=e_{\nu+1}^T S^{-1}\left[\int t^{p+1}\left(1, t, \ldots, t^p\right)^T K(t) d t\right]
$, the vector inside the brackets has components $\int t^{p+1+j} K(t) d t$ for $j=0,1, \ldots, p$, which corresponds exactly to the vector $c_p=\left(\mu_{p+1}, \mu_{p+2}, \ldots, \mu_{2 p+1}\right)^T$ defined in the text, where $\mu_j=\int u^j K(u) d u$.

Therefore, one have $
\int t^{p+1} K_\nu^*(t) d t=e_{\nu+1}^T S^{-1} c_p
$. Substituting the established identities for the integrals, one get:

$$
\begin{gathered}
C_{0, p}(K)=\left[\frac{(p+1)!^2 \cdot e_1^T S^{-1} S^* S^{-1} e_1}{2(p+1)\left\{e_1^T S^{-1} c_p\right\}^2}\right]^{1 /(2 p+3)} \\
=\left[\frac{(p+1)!^2 \cdot C_2(p)}{2(p+1) \cdot\left\{e_1^T S^{-1} c_p\right\}^2}\right]^{1 /(2 p+3)}
\end{gathered}
$$

Since $C_1(p)=\left(e_1^T S^{-1} c_p \frac{1}{(p+1)!}\right)^2$, we have $(p+1)!^2 \cdot C_1(p)=\left\{e_1^T S^{-1} c_p\right\}^2$
Therefore:

$$
C_{0, p}(K)=\left[\frac{C_2(p)}{2(p+1) C_1(p)}\right]^{1 /(2 p+3)}
$$
\end{proof}
This completes the proof for our claim.

For a fixed polynomial order $p$, we take the derivative of $\hat{Z}_k(p, h)$ with respect to $h$ :

$$
\frac{\partial \hat{Z}_k(p, h)}{\partial h}=2(p+1) C_1(p) \cdot h^{2 p+1} \cdot\left[\hat{f}^{(p+1)}(k)\right]^2-\frac{C_2(p) \cdot \hat{\tau}^2(k)}{n h^2 \cdot \hat{g}(k)}
$$

Setting this equal to zero and solving for $h$, we get:

$$h=\left[\frac{C_2(p)}{2(p+1) C_1(p)}\right]^{\frac{1}{2 p+3}} \cdot\left[\frac{\hat{\tau}^2(k)}{\left[\hat{f}^{(p+1)}(k)\right]^2 \cdot \hat{g}(k)}\right]^{\frac{1}{2 p+3}} \cdot n^{-\frac{1}{2 p+3}}$$

By our previous claim, the constant term $\left[\frac{C_2(p)}{2(p+1) C_1(p)}\right]^{\frac{1}{2 p+3}}$ corresponds to $C_{v, p}$ in our formula, confirming that:

$$
\hat{h}_{o p t}(p)=C_{v, p}\left[\frac{\hat{\tau}^2(k)}{\left[\hat{f}^{(p+1)}(k)\right]^2 \hat{g}(k)}\right]^{\frac{1}{2 p+3}} n^{-\frac{1}{2 p+3}}
$$

Further, we notice that by taking the second order derivative:

$$
\frac{\partial^2 \hat{Z}_k(p, h)}{\partial h^2}=2(p+1)(2 p+1) C_1(p) \cdot h^{2 p} \cdot\left[\hat{f}^{(p+1)}(k)\right]^2+\frac{2 C_2(p) \cdot \hat{\tau}^2(k)}{n h^3 \cdot \hat{g}(k)}
$$

At any point $h>0$, this second derivative is strictly positive because:
$p>0$, $C_1(p)>0$ and $C_2(p)>0$, where $C_2(p)$ is a quadratic form with a positive regular kernel ,thus producing a positive value.$\left[\hat{f}^{(p+1)}(k)\right]^2>0$ , $\hat{\tau}^2(k)>0$, $\hat{g}(k)>0$. Therefore, the function $\hat{Z}_k(p, h)$ is strictly convex in $h$ for fixed $p$, and the critical point $\hat{h}_{\text {opt }}(p)$ is guaranteed to be a global minimum with respect to $h$.

\subsection{Proof for proposition \ref{prop:ACMSE to TRUE MSE}}
\begin{proof}
To start with, we first analyze the residual terms for estimated functions $\hat{f}^{(p+1)}(\cdot)$ and $\hat{g}(\cdot)$. WLOG, we assume polynominal $p$ odd following same settings as proposition \ref{prop:tau convergence rate}.

Recall that we have:
\begin{align}
\operatorname{Bias}\{\hat{f}_\nu(x_0) | \mathbb{X}\} &= e_{\nu+1}^T S^{-1} c_p \frac{\nu!}{(p+1)!} f^{(p+1)}(x_0) h^{p+1-\nu} + o_P(h^{p+1-\nu}) \\
\operatorname{Var}\{\hat{f}_\nu(x_0) | \mathbb{X}\} &= e_{\nu+1}^T S^{-1} S^* S^{-1} e_{\nu+1} \frac{\nu!^2 \tau^2(x_0)}{g(x_0) n h^{1+2\nu}} + o_P\left(\frac{1}{nh^{1+2\nu}}\right)
\end{align}

The MSE is directly obtained as $\text{MSE} = (\text{Bias})^2 + \text{Var}$. Denoting $C_1(p) = \left(e_{\nu+1}^T S^{-1} c_p \frac{\nu!}{(p+1)!}\right)^2$ and $C_2(p) = e_{\nu+1}^T S^{-1} S^* S^{-1} e_{\nu+1}$, for $\nu = 0$ we have:
\begin{align}
Z_k(p,h) = C_1(p) \cdot h^{2(p+1)} \cdot [f^{(p+1)}(k)]^2 + \frac{C_2(p) \cdot \tau^2(k)}{nh \cdot g(k)}+o_P\left(\frac{1}{n h^{1+2 \nu}}\right)+o_P\left(h^{p+1}\right)
\end{align}

For our estimators, standard asymptotic expansions yield the decomposition:
\begin{align}
\hat{f}^{(p+1)}(k) &= f^{(p+1)}(k) + B_f(p,h) + \xi_f(p,h) \\
\hat{\tau}^2(k) &= \tau^2(k) + B_\tau(p,h) + \xi_\tau(p,h) \\
\hat{g}(k) &= g(k) + B_g(h) + \xi_g(h)
\end{align}
where $B$ represent the deterministic bias terms, and $\xi$ represent the stochastic errors. 

Starting from $\hat{f}^{(p+1)}$, following the assumption of pilot estimations, a $p+a(a>1)$ order polynominal represents the true $f$ function. Leveraging theorem \ref{theorem:fan}, we see that
\begin{enumerate}
    \item $B_f(p,h)=O_P(h^a)$, when $a$ is even.
    \item $B_f(p,h)=O_P(h^{a+1})$, when $a$ is odd.
    \item $\xi_f(p, h)=O_P\left(\frac{1}{\sqrt{n h^{2 p+3}}}\right)$
\end{enumerate}
WLOG, we assume that $a$ is even, and the case for odd $a$ can be analyzed similarily.

Next, we analyze the estimation for design function $g$. By assumption, the estimation of design function shares the same asymptotic behavior of kernel density estimator $\hat{g}(x)=\frac{1}{\sum_{i=1}^n V\left(K_i\right) } \sum_{i=1}^n \frac{V\left(K_i\right)}{h}K\left(\frac{x-X_i}{h}\right)$ with a second order regular kernel. Here, $\sum_{i=1}^n V\left(K_i\right)$ refers to the total volume of $n$ different strikes. Clearly, one can see $\sum_{i=1}^n V\left(K_i\right)\geq n$, and for simplicity, we denote $N_V=\sum_{i=1}^n V\left(K_i\right)$.

Leveraging kernel estimation theorem, one can see its variance has asymptotic form: $\operatorname{Var}\{\hat{g}(x)\}=\frac{g(x) \int K^2(u) d u}{N_V h}+o\left(\frac{1}{N_V h}\right)$. On this basis, one can write
$$\hat{g}(k)=g(k)+\underbrace{\frac{h^2}{2} \mu_2(K) g^{\prime \prime}(k)+o\left(h^2\right)}_{B_g(h)}+\underbrace{O_p\left(\frac{1}{\sqrt{N_V h}}\right)}_{\xi_g(h)}$$

where $\mu_2(K)=\int u^2 K(u) d u$ is the second moment of the kernel function
and the bias term is specifically $O\left(h^2\right)$.

In brief, we have the following(under assumption a even adn p odd):
\begin{enumerate}
    \item $B_f(p, h)=O_P\left(h^a\right)$
    \item $B_g(h)=O(h^2)$
    \item $\xi_g(h)=O_p\left(\frac{1}{\sqrt{N_V h}}\right), N_V \geq n.$
    \item $\xi_f(p, h)=O_P\left(\frac{1}{\sqrt{n h^{2 p+3}}}\right)$
    \item $\left|\hat{\tau}(K)^2-\tau(K)^2\right|=O_p\left(\frac{1}{\sqrt{n h}}\right)+O_p\left(h^{p+1}\right)+O\left(h^2\right)+O\left(\frac{1}{n h}\right)$
\end{enumerate}

Substituting these into the MSE formula yields our practical MSE function:
\begin{align}
\hat{Z}_k(p,h) = C_1(p) \cdot h^{2(p+1)} \cdot [\hat{f}^{(p+1)}(k)]^2 + \frac{C_2(p) \cdot \hat{\tau}^2(k)}{nh \cdot \hat{g}(k)} +o_P\left(\frac{1}{n h^{1+2 \nu}}\right)+o_P\left(h^{p+1}\right)
\end{align}

We start our analysis from a fixed pair $(p,h)$:

We expand $\left[\hat{f}^{(p+1)}(k)\right]^2$ using the relationship $\hat{f}^{(p+1)}(k)=f^{(p+1)}(k)+B_f(p, h)+$ $\xi_f(p, h)$ :

$$
\left[\hat{f}^{(p+1)}(k)\right]^2=\left[f^{(p+1)}(k)\right]^2+2 f^{(p+1)}(k) B_f(p, h)+2 f^{(p+1)}(k) \xi_f(p, h)+\left[B_f(p, h)\right]^2+2 B_f(p, h) \xi_f(p, h)+\left[\xi_f(p, h)\right]^2
$$

The difference in the first term of $\hat{Z}_k(p, h)$ and $Z_k(p, h)$ is:

$$
C_1(p) \cdot h^{2(p+1)} \cdot\left\{2 f^{(p+1)}(k) B_f(p, h)+2 f^{(p+1)}(k) \xi_f(p, h)+\left[B_f(p, h)\right]^2+2 B_f(p, h) \xi_f(p, h)+\left[\xi_f(p, h)\right]^2\right\}
$$

after simplification, we get:

$$
C_1(p) \cdot\left\{O_P\left(h^{2 p+2+a}\right)+O_P\left(\frac{h^{p+\frac{1}{2}}}{\sqrt{n}}\right)+O_P\left(h^{2 p+2+2 a}\right)+O_P\left(\frac{h^{a+p+\frac{1}{2}}}{\sqrt{n}}\right)+O_P\left(\frac{1}{n h}\right)\right\}.
$$

For the term $\frac{1}{\hat{g}(k)}$, denote $\delta(k)=\frac{B_g(h)+\xi_g(h)}{g(k)}$, and since $\frac{1}{\hat{g}(k)}=\frac{1}{g(k)(1+\delta(k))}=\frac{1}{g(k)} \cdot \frac{1}{1+\delta(k)}$.

For $|\delta(k)|<1$, we can use the Taylor series expansion for $\frac{1}{1+x}: \frac{1}{1+x}=$ $\sum_{i=0}^{\infty}(-1)^i x^i=1-x+x^2-x^3+\ldots$

Therefore, we see Taylor expansion $\frac{1}{\hat{g}(k)}=\frac{1}{g(k)}\left[1-\delta(k)+\frac{2 \delta(k)^2}{(1+\rho)^3}\right]$, where $\rho$ is the middle value between 0 and $\delta(k)$:

$$
\frac{1}{\hat{g}(k)}=\frac{1}{g(k)+B_g(h)+\xi_g(h)}=\frac{1}{g(k)}\left(1-\frac{B_g(h)+\xi_g(h)}{g(k)}+\frac{\left(B_g(h)+\xi_g(h)\right)^2}{g(k)^2}-\cdots\right)
$$

Given that $\delta(k)=\frac{B g(h)+\xi g(h)}{g(k)}=O\left(h^2\right)+O_P\left(\frac{1}{\sqrt{n h}}\right)$, and $(1+\rho)^3$ is bounded for small $\rho$, we have: $\frac{1}{\hat{g}(k)}=$

$$
\frac{1}{g(k)}\left[1-\left(O\left(h^2\right)+O_P\left(\frac{1}{\sqrt{n h}}\right)\right)+O\left(\left(O\left(h^2\right)+O_P\left(\frac{1}{\sqrt{n h}}\right)\right)^2\right)\right]
$$

For simplicity, I denote $\mathcal{R}_1(n,h)=O\left(\left(O\left(h^2\right)+O_P\left(\frac{1}{\sqrt{n h}}\right)\right)^2\right)]$ for higher order terms that will not influence the convergence behavior.

For $\hat{\tau}^2(k)$, we know: $
\hat{\tau}^2(k)=\tau^2(k)+O_P\left(\frac{1}{\sqrt{n h}}\right)+O_P\left(h^{p+1}\right)+O\left(h^2\right)+O\left(\frac{1}{n h}\right).
$

The difference in the second term becomes:

$$
\frac{C_2(p)}{n h} \cdot\left\{\frac{\tau^2(k)}{g(k)}\left(-\frac{B_g(h)+\xi_g(h)}{g(k)}+\mathcal{R}_1(n,h)\right)+\frac{O_P\left(\frac{1}{\sqrt{n h}}\right)+O_P\left(h^{p+1}\right)+O\left(h^2\right)+O\left(\frac{1}{n h}\right)}{g(k)}+\mathcal{R}_2(n,h)\right\}
$$

where $\mathcal{R}_2(n,h)$ is some higher order terms generated by $(-\frac{B_g(h)+\xi_g(h)}{g(k)}+\mathcal{R}_1(n, h))$ times $O_P\left(\frac{1}{\sqrt{n h}}\right)+O_P\left(h^{p+1}\right)+O\left(h^2\right)+O\left(\frac{1}{n h}\right)$

After simplifying:

$$
C_2(p) \cdot\left\{O\left(\frac{h}{n}\right)+O_P\left(\frac{1}{nh \sqrt{n h}}\right)+O_P\left(\frac{h^p}{n}\right)+O\left(\frac{1}{n^2 h^2}\right)+\frac{\mathcal{R}_2(n,h)}{nh}\right\}
$$

Therefore, the overall difference $\hat{Z}_k(p, h)-Z_k(p, h)$ is:

\begin{align*}
\hat{Z}_k(p,h)-Z_k(p,h)
&= O_P\bigl(h^{2p+2+a}\bigr)
  + O_P\!\left(\frac{h^{p+\frac12}}{\sqrt{n}}\right)
  + O_P\bigl(h^{2p+2+2a}\bigr) \\
&\quad+ O_P\!\left(\frac{h^{a+p+\frac12}}{\sqrt{n}}\right)
  + O_P\!\left(\frac{1}{nh}\right)
  + O\!\left(\frac{h}{n}\right) \\
&\quad+ O_P\!\left(\frac{1}{nh\sqrt{nh}}\right)
  + O_P\!\left(\frac{h^p}{n}\right)
  + O\!\left(\frac{1}{n^2h^2}\right)+\frac{\mathcal{R}_2(n,h)}{nh}.
\end{align*}

Under the conditions $n \rightarrow \infty, n h \rightarrow \infty$, and $h \rightarrow 0$, since $a\geq 2$, $p\geq 1$, and by Talor expansion $\lim _{\substack{n \rightarrow \infty \\ n h \rightarrow \infty \\ h \rightarrow 0}}\frac{\mathcal{R}_2(n,h)}{nh}\rightarrow 0$, one can see every term in this expression converges to zero. Therefore:

$$
\lim _{\substack{n \rightarrow \infty \\ n h \rightarrow \infty \\ h \rightarrow 0}}\left[\hat{Z}_k(p, h)-Z_k(p, h)\right]=0
$$

Next, we establish the convergence of $\hat{Z}_k^*$ to $Z_k^*$.

Let's denote:

$\hat{P h}^*=\left\{(p, h) \mid \hat{Z}_k(p, h)=\hat{Z}_k^*\right\}$ as the set of parameter pairs achieving the minimum estimated ACMSE

$P h^*=\left\{(p, h) \mid Z_k(p, h)=Z_k^*\right\}$ as the set of parameter pairs achieving the minimum true MSE

We proceeds the proof by contradiction.

Suppose that $\lim _{\substack{n \rightarrow \infty \\ nh \rightarrow \infty \\ h \rightarrow 0}} \hat{Z}_k^* \neq Z_k^*$.
This implies that there exists some $\epsilon>0$ and a subsequence of $n$, denoted as $\left\{n_j\right\}_{j=1}^{\infty}$, such that:

$$
\left|\hat{Z}_{k, n_j}^*-Z_k^*\right|>\epsilon
$$

for all $j$, where $\hat{Z}_{k, n_j}^*$ explicitly indicates the dependence on sample size $n_j$.

For each $n_j$, let $\left(p_{n_j}^*, h_{n_j}^*\right)$ be a parameter pair that achieves the optimal estimated ACMSE, i.e., $\left(p_{n_j}^*, h_{n_j}^*\right) \in \hat{Ph}^*_{n_j}$
By the definition of minimum, for any $(p, h) \in P h$ and in particular for any $(p, h) \in P h^*$:

$$
\hat{Z}_{k, n_j}\left(p_{n_j}^*, h_{n_j}^*\right) \leq \hat{Z}_{k, n_j}(p, h)
$$

Similarly, for any $(p, h) \in P h$ and specifically for $\left(p_{n_j}^*, h_{n_j}^*\right)$ :

$$
Z_k\left(p^*, h^*\right) \leq Z_k\left(p_{n_j}^*, h_{n_j}^*\right)
$$

where $\left(p^*, h^*\right) \in P h^*$ is any parameter pair achieving the optimal true MSE.

From our point-wise convergence results from first part, we know that for any fixed pair $(p, h)$ :

$$
\lim _{\substack{n \rightarrow \infty \\ n h \rightarrow \infty \\ h \rightarrow 0}} \hat{Z}_{k, n}(p, h)=Z_k(p, h)
$$

In particular, for $\left(p^*, h^*\right) \in P h^*$ :

$$
\lim _{j \rightarrow \infty} \hat{Z}_{k, n_j}\left(p^*, h^*\right)=Z_k\left(p^*, h^*\right)=Z_k^*
$$

Therefore, for any $\delta>0$, there exists $J_1$ such that for all $j>J_1$ :

$$
\left|\hat{Z}_{k, n_j}\left(p^*, h^*\right)-Z_k^*\right|<\delta
$$

Now, since $h_{n_j}^*=C_{\nu, p_{n_j}^*}(\kappa)\left[\frac{\hat{\tau}^2(k)}{\left\{\hat{f}^{\left(p_{n_j}^*+1\right)}(k)\right\}^2 \hat{g}(k)}\right]^{\frac{1}{2 p_{n_j}^*+3}} n_j^{-\frac{1}{2 p_{n_j}^*+3}}$, we know that as $n_j \rightarrow$ $\infty, h_{n_j}^* \rightarrow 0$

Furthermore, due to the constraint $nh \rightarrow \infty$, we have $n_j h_{n_j}^* \rightarrow \infty$ as $n_j \rightarrow \infty$.

This implies that the sequence of parameter pairs $\left(p_{n_j}^*, h_{n_j}^*\right)$ satisfies the conditions required for the pointwise convergence. Thus, there exists $J_2$ such that for all $j>J_2$ :

$$
\left|Z_k\left(p_{n_j}^*, h_{n_j}^*\right)-\hat{Z}_{k, n_j}\left(p_{n_j}^*, h_{n_j}^*\right)\right|<\delta
$$

Now, for $j>\max \left(J_1, J_2\right)$, we have:

$$
Z_k^* \leq Z_k\left(p_{n_j}^*, h_{n_j}^*\right)<\hat{Z}_{k, n_j}\left(p_{n_j}^*, h_{n_j}^*\right)+\delta=\hat{Z}_{k, n_j}^*+\delta
$$

$$
\hat{Z}_{k, n_j}^*=\hat{Z}_{k, n_j}\left(p_{n_j}^*, h_{n_j}^*\right) \leq \hat{Z}_{k, n_j}\left(p^*, h^*\right)<Z_k^*+\delta
$$

Therefore:

$$
Z_k^*<\hat{Z}_{k, n_j}^*+\delta \text { and } \hat{Z}_{k, n_j}^*<Z_k^*+\delta
$$

This implies: $
\left|\hat{Z}_{k, n_j}^*-Z_k^*\right|<\delta
$. Choosing $\delta=\epsilon / 2$, we arrive at a contradiction to our initial assumption that $\mid \hat{Z}_{k, n_j}^*-$ $Z_k^* \mid>\epsilon$ for all $j$.

Therefore, one can obtain
\begin{align}
    \lim _{\substack{n \rightarrow \infty \\ n h \rightarrow \infty \\ h \rightarrow 0}} \hat{Z}_k^*=Z_k^*
\end{align}

This completes the proof.
\end{proof}

\section*{Disclaimer}
This paper was prepared for informational purposes in part by the Quantitative Research Group of JPMorgan Chase \& Co. This paper is not a product of the Research Department of JPMorgan Chase \& Co. or its affiliates. Neither JPMorgan Chase \& Co. nor any of its affiliates makes any explicit or implied representation or warranty and none of them accept any liability in connection with this paper, including, without limitation, with respect to the completeness, accuracy, or reliability of the information contained herein and the potential legal, compliance, tax, or accounting effects thereof. This document is not intended as investment research or investment advice, or as a recommendation, offer, or solicitation for the purchase or sale of any security, financial instrument, financial product or service, or to be used in any way for evaluating the merits of participating in any transaction.

\bibliographystyle{unsrtnat}

\end{document}